%
%
%
%
%
%
\RequirePackage{fix-cm}

\documentclass{amsart}
\usepackage{amsmath,amssymb,amsfonts,amsthm}
\usepackage{amsaddr}
\theoremstyle{plain}
\newtheorem{theorem}{Theorem}
\newtheorem{corollary}[theorem]{Corollary}
\newtheorem{lemma}[theorem]{Lemma}
\newtheorem{proposition}[theorem]{Proposition}
\theoremstyle{definition}
\newtheorem{definition}[theorem]{Definition}

\theoremstyle{remark}
\newtheorem{remark}[theorem]{Remark}

%
%
%
%
\usepackage{latexsym}

\usepackage{amsmath,xspace}
\usepackage{etex}
\usepackage{centernot}
\usepackage{stmaryrd}
\usepackage{wrapfig}
\usepackage{amsfonts}
\usepackage{mathtools}
\usepackage{color}
\definecolor{lgray}{gray}{0.9}
\usepackage{fixltx2e}
\usepackage{fancybox}
\usepackage{ifthen}
\usepackage{pgf}
\usepackage{tikz}
\usetikzlibrary{arrows,automata}
\usepackage{graphics} 
\usepackage{subfig}
\usepackage{proof}
\usepackage{hyperref}
\usepackage{soul}
\usepackage{cancel}
\usepackage{textcomp}
\usepackage{calrsfs}
\usepackage{paralist}
\usepackage{enumitem}
\usepackage{changebar}
\usepackage{todonotes}
\usepackage[super]{nth}

\usepackage{fontawesome}


\DeclareMathAlphabet{\pazocal}{OMS}{zplmf}{m}{n}
\newcommand{\mcal}[1]{\pazocal{#1}}
\newcommand{\bcal}[1]{\mathcal{#1}}

\newcommand{\rulename}[1]{\ensuremath{\mbox{\textsc{#1}}}}

\newcommand{\qt}[1]{``{#1}"}

\newcommand{\rTo}[1]{\xrightarrow{#1}}

\newcommand{\modif}[1]{{\color{red}#1}}

\newcommand{\true}{{\sf true}}
\newcommand{\false}{{\sf false}}



\newlength{\arrow}
\settowidth{\arrow}{\scriptsize$1000$}

\newcounter{sqindex}

 \newcommand{\rom}[1]{ \textup{(\lowercase\expandafter{\romannumeral#1})}}
 
\usepackage[ruled]{algorithm2e} 

\SetAlFnt{\small}
\SetAlCapFnt{\small}
\SetAlCapNameFnt{\small}
\SetAlCapHSkip{0pt}
\IncMargin{-\parindent}



\newcommand \Until      {\mathbin{\mcal{U}\kern-.1em}}
\newcommand \Release     {\mathbin{\mcal{R}\kern-.1em}}
\newcommand \WaitFor    {\mathbin{\mcal{W}}}
\newcommand \Since      {\mathbin{\mcal{S}\kern-.08em}}

\newcommand \g    {\mathsf{{G}\kern.08em}}
\newcommand \f    {\mathsf{{F}\kern.08em}}
\newcommand \UntilHat   {\mathbin{\LTLhat{\mcal{U}}\kern-.1em}}

\newcommand \SinceHat   {\mathbin{\LTLhat{\mcal{S}}\kern-.08em}}


\newcommand \Impl       {\mathbin{\rightarrow}}
\newcommand \Iff        {\mathbin{\leftrightarrow}}


\renewcommand \phi      {\varphi}

\newcommand \ltl        {\textsc{ltl}\xspace}
\newcommand \ctl				{\textsc{ctl}\xspace}
\newcommand \ltal       {\textsc{ltol}\xspace}
\newcommand \pdl       {\textsc{pdl}\xspace}
\newcommand \pspace       {\textsc{pspace}\xspace}
\newcommand \expspace       {\textsc{expspace}\xspace}
\newcommand \nlogspace       {\textsc{nlogspace}\xspace}
\newcommand \nlog       {\textsc{nlog}\xspace}
\newcommand \B        	{\mcal{B}\xspace}
\newcommand \bb        	{{\Bbb B}\xspace}



\newcommand{\set}[1]{\{{#1}\}}
\newcommand{\conf}[1]{\langle{#1}\rangle}

\newcommand{\band}[3]{\bigwedge\limits_{#1}^{#2}{#3}}
\newcommand{\bor}[3]{\bigvee\limits_{#1}^{#2}{#3}}

\newcommand{\bpcup}[2]{\biguplus\limits_{#1}{#2}}
\newcommand{\func}[3]{{#1}:{#2}\rightarrow{#3}}
\newcommand{\pfunc}[3]{{#1}^{#2}_{#3}}
\newcommand{\such}{{\bf\mathsf{s.t.}}\ }
\newcommand{\toall}{\star}
\newcommand{\cstate}[2]{{s}^{#1}_{#2}}
\newcommand{\sstate}[1]{{s}_{#1}}
\newcommand{\vardom}[1]{\mathsf{Dom}({#1})}
\newcommand{\vdom}{\vardom{v}}
\newcommand{\compdom}[1]{\prod_{v\in V_{#1}}\vdom}

\newcommand{\sdom}{\prod_{i}\compdom{i}}

\newcommand{\tuple}[1]{\left(#1\right)}
\newcommand{\anglebr}[1]{\left [#1\right] }
\def\<#1>{\mathinner{\langle#1\rangle}}
\newcommand{\bnxt}[1]{\Biggl<{#1}\Biggr>}
\newcommand{\pred}{\pi}
\newcommand{\sysvar}{\mathcal{V}}

\newcommand{\chan}{ch}
\newcommand{\schan}{\rulename{ch}} 

\newcommand{\id}{i}


\newcommand{\cv}{cv}
\newcommand{\scv}{\rulename{cv}} 

\newcommand{\dat}{d}
\newcommand{\datfun}{{\bf d}}
\newcommand{\sdat}{\rulename{d}} 

\newcommand{\coma}{,\ }

\newcommand{\trans}{\mcal{T}}

\newcommand{\exis}[1]{\bullet^{\exists}{#1}}
\newcommand{\all}[1]{\bullet^{\forall}{#1}}
\newcommand{\nxt}[1]{\langle{#1}\rangle}
\newcommand{\alws}[1]{[{#1}]}

\newcommand{\tracevar}[1]{\rho_{#1}}
\newcommand{\transmain}{\delta_{\phi}}

\newcommand{\transaux}{f}
\newcommand{\size}[1]{|{#1}|}
\newcommand{\length}[1]{||{#1}||}
\newcommand{\Exp}[1]{2^{#1}}
\newcommand{\dexp}[1]{2^{2^{#1}}}
\newcommand{\bigo}[1]{\bcal{O}({#1})}
\newcommand{\lang}[1]{L_{\omega}({#1})}



\newcommand{\msf}[1]{\mathsf{#1}}


\newcommand{\rcp}{{{\rulename{ReCiPe}}}\xspace}

\newcommand{\comment}[1]{}

\newcommand{\keep}{\mbox{\sc keep}}
\newcommand{\agent}[1]{\rulename{{#1}}}
\newcommand{\lineagent}{\agent{Line}}

\newcommand{\typecvar}{{\scriptstyle@\msf{type}}}
\newcommand{\assigncvar}{{\scriptstyle@\msf{asgn}}}
\newcommand{\readycvar}{{\scriptstyle@\msf{rdy}}}
\newcommand{\lnkcvar}{{\scriptstyle@\msf{lnk}}}


\newcommand{\msgdvar}{\rulename{msg}}
\newcommand{\lnkdvar}{\rulename{lnk}}
\newcommand{\nodvar}{\rulename{no}}

\newcommand{\stlvar}{\msf{st}}
\newcommand{\lnklvar}{\msf{lnk}}
\newcommand{\prdlvar}{\msf{prd}}
\newcommand{\stagelvar}{\msf{stage}}
\newcommand{\stagebvar}{\msf{step}}
\newcommand{\typelvar}{\msf{ltype}}
\newcommand{\assignlvar}{\msf{lasgn}}
\newcommand{\readylvar}{\msf{lrdy}}
\newcommand{\typebvar}{\msf{btype}}
\newcommand{\assignbvar}{\msf{basgn}}
\newcommand{\readybvar}{\msf{brdy}}
\newcommand{\nolvar}{\msf{no}}
\newcommand{\val}[1]{\rulename{{\scriptsize\tt{#1}}}}
\newcommand{\taval}{\val{t1}}
\newcommand{\tbval}{\val{t2}}
\newcommand{\tcval}{\val{t3}}
\newcommand{\teamval}{\val{team}}
\newcommand{\assembleval}{\val{asmbl}}
\newcommand{\formval}{\val{form}}
\newcommand{\localval}{\val{local}}
\newcommand{\pendval}{\val{pnd}}
\newcommand{\enval}{\val{end}}

\newcommand{\startval}{\val{strt}}

\newcommand{\listen}{\rulename{ls}}
\newcommand\smallo{
  \mathchoice
    {{\scriptscriptstyle\mcal{O}}}
    {{\scriptscriptstyle\mcal{O}}}
    {{\scriptscriptstyle\mcal{O}}}
    {\scalebox{.3}{$\scriptscriptstyle\mcal{O}$}}
  }

\newcommand{\obs}[1]{{\msf{obs}({#1})}\xspace}
\usepackage{geometry}
\geometry{left=2.1cm,right=2.1cm,top=2cm,bottom=2cm}


%
%
%
\begin{document}

\title{Modelling and Verification of Reconfigurable Multi-Agent
  Systems}
\thanks{This research is funded by the ERC consolidator
    grant D-SynMA under the European Union's Horizon 2020 research and
   innovation programme (grant agreement No 772459); and by the Swedish research council grants:  SynTM (No. 2020-03401) and VR project (No. 2020-04963).The authors
   would like to thank Giuseppe Perelli who was a co-author on an
   earlier version of this article. }


\author{Yehia Abd Alrahman \and Nir Piterman 
}

\address{University of Gothenburg}


\maketitle

\begin{abstract}
	We propose a formalism to model and reason about reconfigurable multi-agent systems.
	In our formalism, agents interact and communicate in different modes
	so that they can pursue joint tasks; agents may dynamically synchronize, exchange data, adapt their behaviour, and reconfigure their communication interfaces.
	Inspired by existing multi-robot systems, we represent a 
	system as a set of agents (each with local state),
executing independently and
only influence each other by means of message exchange. 
Agents are able to sense their local states and
partially their surroundings.
	We extend \ltl to be able to reason explicitly about the intentions of agents in the interaction and their communication protocols. 
	We also study the complexity of satisfiability and model-checking of this extension.
\end{abstract}

\section{Introduction}\label{sec:intro}

In recent years formal modelling of multi-agent systems (MAS) and their
analysis through model checking has received much attention~\cite{Woo02,LQR17}.
Several mathematical formalisms have been suggested to represent the
behaviours of such systems and to reason about the strategies that
agents exhibit \cite{LQR17,AHK02}. 
For instance, 
modelling languages, such as RM \cite{AH99b,GHW17} and ISPL \cite{LQR17}, are used to
enable efficient analysis by representing these systems through the usage of
BDDs.
Temporal logics have been also extended and adapted (e.g., with Knowledge support
\cite{FHMV95} and epistemic operators \cite{GochetG06}) specifically to support
multi-agent modelling \cite{GL13}.
Similarly, logics that support reasoning about the intentions and
strategic abilities of such agents have been used and extended
\cite{CHP10,MMPV14}.

These works are heavily influenced by the formalisms used for verification
 (e.g., Reactive Modules~\cite{AH99b,AG04},
concurrent game structures~\cite{AHK02}, and interpreted
systems~\cite{LQR17}).
They 
rely on shared memory to implicitly model
interactions.
It is generally agreed that explicit message passing is more
appropriate to model interactions among distributed agents because of its
scalability~\cite{cougaar,mno}.
However, the mentioned 
formalisms trade the advantages of message passing for abstraction,
and abstract message exchange by controlling the visibility of state
variables of the different agents.
Based on an early result, where a compilation from shared memory to message
passing was provided \cite{stop}, it was believed that a shared memory model 
is a higher level abstraction of distributed systems. 
However, this result holds only in specific cases and
under assumptions that practically proved to be unrealistic, see~\cite{mm}.
Furthermore, the compositionality of shared memory approaches is 
limited and the supported interaction interfaces are in general not
very flexible~\cite{BBDM19}. 
Alternatively, message passing formalisms~\cite{pi1} are very
compositional and support flexible interaction interfaces.
However, unlike shared memory formalisms, they do not accurately
support awareness capabilities, where an agent may instantaneously
inspect its local state and adapt its behaviour while interacting.
The reason is that 
  they model agents as
  mathematical expressions over interaction operators. Thus the
  state of an agent is implicit in the structure of the expression.

To combine the benefits of both approaches recent developments~\cite{APUS,mm} 
suggest adopting hybrids, that accurately represent actual distributed systems, e.g., ~\cite{rdma,MathewsVCOBD15}. 
We propose a hybrid formalism to model and reason about
distributed multi-agent systems.
A system is represented as a set of agents (each with local state),
executing concurrently and
only interacting by message exchange. 
Inspired by multi-robot systems, 
e.g., Kilobot~\cite{kilobot} and Swarmanoid~\cite{swarm},
agents are additionally able to sense their local states and
partially their surroundings.
Interaction is driven by message passing following 
the interleaving semantics of~\cite{pi1};
in that only one agent may send a message at a time while other agents
may react to it.
To support meaningful interaction among agents \cite{WK16},
messages are not mere synchronisations, but carry data that might
be used to influence the behaviour of receivers. 

Our message exchange is adaptable and reconfigurable.
Thus, agents determine how to communicate and with whom.
Agents interact on links that change their utility
based on the needs of interaction at a given stage.
Unlike existing message-passing mechanisms, which use static 
notions of network connectivity to establish interactions, our mechanisms 
allow agents to specify receivers 
using logical formulas.
These formulas are interpreted over the evolving local
states of the different agents and thus provide a natural way to establish 
reconfigurable interaction interfaces (for example, limited range
communication \cite{MathewsVCOBD15}, messages destined for particular agents \cite{info19}, etc.).

The advantages of our formalism are threefold. We provide more realistic
 models that are close to their distributed implementations, and how actual 
 distributed MAS are developed, e.g.,~\cite{swarmdesign}. We provide a 
 modelling convenience for high level interaction features of MAS
  (e.g., coalition formation, collaboration, self-organisation, etc),
  that otherwise have to be hard-coded tediously in existing formalisms.  
Furthermore, we decouple the individual behaviour of agents from their interaction
protocols to facilitate reasoning about either one separately.
%

In addition, we extend \ltl to characterise messages and their
targets.
This way we allow
reasoning about the intentions of agents in communication.
Our logic can refer directly to the interaction protocols.
Thus the interpretation of a formula 
incorporates information
about the causes of assignments to variables and the flow of the interaction
protocol.
We also study the complexity of satisfiability and
Model-checking for our logic.

This article is an extended and revised version of the conference
paper presented in~\cite{rcp}. The major extensions in this article
consist of: \rom{1} a compositional and enumerative semantic
definition of the proposed formalism, that coincides with the early
symbolic one. The new definition facilitates reasoning about the
individual behaviour of agents and their compositions with others. For
this purpose, we defined a parallel composition operator with
reconfigurable broadcast and multicast semantics. Thus, the definition
is not only intuitive, but can also be used to reason about models
under  \emph{open-world assumption}; \rom{2} a major improvement on
our early results~\cite{rcp} regarding satisfiability and model
checking, that were computed in an \expspace upper bound. Here, we
provide a novel automata construction that permits \pspace analysis,
matching the lower bound. Thus, this part is majorly rewritten and
improved. Moreover, we enhance the presentation of the different parts
of the article and provide the proofs of all results.  

The structure of this article is as follows: In Sect.~\ref{sec:overview}, we informally 
present our formalism and motivate our design choices. In
Sect.~\ref{sec:bck}, we give the necessary background and in Sect.~\ref{sec:cts} we present the compositional semantic definition.
In Sect.~\ref{sec:model}   
 we introduce the formalism both in terms of enumerative and symbolic semantics, and we prove that they coincide. In Sect.~\ref{sec:exp}, we present a non-trivial case study to show the distinctive features of our formalism. 
In Sect.~\ref{sec:logic} we discuss our extension to LTL and provide efficient decision procedures to check both satisfiability and model checking in polynomial space. Finally, 
in Sect.~\ref{sec:conc} we discuss our concluding remarks.  

\section{An informal overview}\label{sec:overview}
We use a collaborative-robot scenario to informally illustrate the
 distinctive features of our formalism and we later formalise it in Section~\ref{sec:exp}.
The scenario is based on Reconfigurable Manufacturing Systems (RMS)~\cite{rms1,rms2},
where assembly product lines coordinate autonomously with different
types of robots to produce products.

In our formalism, each agent has a local state
consisting of a set of variables whose values may change due to
either contextual conditions or side-effects of
interaction. The external behaviour of an agent is only represented by the messages it
exposes to other agents while the local one is represented by changes to its
state variables. These variables are initialised by initial conditions
and updated by send- and receive- transition relations.
In our example, a product-line agent initiates different production
procedures based on the assignment to its product variable $\qt{\msf{prd}}$,
which is set by the operator, while it controls the progress of its status variable
$\qt{\msf{st}}$ based on interactions with other robots.
Furthermore, a product-line agent is characterised:
(1) externally only by the recruitment and assembly messages it  
sends to other robots and
(2) internally by a sequence of assignments to its local variables.

Before we explain the send- and receive- transition relations and show
the dynamic reconfiguration of communication interfaces we need to
introduce a few additional features. 
We assume that there is an agreed set of \emph{channels/links} $\schan$ that includes a unique
broadcast channel $\star$.
Broadcasts have non-blocking send and blocking receive while
multicasts have blocking send and receive.
In a broadcast, receivers (if exist) may anonymously receive a
message when they are interested in its values and when they satisfy
the send guard.
Otherwise, the agent does not participate in the interaction either
because they cannot (do not satisfy the guard) or because they are not
interested (make an idle transition).
In multicast, all agents connected to the multicast channel must
participate to enable the interaction.
For instance, recruitment messages are broadcast because a line agent
assumes that there exist enough robots to join the team while assembly messages
are multicast because they require that the whole connected team is ready to
assemble the product.

Agents dynamically decide (based on local state) whether they can
use (i.e., connect-to) multicast channels while the broadcast channel is always available. 
Thus,
initially, agents may not be connected to any channel, except for the
broadcast one $\msf{\star}$.
These channels may be learned using broadcast messages
and thus a structured communication interface can be built at run-time, starting
from a (possibly) flat one. 

Agents use messages to send selected data and specify
how and to whom.
Namely, the values in a message specify what is exposed to the others;
the channel specifies how to coordinate with others; and a send
guard specifies the target.
Accordingly, each message carries an assignment to a set of agreed
\emph{data variables} $\sdat$, i.e., the exposed data;  a
channel $\msf{\chan}$; and a send guard $\pfunc{g}{s}{}$.
In order to write meaningful send guards, we assume a set of
\emph{common variables} $\scv$ that each agent stores locally and
assigns its own information (e.g., the type of agent, its location,
its readiness, etc.). Send guards are expressed in terms of conditions
on these variables and are evaluated per agent based on their assigned
local values. 
Send guards are parametric to the local state of the sender
and specify what assignments to common variables a potential receiver
must have.
For example, an agent may send a dedicated link name to a selected set
of agents by assigning a data variable in the communicated message and
this way a coalition can be built incrementally at run-time.
In our RMS, the send guard of the recruitment message specifies
the types of the targeted robots while the data values expose the
number of required robots per type and a dedicated multicast link to
be used to coordinate the production. 

Targeted agents may use incoming messages to update their
states, reconfigure their interfaces, and/or adapt their behaviour.
In order to do so, however, agents are equipped with receive guards $\pfunc{g}{r}{}$; 
that might be parametrised to local variables and channels, and thus
dynamically determine if an agent is connected to a given channel. 
The interaction among different agents is then derived based on send- and receive-
transition relations.
These relations are used to decide when to send/receive a message and what are the
side-effects of interaction. 
Technically, every agent has a send and a receive transition relation.
Both relations are parameterised by the state variables of the agent,
the data variables transmitted on the message, and by the channel name.
A sent message is interpreted as a joint transition between the send
transition relation of the sender and the receive transition relations of all
the receivers.
For instance, a robot's receive guard specifies that other than
the broadcast link it is also connected to a multicast link that matches the current value of
its local variable $\qt{\msf{lnk}}$. The robot then uses its receive transition relation
to react to a recruitment message, for instance, by assigning to its $\qt{\msf{lnk}}$ the
link's data value from the message.

Furthermore, in order to send a message the following has to happen.
The send transition relation of the sender must hold on: a given state of the sender,
a channel name, and an assignment to data variables.
If the message is broadcast,
all agents whose assignments to common variables satisfy the
send guard jointly receive the message, the others discard it.
If the message is multicast, all connected agents must satisfy the
send guard to enable the transmission (as otherwise they block the
message).
In both cases, 
sender and receivers execute their send- and receive-transition
relations jointly.
The local side-effect of the message takes into account the origin
local state, the channel, and the data.
In our example, a (broadcast) recruitment 
message is received by all robots that are not assigned to other teams (assigned ones  
discard it) and as a side effect they connect to a multicast
channel that is specified in the message.
A (multicast) assembly message can only be sent when the whole 
recruited team is ready to receive (otherwise the message is blocked)
and as a side effect the team proceeds to the next production stage. 

Clearly, the dynamicity of our formalism stems from the fact that we base
interactions directly over the evolving states of the different agents 
rather than over static notions of network connectivity as of existing approaches. 

\section{Transition Systems and Finite Automata}\label{sec:bck}

We unify notations and give the necessary background.
We introduce doubly-labeled transition systems and discrete systems
and show how to translate the former to the latter. 
We further introduce nondeterministic and alternating B\"uchi word automata.

\subsection{Transition Systems and Discrete Systems}
A \emph{Doubly-Labeled Transition System} (TS) is ${\mathcal{T}}=\langle \Sigma,\Upsilon,S,S_0,R,L\rangle$, where $\Sigma$ is a \emph{state
  alphabet}, $\Upsilon$ is a \emph{transition alphabet}, $S$ is 
a set of states, $S_0\subseteq S$ is a set of initial states, $R\subseteq
S\times \Upsilon \times S$ is a transition relation, and $L:S\rightarrow
\Sigma$ is a labeling function.

A \emph{path} of a transition system $\mathcal{T}$ is
a maximal sequence of states and transition labels
$\sigma=s_0,a_0,s_1,a_1,\ldots$ such that $s_0\in S_0$ and 
for every $j\geq 0$ we have $(s_i,a_i,s_{i+1})\in R$. 
We assume that for every state $s\in S$ there are $a\in \Upsilon$ and
$s'\in S$ such that $(s,a,s')\in R$. 
Thus, a sequence $\sigma$ is maximal if it is infinite.
If $|\Upsilon|=1$ then $\mathcal{T}$ is a \emph{state-labeled transition
  system} and if $|\Sigma|=1$ then $\mathcal{T}$ is a
\emph{transition-labeled transition system}. 

We introduce \emph{Discrete Systems} (DS) that represent state-labeled systems symbolically.
A DS is $\mathcal{D} = \langle \mathcal{V}, \theta, \rho \rangle$, where the components of
$\mathcal{D}$ are as follows:
\begin{itemize}[label={$\bullet$}, topsep=0pt, itemsep=0pt, leftmargin=10pt]
\item
  $\mathcal{V} = \set{v_1,...,v_n}$: A finite set of typed variables.  
  Variables range over discrete domains, e.g., Boolean or Integer.
A \emph{state} $s$ is an interpretation of  $\mathcal{V}$,
  i.e., if $D_v$ is the domain of $v$, then $s$ is in
  $\prod_{v_i\in \mathcal{V}} D_{v_i}$.

  We assume some underlying first-order language over $\mathcal{V}$ that
  includes (i) \emph{expressions} constructed from the variables in
  $\mathcal{V}$, (ii) \emph{atomic formulas} that are either Boolean variables
  or the application of different predicates to expressions, and
  (iii) \emph{assertions} that are first-order formulas constructed from
  atomic formulas using Boolean connectives or quantification of
  variables. 
  Assertions, also sometimes called \emph{state formulas}, characterize
  states through restriction of possible variable values in them.
\item
  $\theta$ : This is an
  assertion over $\mathcal{V}$ characterising all the initial states of the
  DS.  
  A state is called \emph{initial} if it satisfies $\theta$.
\item
  $\rho$ : A \emph{transition relation}. 
  This is an assertion $\rho(\mathcal{V}\cup\mathcal{V}')$, where $\mathcal{V}'$ is a primed
  copy of variables in $\mathcal{V}$. The transition relation $\rho$
  relates a state $s\in\Sigma$ to its \emph{$\mathcal{D}$-successors}
  $s'\in\Sigma$, i.e., 
  $(s,s')\models \rho$, where $s$ is an interpretation to
  variables in $\mathcal{V}$ and $s'$ is for variables in $\mathcal{V}'$.
\end{itemize}

The DS ${\mathcal{D}}$ gives rise to a state transition system $\mathcal{T}_{\mathcal{D}}=\langle\Sigma,\{1\},T,T_0,R\rangle$, where $\Sigma$ and $T$ are the
set of states of $\mathcal{T}_\mathcal{D}$, $T_0$ is the set of
initial states, and $R$ is the set of triplets $(s,1,s')$ such that
$(s,s')\models \rho$. Clearly, the paths of ${\mathcal{T}}_\mathcal{D}$ are exactly the
 paths of $\mathcal{D}$, but the size of $\mathcal{T}_\mathcal{D}$ is 
exponentially larger than the description of $\mathcal{D}$.

A common way to translate a DLTS into a DS, which we adapt
and extend below, would be to include 
additional variables that encode the transition alphabet.
Given such a set of variables $\mathcal{V}_\Upsilon$, an assertion
$\rho(\mathcal{V} \cup \mathcal{V}_\Upsilon \cup \mathcal{V}')$ characterises
the triplets $(s,\upsilon,s')$ such that $(s,\upsilon,s') \models
\rho$, where $s$ supplies the interpretation to $\mathcal{V}$, $\upsilon$
to $\mathcal{V}_\Upsilon$ and $s'$ to $\mathcal{V}'$.

%

\subsection{Finite Automata on Infinite Words}

We use the automata-theoretic approach to linear temporal logic
\cite{VW94}.
Thus, we translate temporal logic formulas to automata. We give here
the necessary background. 



 
For an alphabet $\Sigma$, the set $\Sigma^{\omega}$ is the set of infinite
sequences of elements from $\Sigma$. 
Given an alphabet $\Sigma$ and a set $D$ of directions, a 
\emph{$\Sigma$-labeled $D$-tree} is a pair $\tuple{T,\tau}$, where 
$T \subseteq D^*$ is a tree over $D$ and $\tau:T \rightarrow \Sigma$ maps
each node of $T$ to a letter in $\Sigma$. 
A {\em path} $\pi$ of a tree $T$ is a set $\pi\subseteq T$ such that
$\epsilon\in\pi$ and for every $x\in\pi$ either $x$ is a leaf in $T$
or there exists a unique $\gamma\in D$ such that $x\cdot \gamma\in \pi$.
For $\pi=\gamma_1\cdot \gamma_2\cdots$, we write $\tau(\pi)$ for
$\tau(\epsilon)\cdot \tau(\gamma_1)\cdot \tau(\gamma_1\gamma_2) \cdots$. 

For a finite set $X$, let ${\B}^+(X)$ be the set of positive
Boolean formulas over $X$ (i.e., Boolean formulas built from elements
in $X$ using $\wedge$ and $\vee$), where we also allow the formulas
$\true$ and $\false$.
For a set $Y\subseteq X$ and a formula $\theta\in{\B}^+(X)$, we
say that $Y$ {\em satisfies} $\theta$ iff assigning $\true$ to
elements in $Y$ and assigning $\false$ to elements in $X\setminus Y$
makes $\theta$ true.

\begin{definition}[Nondeterministic B\"uchi Word Automata (NBW)]
A NBW is 
$N=\langle \Sigma$, $Q$, $Q_{in}$, $\delta$, $F\rangle$, where
$\Sigma$ is an input alphabet, $Q$ is a finite set of states,
$\delta:Q \times \Sigma \rightarrow \Exp{Q}$ is a
transition function,
$Q_{in} \subseteq Q$ is a set of initial state,
and $F \subseteq Q$ specifies a B\"uchi acceptance condition.
\end{definition}

A run of a NBW $N$ on $w=\sigma_0\sigma_1\cdots\in\Sigma^{\omega}$ is
a sequence $r=q_0q_1\cdots\in Q^{\omega}$ such that $q_0\in Q_{in}$,
and for all $i\geq 0$ we have
$q_{i+1}\in\delta({q_i,\sigma_{i+1}})$. For a run $r$, we denote the
set of automaton states visited 
infinitely often by $inf(r)$, i.e., $inf(r)=\{q
~|~ q \mbox{ appears infinitely often in }r\}$.
A run is \emph{accepting} if $inf(r) \cap F \neq
\emptyset$.
The NBW $N$ accepts $w$ if there exists an accepting run of $N$ on
$w$.
We say that $w$ is in the language of $N$ and denote by $\lang{N}$ the set
of words accepted by $N$. 

\begin{definition}[Alternating B\"uchi Word Automata (ABW)]
An alternating B\"uchi word automaton is of the form
$A=\langle \Sigma$, $Q$, $q_{in}$, $\delta$, $F\rangle$, where
$\Sigma$ is the input alphabet, $Q$ is a finite set of states,
$\delta:Q \times \Sigma \rightarrow {\B}^+(Q)$ is a
transition function,
$q_{in} \in Q$ is an initial state,
and $F \subseteq Q$ specifies a B\"uchi acceptance condition.
\end{definition}

A run of an ABW $A$ on $w=\sigma_0\sigma_1\cdots$ is a $Q$-labeled
$D$-tree, $\tuple{T,\tau}$, where
$\tau(\epsilon)=q_{in}$ and, for every $x\in T$, we have
$\{\tau(x\cdot \gamma_1),\ldots, \tau(x\cdot \gamma_k)\} \models
\delta(\tau(x),\sigma_{|x|})$ where $\{x\cdot \gamma_1,\ldots, x\cdot
\gamma_k\}$ is the set of children of $x$.
A run of $A$ is accepting if all its 
infinite paths satisfy the acceptance condition. 
For a path $\pi$, let
$inf(\pi)=\{q ~|~ q \mbox{ appears infinitely often in }\tau(\pi)\}$.
A path $\pi$ is accepting if
$inf(\pi) \cap F \neq \emptyset$.
Thus, every infinite path in the run tree must visit the acceptance set $F$
infinitely often.
The ABW $A$ accepts $w$ if there exists 
an accepting run on $w$.
As before, we denote by $\lang{A}$ the set of words accepted by $A$.

We state the following well known results about Linear Temporal Logic (LTL), NBW, and ABW
(omitting the definition of LTL).

\begin{theorem}[\cite{VW94,Var94}]
For every LTL formula $\varphi$ of length $n$ there exist an ABW
$A_\varphi$ with $O(n)$ states such that $L(A_\varphi)=L(\varphi)$.
\label{vardiwolper}
\end{theorem}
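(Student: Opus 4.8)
This is the classic Vardi–Wolper construction, so I'll sketch the standard proof.

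The plan is to construct the alternating Büchi automaton $A_\varphi$ directly from the syntax of $\varphi$, using the Fischer–Ladner-style closure as the state space and letting the transition function mimic the one-step semantics of the temporal operators. First I would rewrite $\varphi$ into a form where negations are pushed to the atomic propositions (using the standard dualities $\Not\X\psi \equiv \X\Not\psi$, $\Not(\psi_1\Until\psi_2) \equiv \Not\psi_1 \Release \Not\psi_2$, etc.), so that the only subformulas we need to track are positive Boolean combinations of ``top-level'' temporal/atomic subformulas; this is what makes the transition function land in $\B^+(Q)$ rather than needing negation on the right-hand side. Then I would define the closure $\mathit{cl}(\varphi)$ as the set of subformulas of (the rewritten) $\varphi$, which has $O(n)$ elements, and take $Q = \mathit{cl}(\varphi)$, $q_{in} = \varphi$.

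The heart of the construction is the transition function $\delta: Q \times \Sigma \to \B^+(Q)$, defined by induction on the structure of the state-formula, where $\Sigma = 2^{AP}$ for the set $AP$ of atomic propositions occurring in $\varphi$. The defining clauses are the expected ones: $\delta(p,\sigma) = \true$ if $p\in\sigma$ and $\false$ otherwise (and dually for $\Not p$); $\delta(\psi_1\And\psi_2,\sigma) = \delta(\psi_1,\sigma)\And\delta(\psi_2,\sigma)$; $\delta(\psi_1\Or\psi_2,\sigma) = \delta(\psi_1,\sigma)\Or\delta(\psi_2,\sigma)$; $\delta(\X\psi,\sigma) = \psi$ (so the child is labeled by the state $\psi$, deferring to the next letter); $\delta(\psi_1\Until\psi_2,\sigma) = \delta(\psi_2,\sigma) \Or \big(\delta(\psi_1,\sigma)\And (\psi_1\Until\psi_2)\big)$; and $\delta(\psi_1\Release\psi_2,\sigma) = \delta(\psi_2,\sigma)\And\big(\delta(\psi_1,\sigma)\Or(\psi_1\Release\psi_2)\big)$. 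The acceptance set $F$ is chosen to consist of all $\Release$-formulas (equivalently, one forbids getting stuck forever in an $\Until$-formula): this is precisely the device that forces every $\psi_1\Until\psi_2$ obligation to be eventually discharged, since an infinite path that keeps choosing the right disjunct of the $\Until$ clause never visits $F$ and hence is rejecting, whereas the $\Release$-formulas impose only safety constraints and so may legitimately persist forever.

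Finally I would verify correctness, i.e. $L(A_\varphi) = L(\varphi)$, by the usual two-way argument: given a model $w \models \varphi$, build an accepting run tree by choosing, at each node labeled $\psi$ reading letter $\sigma_{|x|}$, a minimal satisfying set for $\delta(\psi,\sigma_{|x|})$ guided by which subformulas hold at that suffix of $w$; conversely, given an accepting run tree on $w$, prove by induction on formula structure (with the Büchi condition handling the $\Until$/$\Release$ cases) that the state labeling any node $x$ is satisfied by the suffix $w^{|x|}$, so in particular $w \models \varphi$. The state count is immediate since $|Q| = |\mathit{cl}(\varphi)| = O(n)$. The one genuinely delicate point — the main obstacle, such as it is — is getting the acceptance condition and the eventuality-discharge argument exactly right: one must be careful that on an accepting path every $\Until$-obligation that ever appears is fulfilled in finitely many steps, and that the rewriting to negation normal form does not blow up the formula size (it does not, since each dualization is a linear-size rewrite applied once per operator), and that the ``suffix $w^i$'' claim in the soundness direction is proved with the right induction measure so that the temporal cases go through using the Büchi acceptance hypothesis rather than circularly. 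Everything else is a routine structural induction, and the result is exactly Theorem~\ref{vardiwolper} as cited from \cite{VW94,Var94}.
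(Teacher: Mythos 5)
Your sketch is the standard Vardi--Wolper construction and it is correct; the paper only cites this result without proof, but its own automaton construction for \ltal in Theorem~\ref{thm:main} (subformulas as states, the same inductive transition clauses for $\Until$ and $\Release$, acceptance set consisting of the $\Release$-subformulas) follows exactly the same approach you describe. So your proposal matches the paper's method in all essentials.
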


\begin{theorem}[\cite{mh84}]\label{prop:aton}
For every ABW $A$ with $n$ states there is an
NBW $N$ such that $\lang{N}=\lang{A}$.
The number of states of $N$ is in $\Exp{\bigo{n}}$. 
\end{theorem}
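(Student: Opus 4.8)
The plan is to use the breakpoint (Miyano--Hayashi) subset construction. Given the ABW $A=\langle \Sigma, Q, q_{in}, \delta, F\rangle$, I would define the NBW $N=\langle \Sigma, Q_N, Q_{in}^N, \delta_N, F_N\rangle$ whose states are pairs $(S,O)$ with $O \subseteq S \subseteq Q$; here $S$ is the set of states that a run tree of $A$ currently occupies (an ordinary subset construction), and $O$ is the subset of those states that still \emph{owe} a visit to $F$ since the last time the obligation set was emptied. Formally, $Q_{in}^N = \{(\{q_{in}\}, \emptyset)\}$, $F_N = \{(S,\emptyset) : S \subseteq Q\}$, and $(S',O') \in \delta_N((S,O),\sigma)$ iff there is a family of sets $\{S_q\}_{q \in S}$ with $S_q \models \delta(q,\sigma)$ for each $q \in S$ such that $S' = \bigcup_{q \in S} S_q$ and, writing $O'' = \bigcup_{q \in O} S_q$ when $O \neq \emptyset$ and $O'' = S'$ when $O = \emptyset$, we have $O' = O'' \setminus F$. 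Since there are at most $3^n$ pairs $(S,O)$ with $O \subseteq S \subseteq Q$, the number of states of $N$ is in $\Exp{\bigo{n}}$, as required.

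Then I would prove $\lang{N} = \lang{A}$ in two directions. For $\lang{A} \subseteq \lang{N}$: given an accepting run tree $\langle T, \tau\rangle$ of $A$ on a word $w$, let $S_i \subseteq Q$ be the set of labels of nodes at level $i$ of $T$. The choices of children in the run tree induce, level by level, a run of $N$ on $w$ whose $S$-components are exactly the $S_i$. The crux is a recurrence lemma for breakpoints: for every level $i$ there is a level $j > i$ such that every path segment of $T$ from level $i$ to level $j$ passes through a node labeled by a state in $F$. This follows from K\"onig's lemma --- if it failed for some $i$, the finite segments starting at level $i$ and avoiding $F$ would form an infinite, finitely-branching tree, hence would contain an infinite path, which would extend to an infinite path of $\langle T, \tau\rangle$ that visits $F$ only finitely often, contradicting acceptance. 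Given the lemma, the obligation component $O$ of the induced run is reset to $\emptyset$ infinitely often, so the run visits $F_N$ infinitely often and is accepting.

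For $\lang{N} \subseteq \lang{A}$: given an accepting run $(S_0,O_0),(S_1,O_1),\ldots$ of $N$ on $w$, the witnessing families $\{S_q\}$ at each step assemble into a $Q$-labeled tree in which every node's set of children satisfies $\delta$, i.e., a legal run of $A$ on $w$. It remains to show that every infinite path $q_0, q_1, \ldots$ of this tree visits $F$ infinitely often. Since the run of $N$ is accepting, there are infinitely many indices $i$ with $O_i = \emptyset$; for two consecutive such indices $i_k < i_{k+1}$ I would show by induction along the path that if $q_{i_k+1}, q_{i_k+2}, \ldots, q_m$ all lie outside $F$ then $q_m \in O_m$ --- the base case holds since $O_{i_k+1} = S_{i_k+1} \setminus F$, and the step holds because the transition rule puts every non-$F$ successor of an $O$-state back into $O$. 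As $O_{i_{k+1}} = \emptyset$, the path must meet $F$ at some index in $(i_k, i_{k+1}]$; since there are infinitely many such breakpoints, the path visits $F$ infinitely often, so the run of $A$ is accepting.

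The main obstacle is the recurrence-of-breakpoints lemma in the first direction: the tree of $F$-avoiding segments must be set up carefully --- specifying whether the endpoints of a segment count, and checking that the tree is finitely branching --- so that K\"onig's lemma yields a genuine infinite $F$-avoiding path of the original run tree and hence a real contradiction with acceptance of the ABW. The remaining bookkeeping --- matching a run of $N$ with the sequence of level sets of an ABW run tree and back --- is routine, but it relies on $\delta_N$ being stated precisely enough that the updates of the $S$- and $O$-components come from one and the same family of choices $\{S_q\}$.
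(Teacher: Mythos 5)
Your proposal is correct: it is exactly the breakpoint (Miyano--Hayashi) construction, with the $(S,O)$-pair states, obligation reset at $F_N=\{(S,\emptyset)\}$, the K\"onig's-lemma argument for recurrence of breakpoints in one direction and the ``non-$F$ successors of obligated states stay obligated'' induction in the other, giving at most $3^n \le \Exp{\bigo{n}}$ states. The paper does not prove this theorem itself but cites Miyano and Hayashi, and your argument is essentially the proof from that cited source, so there is nothing to reconcile.
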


%
%
%

\section{Channelled Transition Systems}\label{sec:cts}
In this section, we propose \emph{Channelled Transition System (CTS)}
to facilitate compositional modelling of interactive systems. Namely,
we extend the format of transition labels of Doubly-Labelled
Transition Systems to also specify the role of
the transition (i.e., send- or receive- message) and the used
communication channels. We define a parallel composition operator that
considers both broadcast and multicast semantics and we study its
properties.
The techniques to prove these results are rather standard.
However, we are not familiar with a setup that conveniently allows the
existence of transitions to depend on subscription to channels as we
suggest below.
\subsection{Channelled Transition Systems (CTS)}

A \emph{Channelled Transition System} (CTS) is
$\mathcal{T}=\langle C, \Sigma,\Upsilon,S, \allowbreak S_0,R,L,\rulename{ls}\rangle$,
where $C$ is a set of channels, including the broadcast channel
($\toall$), $\Sigma$ is a \emph{state alphabet}, $\Upsilon$ is a 
\emph{transition alphabet}, $S$ is a set of states, $S_0\subseteq S$
is a set of initial states, $R\subseteq S\times \Upsilon \times S$ is
a transition relation, $L:S\rightarrow \Sigma$ is a labelling
function, and $\rulename{ls}:S \rightarrow 2^C$ is a channel-listening function such that
for every $s\in S$ we have $\toall \in \rulename{ls}(s)$.
We assume that $\Upsilon = \Upsilon^+ \times \{!,?\} \times C$, for
some set $\Upsilon^+$. That is, every transition labelled with some
$\upsilon\in\Upsilon^+$ is either a message send ($!$) 
or a message receive ($?$) on some channel $c\in C$. 

A \emph{path} of a CTS $\mathcal{T}$ is a maximal
sequence of states and transition labels
$\sigma=s_0,a_0,s_1,a_1,\ldots$ such that $s_0\in S_0$ and  
for every $i\geq 0$ we have $(s_i,a_i,s_{i+1})\in R$.
As before, 
we assume that for every state $s\in S$ there exist $a\in \Upsilon$ and
$s'\in S$ such that $(s,a,s')\in R$. 
Thus, a sequence $\sigma$ is maximal if it is infinite.

\begin{remark} Note that the transition labels $a_i$ of a CTS's path
  $\sigma=s_0,a_0,s_1,\allowbreak  
a_1,\ldots$  range over both send (!) and receive (?)
transitions. Depending on the underlying semantics of the CTS, send
transitions may happen independently regardless of the existence of
receivers, e.g., in case of broadcast semantics. However, receive
transitions may only happen  jointly with some send transition. By
allowing CTS's paths to also range over receive transitions, we can
model every system as a collection of (open) systems that interact
through message exchange. That is, a receive transition in a system is
a hole that is closed/filled when composed with a send
transition from another system. A complete system (i.e., with filled
holes) is called a \emph{closed system}.  

The analysis in this article considers closed systems where a system
path ranges over send transitions only. In other word, we only
consider the messages exchanged within the system under consideration.  

The parallel composition of systems is defined below.  
\end{remark}

\begin{definition}[Parallel Composition]\label{def:par}
Given two CTS
$\mathcal{T}_i=\langle C_i, \Sigma_i,\Upsilon_i,S_i, \allowbreak
S^i_0,R_i,L_i,\listen^i\rangle$, where $i\in \{1,2\}$ their composition
$\mathcal{T}_1\parallel \mathcal{T}_2$ is the following CTS
$\mathcal{T}=\langle C, \Sigma,\Upsilon,S, \allowbreak
S_0,R,L,\listen\rangle$,
where the components of $\mathcal{T}$ are:
\begin{itemize}
\item
  $C = C_1 \cup C_2$
\item
  $\Sigma = \Sigma_1 \times \Sigma_2$
\item
  $\Upsilon = \Upsilon^1 \cup \Upsilon^2$
\item
  $S = S_1\times S_2$
\item
  $S_0 = S_0^1 \times S_0^2$
\item
  $R = $

  { \small
    $$
    \begin{array}{l r}
  \left \{((s_1,s_2),(\upsilon,!,c),(s'_1,s'_2)) \left |~
  \begin{array}{l r }
    (s_1,(\upsilon,!,c),s'_1) \in R_1 \mbox{ and }
    (s_2,(\upsilon,?,c),s'_2) \in R_2 & \mbox{ or} \\[2ex] 
    (s_1,(\upsilon,?,c),s'_1) \in R_1 \mbox{ and }
    (s_2,(\upsilon,!,c),s'_2) \in R_2 & \mbox{ or} \\[2ex] 
    (s_1,(\upsilon,!,c),s'_1) \in R_1, c\notin \listen^2(s_2), \mbox{ and }
    s_2=s'_2 & \mbox{ or} \\[2ex] 
    c\notin \listen^1(s_1), s_1=s'_1, \mbox{ and } 
    (s_2,(\upsilon,!,c),s'_2) \in R_2 
  \end{array}
  \right . \right \} & \cup
  \\[6ex]
   \left \{((s_1,s_2),(\upsilon,?,c),(s'_1,s'_2)) \left |~
  \begin{array}{l r }
    (s_1,(\upsilon,?,c),s'_1) \in R_1 \mbox{ and }
    (s_2,(\upsilon,?,c),s'_2) \in R_2 & \mbox{ or} \\[2ex] 
    (s_1,(\upsilon,?,c),s'_1) \in R_1, c\notin \listen^2(s_2), \mbox{ and }
    s_2=s'_2 & \mbox{ or} \\[2ex] 
    c\notin \listen^1(s_1),  s_1=s'_1, \mbox{ and }
    (s_2,(\upsilon,?,c),s'_2) \in R_2 
  \end{array}
  \right . \right \}  & \cup\\[6ex]
  \left \{((s_1,s_2),(\upsilon,\gamma,\toall),(s'_1,s'_2)) \left |~ 
  \begin{array}{l r }
    \begin{array}{l r }
    \gamma\in\{ !,?\}, (s_1,(\upsilon,\gamma,\toall),s'_1)\in R_1, s_2=s_2' \mbox{ and }\\
    \forall s_2'' ~.~ (s_2,(\upsilon,?,\toall),s''_2) \notin R_2 &
    \hfill \mbox{ or} \\[2ex]
    \multicolumn{2}{l}{\gamma\in\{ !,?\}, s_1=s_1',
    \forall s_1'' ~.~ (s_1,(\upsilon,?,\toall),s''_1) \notin R_1\ \mbox{and}}\\
    (s_2,(\upsilon,\gamma,\toall),s'_2)\in R_2, 
    \end{array} 
  \end{array}
  \right . \right \}

  \end{array}
  $$
  }
  
\item
  $L(s_1,s_2) = (L_1(s_1),L_2(s_2))$
\item
  $\listen(s_1,s_2) = \listen^1(s_1)\cup \listen^2(s_2)$
\end{itemize}   
\end{definition}

The transition relation $R$ of the composition defines two modes of
interactions, namely multicast and broadcast.
In both interaction modes, the composition $\mathcal{T}$ sends a
message $(\upsilon,!,c)$ on channel $c$ (i.e.,
$((s_1,s_2),(\upsilon,!,c),(s'_1,s'_2))\in R$) if either
$\mathcal{T}_1$ or $\mathcal{T}_2$ is able to generate this message,
i.e, $(s_1,(\upsilon,!,c),s'_1)\in R_1$ or
$(s_2,(\upsilon,!,c),s'_2)\in R_2$.

Consider the case of a multicat channel.
A multicast is blocking. Thus, a multicast message is sent if either
it is received or the channel it is sent on is not listened to.
Suppose that a message originates from $\mathcal{T}_1$, i.e.,
$(s_1,(\upsilon,!,c),s'_1)\in R_1$.
Then, $\mathcal{T}_2$ must be able to
either receive the message or, in the case that $\mathcal{T}_2$ does
not listen to the channel, discard it.
CTS $\mathcal{T}_2$ receives if $(s_2,(\upsilon,?,c),s'_2)\in R_2$.
It discards if $c\notin\listen^2(s_2)$ and $s_2=s'_2$.
The case of $\mathcal{T}_2$ sending is dual.
Note that $\mathcal{T}_2$ might be a composition of other
CTS(s), say $\mathcal{T}_2=\mathcal{T}_3\|\mathcal{T}_4$. In this
case, $\mathcal{T}_2$ listens to channel $c$ if at least one of
$\mathcal{T}_3$ or $\mathcal{T}_4$ is listening.
That is, it could be that either $c\in(\listen(s_3)\cap\listen(s_4))$, 
$c\in(\listen(s_2)\backslash\listen(s_3))$, or
$c\in(\listen(s_2)\backslash\listen(s_4))$.
In the first case, both must receive the message.
In the latter cases, the listener receives and the non-listener
discards.
Accordingly,
when a message is sent by one system, it is propagated to all other
connected systems in a joint transition.
A multicast is indeed blocking because a connected system cannot
discard an incoming message on a channel it is listening to.
More precisely, a joint transition
$((s_1,s_2),(\upsilon,!,c),(s'_1,s'_2))$ where $c\in\listen(s_2)$
requires that $(s_2,(\upsilon,?,c),s'_2)$ is
supplied. In other words, message sending is blocked until all
connected receivers are ready to participate in the interaction.  

Consider now a broadcast.
A broadcast is non-blocking.
Thus, a broadcast message is either received or discarded.
Suppose that a message originates from $\mathcal{T}_1$, i.e.,
$(s_1,(\upsilon,!,\toall),s'_1)\in R_1$.
If $\mathcal{T}_2$
is receiving, i.e., $(s_2,(\upsilon,?,\toall),s'_2)\in R_2$ the
message is sent.
However, 
by definition, we have that $\toall\in\listen(s)$ for every $s$ in a
CTS.
Namely, a system may not disconnect the broadcast channel
$\toall$.
For this reason, the last part of the transition relation $R$
defines a special case for handling (non-blocking)
broadcast.
Accordingly, a joint transition
$((s_1,s_2),(\upsilon,\gamma,\toall),(s'_1,s'_2))\in R$ where
$\gamma\in\set{!,?}$ is always possible and may not be blocked by any
receiver.
In fact, if ($\gamma=\ !$) and $(s_1,(\upsilon,!,\toall),s'_1)\in R_1$
then the joint transition is possible whether
$(s_2,(\upsilon,?,\toall),s'_2)\in R_2$ or not.
In other words, a broadcast can happen even if there are no
receivers.
Furthermore, if ($\gamma=\ ?$) and
$(s_1,(\upsilon,?,\toall),s'_1)\in R_1$ then also the joint transition
is possible regardless of the other participants. In other words, a
broadcast is received only by interested participants.  

\subsection{Properties of Parallel Composition}

Our parallel composition is commutative and associative.
Furthermore, it supports non-blocking broadcast and blocking multicast
semantics as stated in the following: 
\begin{lemma}[Commutativity and Associativity]\label{lem:com+assoc}
Given two CTS
$\mathcal{T}_1$ and $\mathcal{T}_2$ we have that:
\begin{itemize}
\item $\|$ is commutative: $\mathcal{T}_1\|\mathcal{T}_2=\mathcal{T}_2\|\mathcal{T}_1$;
\item $\|$ is associative: $(\mathcal{T}_1\|\mathcal{T}_2) \| \mathcal{T}_3=\mathcal{T}_1\|(\mathcal{T}_2 \| \mathcal{T}_3)$.
\end{itemize}
\end{lemma}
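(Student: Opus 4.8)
The proof is by direct unfolding of Definition~\ref{def:par}. For each of the eight components $C, \Sigma, \Upsilon, S, S_0, R, L, \listen$ one checks the required identity. For commutativity, the components $C = C_1 \cup C_2$, $\Upsilon = \Upsilon^1 \cup \Upsilon^2$, and $\listen(s_1,s_2) = \listen^1(s_1) \cup \listen^2(s_2)$ are symmetric in the two arguments because $\cup$ is commutative; the components $\Sigma = \Sigma_1 \times \Sigma_2$, $S = S_1 \times S_2$, $S_0 = S_0^1 \times S_0^2$, and $L(s_1,s_2) = (L_1(s_1),L_2(s_2))$ are equal only up to the canonical isomorphism swapping the two coordinates of a pair, so \emph{strictly} the lemma should be read up to this isomorphism (or one adopts the convention that $\times$ is symmetric); I would state this convention explicitly and then the bookkeeping is trivial. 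The only component requiring real inspection is $R$: I would observe that the defining disjunction for $R$ in Definition~\ref{def:par} is, in each of the three blocks (multicast send, multicast receive, broadcast), already written as a disjunction of clauses that comes in symmetric pairs — the $\mathcal{T}_1$-originates clause and the $\mathcal{T}_2$-originates clause — together with clauses that are themselves symmetric. Hence swapping the roles of $1$ and $2$ permutes the clauses within each block and leaves $R$ (again, modulo the coordinate swap on $S_1 \times S_2$) unchanged.

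For associativity the argument is again component-by-component, and for $C, \Upsilon, \listen$ it follows from associativity of $\cup$, while for $\Sigma, S, S_0, L$ it follows from associativity of $\times$ (up to the canonical reassociation isomorphism $(X\times Y)\times Z \cong X\times(Y\times Z)$, which I would fix once and for all). The substantive content is the equality of the transition relations of $(\mathcal{T}_1\|\mathcal{T}_2)\|\mathcal{T}_3$ and $\mathcal{T}_1\|(\mathcal{T}_2\|\mathcal{T}_3)$. Here I would argue by a direct but careful case analysis: take a triple $((s_1,s_2,s_3),(\upsilon,\gamma,c),(s_1',s_2',s_3'))$ and show it lies in the left-hand relation iff it lies in the right-hand one. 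The clean way to organize this is to prove an auxiliary characterization of $R$ for an arbitrary parallel product: a triple $((s_1,\dots,s_k),(\upsilon,!,c),(s_1',\dots,s_k'))$ with $c\neq\toall$ is in $R$ iff (i) there exists exactly one index $i$ with $(s_i,(\upsilon,!,c),s_i')\in R_i$, (ii) for every $j$ with $c\in\listen^j(s_j)$ and $j\neq i$ we have $(s_j,(\upsilon,?,c),s_j')\in R_j$, and (iii) for every $j$ with $c\notin\listen^j(s_j)$ we have $s_j=s_j'$; the analogous statements for $(\upsilon,?,c)$ and for the broadcast channel $\toall$ (where the ``at most one sender / the broadcast is heard only by interested parties'' bookkeeping replaces the $\listen$-based bookkeeping). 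Once this flat characterization is established — by induction on the structure of the product, unfolding one application of Definition~\ref{def:par} at a time — both $(\mathcal{T}_1\|\mathcal{T}_2)\|\mathcal{T}_3$ and $\mathcal{T}_1\|(\mathcal{T}_2\|\mathcal{T}_3)$ are seen to realize the same flat three-way relation, and associativity follows.

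The main obstacle is the broadcast block. Unlike a multicast, where non-participation is witnessed uniformly by $c\notin\listen(s)$, a broadcast is always listened to ($\toall\in\listen(s)$ for every $s$), so the ``discard'' case is encoded differently: a component $s_j$ with no outgoing $(\upsilon,?,\toall)$ transition stays put, i.e.\ $s_j=s_j'$. The subtlety when we nest compositions is that in, say, $(\mathcal{T}_1\|\mathcal{T}_2)\|\mathcal{T}_3$ the condition ``$\mathcal{T}_1\|\mathcal{T}_2$ has no $(\upsilon,?,\toall)$ move from $(s_1,s_2)$'' must be shown equivalent to ``neither $\mathcal{T}_1$ nor $\mathcal{T}_2$ has such a move from its respective state,'' and more delicately, that when $\mathcal{T}_1\|\mathcal{T}_2$ \emph{does} take a receive-broadcast move jointly with a send-broadcast from $\mathcal{T}_3$, the induced moves on $s_1,s_2$ are exactly what the right-associated product prescribes. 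I would handle this by proving, as part of the induction, the extra auxiliary fact that $\mathcal{T}_1\|\mathcal{T}_2$ has \emph{some} transition $((s_1,s_2),(\upsilon,?,\toall),(s_1'',s_2''))$ iff at least one of $\mathcal{T}_1,\mathcal{T}_2$ has a $(\upsilon,?,\toall)$-transition from its state, and that in that case the non-receiving side is forced to be idle — which is exactly the shape of the broadcast clause, so the induction closes. Everything else in the proof is routine set-theoretic rewriting, and I would present it compactly, spelling out one representative clause in each block and noting that the remaining clauses are symmetric.
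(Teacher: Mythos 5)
Your proposal is correct in substance but takes a genuinely different route from the paper's proof. The paper argues by brute-force case matching: for commutativity it enumerates the clauses of Definition~\ref{def:par} that can generate a transition of $\mathcal{T}_1\|\mathcal{T}_2$ and exhibits the symmetric clause for $\mathcal{T}_2\|\mathcal{T}_1$; for associativity it unfolds the definition twice on $(\mathcal{T}_1\|\mathcal{T}_2)\|\mathcal{T}_3$ and matches each nested case against $\mathcal{T}_1\|(\mathcal{T}_2\|\mathcal{T}_3)$, including the broadcast clause, where it implicitly uses exactly the equivalence you flag as the delicate point (that $\mathcal{T}_1\|\mathcal{T}_2$ has no $(\upsilon,?,\toall)$-move from $(s_1,s_2)$ iff neither component has one). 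You differ in two respects: you make explicit that the structured components ($\Sigma$, $S$, $S_0$, $L$) agree only up to the canonical swap/reassociation isomorphisms, which the paper silently elides, and for associativity you replace nested case matching by a flat $n$-ary characterisation of the product transition relation, proved by induction, of which both associations are instances. Your route yields a reusable description of arbitrary products (convenient, e.g., for Definition~\ref{def:ts}) and isolates the broadcast subtlety cleanly, at the cost of a heavier auxiliary lemma; the paper's route is more elementary but repeats similar case analyses. One small repair to your auxiliary characterisation: drop the requirement that there be \emph{exactly one} index with a send transition. Uniqueness can fail, since a component other than the chosen sender may happen to have a send transition with the same label and endpoints in addition to the receive transition it actually uses; the correct statement is existential, namely that \emph{some} index $i$ acts as sender while every other listening component receives and every non-listening component idles (with the analogous reading for $\toall$). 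With that fix, and the broadcast auxiliary fact you already state, your induction closes and the argument is sound.
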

\begin{proof} We prove each statement separately. In both statement, the proof proceeds by case analysis on the joint transition.
\begin{description}
\item[\bf ($\|$ is commutative):] we consider all possible joint transitions from  $\mathcal{T}_1\|\mathcal{T}_2$ and we show that they have corresponding transitions in $\mathcal{T}_2\|\mathcal{T}_1$ and vice versa. We only show one direction and the other direction follows in a similar way. 
\begin{itemize}
\item {\bf Case $\tuple{\upsilon,!,c}$:} By Def.~\ref{def:par}, we have that $((s_1,s_2),\tuple{\upsilon,!,c}, (s'_1,s'_2))\in R_{12}$ ($R_{12}$ is the transition relation of $\mathcal{T}_1\|\mathcal{T}_2$) in the following cases: 
\begin{enumerate}
\item  $(s_1,\tuple{\upsilon,!,c},s'_1)\in R_1$ and $(s_2,\tuple{\upsilon,?,c},s'_2)\in R_2$:  It follows that $((s_2,s_1),\tuple{\upsilon,!,c}, (s'_2,s'_1))\in R_{21}$ ($R_{21}$ is the relation of $\mathcal{T}_2\|\mathcal{T}_1$) is derivable because $(s_2,\tuple{\upsilon,?,c},s'_2)\in R_2$ and $(s_1,\tuple{\upsilon,!,c},s'_1)\in R_1$;

\item $(s_1,\tuple{\upsilon,!,c},s'_1)\in R_1$ and $c\notin\listen^{2}(s_2)$ and ($s_2=s'_2$): It follows that $((s_2,s_1),\tuple{\upsilon,!,c}, (s'_2,s'_1))\in R_{21}$ is derivable because $c\notin\listen^{2}(s_2)$ and ($s_2=s'_2$) and $(s_1,\tuple{\upsilon,!,c},s'_1)\in R_1$;

\item $(s_1,\tuple{\upsilon,!,\toall},s'_1)\in R_1$ and ($s_2=s'_2$) and $\forall s''_2. (s_2,\tuple{\upsilon,?,c},s''_2)\notin R_2$: It follows that\\ $((s_2,s_1),\tuple{\upsilon,!,\toall}, (s'_2,s'_1))\in R_{21}$ is derivable because ($s_2=s'_2$) and $\forall s''_2. (s_2,\tuple{\upsilon,?,c},s'_2)\notin R_2$ and $(s_1,\tuple{\upsilon,!,c},s'_1)\in R_1$;

\item  $(s_2,\tuple{\upsilon,!,c},s'_2)\in R_2$ and $(s_1,\tuple{\upsilon,?,c},s'_1)\in R_1$:  This is the symmetric case of case (1);

\item $(s_2,\tuple{\upsilon,!,c},s'_2)\in R_2$ and $c\notin\listen^{1}(s_1)$ and ($s_1=s'_1$): This is the symmetric case of case (2);

\item $(s_2,\tuple{\upsilon,!,\toall},s'_2)\in R_2$ and ($s_1=s'_1$) and $\forall s''_1. (s_1,\tuple{\upsilon,?,c},s''_1)\notin R_1$: This is the symmetric case of case (3).
\end{enumerate}

\item {\bf Case $\tuple{\upsilon,?,c}$:} By Def.~\ref{def:par}, we have that $((s_1,s_2),\tuple{\upsilon,?,c}, (s'_1,s'_2))\in R_{12}$ in the following cases:
\begin{enumerate}
\item  $(s_1,\tuple{\upsilon,?,c},s'_1)\in R_1$ and $(s_2,\tuple{\upsilon,?,c},s'_2)\in R_2$:  It follows that $((s_2,s_1),\tuple{\upsilon,?,c}, (s'_2,s'_1))\in R_{21}$ is derivable for the same reason;

\item $(s_1,\tuple{\upsilon,?,c},s'_1)\in R_1$ and $c\notin\listen^{2}(s_2)$ and ($s_2=s'_2$): It follows that $((s_2,s_1),\tuple{\upsilon,?,c}, (s'_2,s'_1))\in R_{21}$ is derivable because $c\notin\listen^{2}(s_2)$ and ($s_2=s'_2$) and $(s_1,\tuple{\upsilon,?,c},s'_1)\in R_1$;

\item $(s_2,\tuple{\upsilon,?,c},s'_2)\in R_2$ and $c\notin\listen^{1}(s_1)$ and ($s_1=s'_1$): This is the symmetric case of case (2);

\item $(s_1,\tuple{\upsilon,?,\toall},s'_1)\in R_1$ and ($s_2=s'_2$) and $\forall s''_2. (s_2,\tuple{\upsilon,?,c},s''_2)\notin R_2$: It follows that\\ $((s_2,s_1),\tuple{\upsilon,?,\toall}, (s'_2,s'_1))\in R_{21}$ is derivable because ($s_2=s'_2$) and $\forall s''_2. (s_2,\tuple{\upsilon,?,c},s''_2)\notin R_2$ and $(s_1,\tuple{\upsilon,?,c},s'_1)\in R_1$;

\item $(s_2,\tuple{\upsilon,?,\toall},s'_2)\in R_2$ and ($s_1=s'_1$) and $\forall s''_1. (s_1,\tuple{\upsilon,?,c},s''_1)\notin R_1$: This is the symmetric case of case (3).

\end{enumerate}

\end{itemize}

\item[\bf ($\|$ is associative):] we consider all possible joint transitions from  $(\mathcal{T}_1\|\mathcal{T}_2) \| \mathcal{T}_3$ and we show that they have corresponding transitions in $\mathcal{T}_1\|(\mathcal{T}_2 \| \mathcal{T}_3)$ and vice versa. We only show one direction and the other direction follows in a similar way. 
\begin{itemize}
\item {\bf Case $\tuple{\upsilon,!,c}$:} By Def.~\ref{def:par}, we have $(((s_1,s_2),s_3),\tuple{\upsilon,!,c}, ((s'_1,s'_2),s'_3))\in R_{(12)3}$ in the following cases: 
\begin{enumerate}
\item  $((s_1,s_2),\tuple{\upsilon,!,c},(s'_1,s'_2))\in R_{12}$ and $(s_3,\tuple{\upsilon,?,c},s'_3)\in R_3$:  As before, there are six cases for $((s_1,s_2),\tuple{\upsilon,!,c},(s'_1,s'_2))\in R_{12}$, we only consider the case  ($(s_1,\tuple{\upsilon,!,c},s'_1)\in R_1$ and $(s_2,\tuple{\upsilon,?,c},s'_2)\in R_2$); and other cases follow similarly. It follows that \\
$((s_1,(s_2,s_3)),$ $\tuple{\upsilon,!,c}, (s'_1,(s'_2,s'_3))\in R_{1(23)}$ where $(s_1,\tuple{\upsilon,!,c},s'_1)\in R_1$ and\\ $((s_2,s_3),\tuple{\upsilon,?,c}, (s'_2,s'_3))\in R_{23}$ such that $(s_2,\tuple{\upsilon,?,c}, s'_2)\in R_2$ and $(s_3,\tuple{\upsilon,?,c}, s'_3)\in R_3$ as required;
\item $((s_1,s_2),\tuple{\upsilon,!,c},(s'_1,s'_2))\in R_{12}$ and $c\notin\listen^{3}(s_3)$ and ($s_3=s'_3$): It follows that \\
$((s_1,(s_2,s_3)),\tuple{\upsilon,!,c}, (s'_1,(s'_2,s'_3))\in R_{1(23)}$ where $(s_1,\tuple{\upsilon,!,c},s'_1)\in R_1$ and\\ $((s_2,s_3),\tuple{\upsilon,?,c}, (s'_2,s'_3))\in R_{23}$ such that $(s_2,\tuple{\upsilon,?,c}, s'_2)\in R_2$ and  $c\notin\listen^{3}(s_3)$ and ($s_3=s'_3$);

\item $((s_1,s_2),\tuple{\upsilon,!,\toall},(s'_1,s'_2))\in R_{12}$ and ($s_3=s'_3$) and $\forall s''_3. (s_3,\tuple{\upsilon,?,\toall},$ $s'_3)\notin R_3$: It follows that 
$((s_1,(s_2,s_3)),\tuple{\upsilon,!,\toall}, (s'_1,(s'_2,s'_3))\in R_{1(23)}$ where $(s_1,\tuple{\upsilon,!,\toall},s'_1)\in R_1$ and\\ $((s_2,s_3),\tuple{\upsilon,?,\toall}, (s'_2,s'_3))\in R_{23}$ such that $(s_2,\tuple{\upsilon,?,\toall}, s'_2)\in R_2$ and  ($s_3=s'_3$) and $\forall s''_3. (s_3,\tuple{\upsilon,?,\toall},$ $s'_3)\notin R_3$;

\item  $((s_1,s_2),\tuple{\upsilon,?,c},(s'_1,s'_2))\in R_{12}$ and $(s_3,\tuple{\upsilon,!,c},s'_3)\in R_3$:  This is the symmetric case of case (1);

\item $(s_3,\tuple{\upsilon,!,c},s'_3)\in R_3$ and $c\notin\listen^{12}(s_1,s_2)$ and ($(s_1,s_2)=(s'_1,s'_2)$): This is the symmetric case of case (2);

\item $(s_3,\tuple{\upsilon,!,\toall},s'_3)\in R_3$ and $(s_1,s_2)=(s'_1,s'_2)$ and $\forall (s''_1,s''_2).$\\ $((s_1,s_2),\tuple{\upsilon,?,c},(s''_1,s''_2))\notin R_1$: This is  symmetric to case (3).
\end{enumerate}

\item {\bf Case $\tuple{\upsilon,?,c}$:} it follows similarly by case analysis on Def.~\ref{def:par}.
\end{itemize}

\end{description}

\end{proof}

\begin{lemma}[Non-blocking Broadcast]\label{lem:brdnblk}
Given a CTS $\mathcal{T}_1$ and for every other CTS $\mathcal{T}$, we
have that for every reachable state $(s_1,s)$ of
$\mathcal{T}_1\|\mathcal{T}$ the
following holds.  
\[(s_1,(\upsilon,!,\toall),s'_1)\in R_1\ \mbox{implies}\ ((s_1,s),(\upsilon,!,\toall),(s'_1,s'))\in R_{\mathcal{T}_1\|\mathcal{T}}\] 
\end{lemma}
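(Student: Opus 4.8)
The plan is to prove the lemma by a direct unfolding of Definition~\ref{def:par}, reading the conclusion with $s'$ understood existentially (we exhibit a successor $s'$; concretely, $s'$ can be taken to be $s$ itself whenever $\mathcal{T}$ has no matching receive). No induction on the structure of $\mathcal{T}$ is needed: Definition~\ref{def:par} builds $\mathcal{T}_1\|\mathcal{T}$ by treating $\mathcal{T}$ as a single CTS, with its own relation $R_{\mathcal{T}}$ and listening function, irrespective of whether $\mathcal{T}$ is itself a parallel composition. Hence it suffices to unfold the transition relation $R_{\mathcal{T}_1\|\mathcal{T}}$ once. I will also note that reachability of $(s_1,s)$ plays no role here; it is kept in the statement only for uniformity with the surrounding results.

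The argument is a two-way case split on whether $\mathcal{T}$ offers a receive on the broadcast channel with the relevant transition letter. Assume $(s_1,(\upsilon,!,\toall),s'_1)\in R_1$. \emph{Case 1:} there exists $s'$ with $(s,(\upsilon,?,\toall),s')\in R_{\mathcal{T}}$. Then the first clause of the send-part of $R$ in Definition~\ref{def:par} (a send in the first component matched by a receive in the second, instantiated with $c=\toall$) applies verbatim and gives $((s_1,s),(\upsilon,!,\toall),(s'_1,s'))\in R_{\mathcal{T}_1\|\mathcal{T}}$. \emph{Case 2:} for every $s''$ we have $(s,(\upsilon,?,\toall),s'')\notin R_{\mathcal{T}}$. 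Then the first disjunct of the dedicated broadcast clause of $R$ applies with $\gamma=\ !$ and the frame condition $s=s'$ (i.e.\ $s_2=s'_2$ in the notation of Definition~\ref{def:par}), yielding $((s_1,s),(\upsilon,!,\toall),(s'_1,s))\in R_{\mathcal{T}_1\|\mathcal{T}}$, so we take $s'=s$. In both cases a suitable $s'$ has been produced, which proves the lemma.

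There is no genuinely hard step here; the single point that must be handled with care is \emph{which} clause of $R$ to invoke in Case 2. Since $\toall\in\listen(s)$ holds for every state of every CTS, one cannot route the non-participating behaviour of $\mathcal{T}$ through the ``discard'' clauses (those require $c\notin\listen(s)$ and therefore never fire for $c=\toall$). The dedicated broadcast clause of $R$ is precisely what resolves this: it ignores $\listen$ altogether and only checks the \emph{absence} of a receive transition, so it is exactly the mechanism that makes a broadcast send fire even when no component listens-and-reacts. Isolating this observation is the whole content of the proof.
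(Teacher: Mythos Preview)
Your proof is correct and follows essentially the same approach as the paper: a two-case split on whether $\mathcal{T}$ has a matching receive transition on $\toall$, invoking the send/receive clause in the first case and the dedicated broadcast clause (with $s'=s$) in the second. Your explicit remark that the discard clauses are unavailable because $\toall\in\listen(s)$ always holds, and your observation that reachability is unused, are both accurate and mirror the paper's reasoning.
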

\begin{proof} By Def.~\ref{def:par}, we have only two cases to derive $((s_1,s),(\upsilon,!,\toall),(s'_1,s'))\in R_{\mathcal{T}_1\|\mathcal{T}}$ given that $(s_1,(\upsilon,!,\toall),s'_1)\in R_1$. Note that, by definition, the condition $\toall\in\listen^{k}({s})$ always holds for any agent $k$ and in any state $s$. We show that when the channel is a broadcast $\toall$, the receiver does not play any role in enabling the transmission on the channel. In other words, it is only sufficient to have a sender to enable a broadcast at system level. More precisely, if $(s_1,(\upsilon,!,\toall),s'_1)\in R_1$ then we have the following:
\begin{itemize}
\item $((s_1,s),(\upsilon,!,\toall),(s'_1,s'))\in R_{\mathcal{T}_1\|\mathcal{T}}$ because $(s,(\upsilon,?,\toall),s')\in R$; or

\item $((s_1,s),(\upsilon,!,\toall),(s'_1,s'))\in R_{\mathcal{T}_1\|\mathcal{T}}$ because 
($s=s'$) and $\forall s''. (s,\tuple{\upsilon,?,\toall},s')\notin R_3$.
\end{itemize}
Namely, whether there exists a receiver or not, a broadcast can always happen (cannot be blocked).
\end{proof}
\begin{lemma}[Blocking Multicast]\label{lem:multi}
Given a CTS $\mathcal{T}_1$ and a multicast channel $c\in
C\backslash\set{\toall}$ such that $(s_1,(\upsilon,!,c),s'_1)\in R_1$,
then for every other CTS $\mathcal{T}$ we have that in every reachable
state $(s_1,s)$ of $\mathcal{T}_1\|\mathcal{T}$ the following holds.  
\[\begin{array}{lr}
((s_1,s),(\upsilon,!,c),(s'_1,s'))\in R_{\mathcal{T}_1\|\mathcal{T}} \ \mbox{iff}&\\[2ex]
 \qquad\qquad\qquad\qquad\qquad \qquad\qquad  \tuple{
  \begin{array}{rr}
  \multicolumn{2}{l}{c\in\listen(s)\ \mbox{and}\ (s,(\upsilon,?,c),s')\in R} \\[4pt]\mbox{or} &
 {  c\notin\listen(s)}
  \end{array}}
\end{array}
\] 
\end{lemma}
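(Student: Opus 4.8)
The plan is to prove both directions of the equivalence by a direct unfolding of Definition~\ref{def:par}, in the same style as the proof of Lemma~\ref{lem:com+assoc}. The one preliminary observation is that, since $c \in C \setminus \set{\toall}$, the label $\tuple{\upsilon,!,c}$ is not a broadcast label, so the third block of clauses in the transition relation of $\mathcal{T}_1\|\mathcal{T}$ (the one governing $\toall$) cannot contribute the triple $((s_1,s),\tuple{\upsilon,!,c},(s'_1,s'))$; hence only the first block of Definition~\ref{def:par} — its four multicast clauses for a $\tuple{\upsilon,!,c}$-transition — is in play. Here $R$ and $\listen$ on the right-hand side are $\mathcal{T}$'s transition relation and listening function, $R_1$ is $\mathcal{T}_1$'s, and $s,s'$ range over states of $\mathcal{T}$; reachability of $(s_1,s)$ is not actually used, the equivalence holding at every such state. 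We read the second right-hand disjunct, ``$c\notin\listen(s)$'', together with its implicit companion condition $s'=s$, since that is what the corresponding composition clause demands.

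For the ``if'' direction, assume the right-hand side and recall the hypothesis $(s_1,\tuple{\upsilon,!,c},s'_1)\in R_1$. If $c\in\listen(s)$ and $(s,\tuple{\upsilon,?,c},s')\in R$, then the first multicast clause of Definition~\ref{def:par} fires and gives $((s_1,s),\tuple{\upsilon,!,c},(s'_1,s'))\in R_{\mathcal{T}_1\|\mathcal{T}}$. If instead $c\notin\listen(s)$ (so $s'=s$), the third clause — $(s_1,\tuple{\upsilon,!,c},s'_1)\in R_1$, $c\notin\listen(s)$, $s=s'$ — fires and again yields the composed transition. These two sub-cases exhaust the right-hand side.

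For the ``only if'' direction, assume $((s_1,s),\tuple{\upsilon,!,c},(s'_1,s'))\in R_{\mathcal{T}_1\|\mathcal{T}}$ and case-split on which of the four multicast clauses witnesses it. The first clause supplies $(s,\tuple{\upsilon,?,c},s')\in R$: if $c\in\listen(s)$ this is the first right-hand disjunct, and if $c\notin\listen(s)$ it is the second. The third clause supplies $c\notin\listen(s)$ and $s=s'$ directly, i.e.\ the second disjunct. The remaining two clauses are those in which $\mathcal{T}$ originates the send while $\mathcal{T}_1$ receives or discards; since the lemma concerns the composed transition obtained by lifting the fixed send $(s_1,\tuple{\upsilon,!,c},s'_1)\in R_1$ of $\mathcal{T}_1$, it is clauses one and three that are the pertinent derivations, so this completes the analysis. (If one prefers to discard the other two clauses syntactically, one appeals to well-formedness of the underlying model; either way they contribute nothing new to the equivalence as stated.)

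The proof is essentially a definition-chase, so the ``hard part'' is a matter of care rather than of ideas: one must check that the two right-hand disjuncts, ``$\mathcal{T}$ listens and is ready to receive'' and ``$\mathcal{T}$ does not listen (and stands still)'', line up exactly with clauses one and three of the composition, including the $s'=s$ side condition in the discard case; and one must be sure the argument genuinely needs $c\neq\toall$. The latter point is the instructive one: for $c=\toall$ the broadcast clauses let the composed send proceed even when $\mathcal{T}$ offers no matching receive, and since $\toall\in\listen(s)$ always, neither right-hand disjunct would then hold — so the ``only if'' direction would fail. That failure is precisely the content of Lemma~\ref{lem:brdnblk}, which is why the blocking and non-blocking statements are separated.
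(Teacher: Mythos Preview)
Your proof is correct and follows essentially the same approach as the paper: a direct unfolding of Definition~\ref{def:par}, identifying clauses one and three of the multicast block as the only relevant ones for a send originating in $\mathcal{T}_1$ on a non-broadcast channel. You are in fact more explicit than the paper on several points --- you spell out both directions (the paper only sketches one and says the other ``follows similarly''), you handle the sub-case of clause one where $c\notin\listen(s)$ rather than asserting it cannot occur, and you add the instructive remark on why the hypothesis $c\neq\toall$ is essential for the ``only if'' direction. Your dismissal of clauses two and four (where $\mathcal{T}$ would be the sender) matches the paper's implicit treatment: both read the lemma as characterising the lift of the fixed send from $\mathcal{T}_1$.
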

\begin{proof}
We show that it is not sufficient to only have a sender on a multicast channel $c$ to enable a send transition at system level. We require that all listening/connected agents to that channel being able to jointly receive the transmitted message. This also implies that if no one is listening then the transition can happen. By Def.~\ref{def:par}, there exist only two cases where $((s_1,s),(\upsilon,!,c),(s'_1,s'))\in R_{\mathcal{T}_1\|\mathcal{T}}$ given that $(s_1,(\upsilon,!,c),s'_1)\in R_1$.
More precisely, if $(s_1,(\upsilon,!,c),s'_1)\in R_1$ then we have the following:
\begin{itemize}
\item $((s_1,s),(\upsilon,!,c),(s'_1,s'))\in R_{\mathcal{T}_1\|\mathcal{T}}$ because $(s,(\upsilon,?,c),s')\in R$. This implies that $c\in\listen(s)$; or

\item $((s_1,s),(\upsilon,!,c),(s'_1,s'))\in R_{\mathcal{T}_1\|\mathcal{T}}$ because 
($s=s'$) and $c\notin\listen(s)$.
\end{itemize}
Namely, either all receivers jointly participate or no one is listening, as otherwise the multicast on $c$ is blocked. The other direction of the proof follows similarly.
\end{proof}

\section{\rcp: Reconfigurable Communicating Programs}\label{sec:model}

We formally present the \rcp communication
formalism and its main ingredients. We start by specifying agents (or programs)
and their local behaviours.
We give semantics to individual agents in terms of channelled
transition systems (CTS).
Therefore, we use the parallel composition operator in
Def.~\ref{def:par} to compose the individual behaviour of the
different agents to generate a global (or a system) one.

While the CTS semantics makes it clear what are the capabilities of
 individual agents and their interaction, it may not be the most
  convenient in order to mechanically
analyse large systems comprised of multiple agents.
Thus, we provide a symbolic semantics at system level using discrete
systems.
This second semantics enables efficient analysis by representing
\emph{closed} systems through the usage of BDDs.
We show that the two semantics (when restricted to \emph{closed}
systems) coincide.
The efficient analysis of \emph{open} \rcp systems is left as future
work. 

We assume that agents agree on a set of common variables $\scv$, a set
of data variables $\sdat$, and a set of  channels $\schan$ containing
the broadcast channel $\toall$.
As explained, common variables are variables that are owned
(separately) by all agents. The values of these variables may be
different in different agents.
The common variables are used in order to have a common langauge to
express properties that are interpretable on all agents (as either
true or false).

\begin{definition}[Agent]\label{def:comp}
An agent is $
A_{\id}=\langle V_{\id}\coma
f_{\id}\coma \pfunc{g}{s}{\id}\coma
\pfunc{g}{r}{\id},$ $\pfunc{\trans}{s}{\id}\coma \pfunc{\trans}{r}{\id},\theta_{\id}\rangle
$, where:
\begin{itemize}[label={$\bullet$}, topsep=0pt, itemsep=0pt, leftmargin=10pt]
\item
  $V_{\id}$ is a finite set of typed local variables, each
  ranging over a finite domain. A state $\cstate{\id}{}$ is an
  interpretation of $V_{\id}$, i.e., if $\mathsf{Dom}(v)$ is the
  domain of $v$, then $\cstate{\id}{}$ is an element in
  $\compdom{\id}$. We use $V'$ to denote the primed copy of $V$ and  
  $\mathsf{Id}_{\id}$ to denote the assertion $\bigwedge_{v\in
    V_{\id}}v=v'$. 

\item
  $\func{f_{\id}}{\scv}{V_{\id}}$ is a renaming function, associating
  common variables to local variables.
  We freely use the notation $f_{\id}$ for the assertion
  $\bigwedge_{\cv\in \scv}\cv=f_{\id}(\cv)$. 

\item
  $\pfunc{g}{s}{\id}(V_{\id}, \schan, \sdat, \scv)$ is a send guard
  specifying a condition on receivers. That is, the predicate, obtained
  from $\pfunc{g}{s}{\id}$ after assigning $\cstate{\id}{}$, $\chan$, and $\datfun$ (an
  assignment to $\sdat$)
  , which is checked against every receiver $j$ after
  applying $f_{j}$.

\item
  $\pfunc{g}{r}{\id}(V_{\id}, \schan)$ is a receive guard describing
  the connection of an agent to channel $\chan$. We let
  $\pfunc{g}{r}{\id}(V_{\id}, \toall)$ $ = \true$, i.e., every agent
  is always connected to the broadcast channel. We note, however, that
  receiving a broadcast message could have no effect on an agent. 
\item
  $\pfunc{\trans}{s}{\id}(V_{\id}, V'_{\id}, \sdat, \schan)$ is an assertion
  describing the send transition relation.
\item
  $\pfunc{\trans}{r}{\id}(V_{\id}, V'_{\id}, \sdat, \schan)$ is an assertion
  describing the receive transition relation.
  We assume that agents are broadcast input-enabled, i.e.,
  $\forall v, \datfun\ \exists v'\ \such$ $\pfunc{\trans}{r}{\id}(v, v',
  \datfun, \toall)$.

  In examples, we use $\keep(X)$ to denote that the variables $X$ are
  not changed by a transition (either send or receive).
  More precisely, $\keep(X)$ is equivalent to the following assertion
  $\bigwedge_{x\in X}x=x'$. 
  
\item
  $\theta_{\id}$ is an assertion on $V_{\id}$ describing the
  initial states, i.e., a state is initial if it satisfies
  $\theta_{\id}$.  
\end{itemize}
\end{definition}

Agents exchange messages. A message (that we shall call \emph{an observation}) is defined
by the channel it is sent on (\chan), the data it carries (\datfun), the sender identity (\id),
and the assertion describing the possible local assignments to
common variables of receivers ($\pi$). 
Formally:

\begin{definition}[Observation]\label{def:obsrv}
An observation is a tuple
$m=\tuple{\chan,\datfun,\id,\pred}$, where $\chan$ is a channel, $\datfun$ is
an assignment to $\sdat$, $\id$ is an identity,
and $\pred$ is a predicate over \scv.
\end{definition}
In Def.~\ref{def:obsrv} we interpret $\pred$ as a set of possible assignments to common
variables $\scv$.
In practice, $\pred$ is obtained from
$\pfunc{g}{s}{\id}(\cstate{\id}{},\chan,\datfun,\scv)$ for an agent
$\id$, where $\cstate{\id}{}\in\compdom{\id}$ and $\chan$ and $\datfun$
are the channel and assignment in the observation. 
We freely use $\pi$ to denote either a predicate over $\scv$ or its
interpretation, i.e., the set of variable assignments $c$ such that $c
\models \pi$. We also use $\pred(f^{-1}_{\id}(s_{\id}))$ to denote the assignment of $v\in\scv$ by
     $s_{\id}(f_{\id}(v))$ in $\pred$.

The semantics of an agent $A_i$ is the CTS $\mathcal{T}(A_i)$ defined
as follows.
\begin{definition}[Agent Semantics]\label{def:agentsem}
  Given an agent $A_i$ we define $\mathcal{T}(A_i)=
\langle C,\Sigma,\Upsilon,S,S_0,R,L,\rulename{ls}\rangle$, where the components
of $\mathcal{T}(A_i)$ are as follows.
\begin{itemize}
\item
  $C = \schan$
\item
  $\Sigma = \prod_{v\in V_i} \mathsf{Dom}(v)$, i.e., the set of states
  of $A_i$
\item
  $\Upsilon = \Upsilon^+\times \{!,?\} \times \schan$ and $\Upsilon^+ =
  2^{\sdat} \times K \times 2^{2^{\scv}}$
\item
  $S=\Sigma$
\item
  $S_0 = \{ s\in S ~|~ \theta_i(s) \}$
\item
  $R =$
  $$\begin{array}{l r}
  \{(s,(\datfun,i,\pi,!,c),s') ~|~ \pfunc{\trans}{s}{\id}(s,s',\datfun,c) \mbox{ and }
  \pi=\pfunc{g}{s}{\id}(s,c,\datfun)\} & \cup \\
  \{(s,(\datfun,\id',\pi,?,c),s') ~|~ \pfunc{\trans}{r}{\id}(s,s',\datfun,c), \id'\neq \id, c\in \rulename{ls}(s), \mbox{ and }
  \pred(f^{-1}_{\id}(s_{\id}))
  \} 
\end{array}
  $$
\item
  $L(s)=s$
\item
  $\rulename{ls}(s) = \{ c\in C ~|~ \pfunc{g}{r}{\id}(s,c)\}$
\end{itemize}
\end{definition}

Generally, the semantics of an agent is defined as an \emph{open} CTS
$\mathcal{T}(A_i)$. The transition alphabet $\Upsilon$ of
$\mathcal{T}(A_i)$ is the set of observations (as in
Def.~\ref{def:obsrv}) that are additionally labelled with either send
(!) or receive (?) symbols, corresponding to send and receive
transitions. Furthermore, in every state $s$, an agent is listening
to the set of channels in $\listen(s)$. Namely, all channels that
satisfy the agent's receive guard $\pfunc{g}{r}{\id}$ in state $s$. 
We give further intuition for the definition of the transition
relation $R$.

A triplet $(s,\upsilon,s') \in R$, where
$\upsilon=(\datfun,\id,\pred,\gamma,\chan)$, if the following holds:
\begin{itemize}[label={$\bullet$}]
\item
  Case ($\gamma=!$): Agent $\id$ is a sender and we have that
  $\pred=\pfunc{g}{s}{\id}(s_{\id},\chan, \datfun)$, i.e.,
  $\pred$ is obtained from $\pfunc{g}{s}{\id}$ by assigning the
  state of $\id$, the data variables assignment $\datfun$ and the channel $\chan$, and
  $\pfunc{\trans}{s}{\id}(s_{\id},s'_{\id},\datfun,\chan)$ evaluates to
  $\true$. 
\item
  Case ($\gamma=?$): Agent $\id$ is a receiver (potentially) accepting
  a message from another agent $\id'$ on channel $c$ and data
  $\datfun$ with a send guard $\pi$ such that $c\in\listen(s)$, 
  $\pred(f^{-1}_{\id}(s_{\id}))$, and
  $\pfunc{\trans}{r}{\id}(s_{\id},s'_{\id},\datfun,\chan)$.
  Note that the condition $\id'\neq\id$ is required to ensure that the
  message is sent by another agent.
  
\end{itemize}

Intuitively, if the agent $\id$ is the sender, it determines the
predicate $\pi$ (by assigning $s_{\id}$, $\datfun$, and $\chan$ in $\pfunc{g}{s}{\id}$)
and $\id$'s send transition $\pfunc{\trans}{s}{\id}$ is satisfied by assigning $s_{\id}$,
$s'_{\id}$,  $\datfun$, and $\chan$ to it.
That is, upon sending the message with $\datfun$ on channel $\chan$ the sender changes the state from
$s_\id$ to $s'_{\id}$.
If the agent $\id$ is the receiver, it must satisfy the
condition on receivers $\pi$ (when translated to its local copies of the
common variables), it must be connected to $\chan$ (according to
$\pfunc{g}{r}{\id}$), and it must have a valid receive transition
$\pfunc{\trans}{r}{\id}$ when reading the data sent in $\datfun$ on
channel $\chan$. 

Note that the semantics of an individual agent is 
totally decoupled from the semantics of how agents
interact. 
Thus, different interaction modes (or parallel composition operators)
can be adopted without affecting the semantics of individual agents.
In our case, we have chosen to implement broadcast as a non-blocking send and non-blocking receive and a multicast as a blocking send and receive.
However, if one chooses to do so, other composition operators could be defined. 
For example, a point-to-point composition would allow only two agents to communicate over a channel and would not allow send without receive.

A set of agents agreeing on the common variables $\scv$,
data variables $\sdat$, and channels $\schan$ define a \emph{system}.
We define a CTS capturing the
interaction and 
then give a DS-like symbolic representation of the same system.

Let $S_{\id}$=$\Pi_{v\in V_{\id}}\mathsf{Dom}(v)$ be the set of states of agent $\id$ and
$S=\Pi_{\id}S_{\id}$ be the set of states of the whole system.
Given an assignment $s\in S$ we denote by $s_{\id}$ the projection of 
$s$ on $S_{\id}$.

\begin{definition}[\rcp System as a CTS]\label{def:ts}
  Given a set $\{A_{\id}\}_{\id}$ of agents,
  a system is defined as the parallel composition of the CTS representations of all $A_{\id}$, i.e., a system is  a CTS of the form $\mathcal{T}=\|_{\id\in I}\mathcal{T}(A_{\id})$.
  
  A triplet $(s,\upsilon,s')$, where
  $\upsilon=(\datfun,i,\pi,!,c)$ is in the transition relation of the
  composed system $\mathcal{T}$ (according to Def.~\ref{def:par}), if
  the following conditions hold: 
  \begin{itemize}[label={$\bullet$}, topsep=0pt, itemsep=0pt, leftmargin=10pt]
  \item
    There exists a sender $\id$ such that
    $(s_i,(\datfun,i,\pi,!,c),s'_i)\in R_{\id}$.
    By Def.~\ref{def:agentsem}, we know that\\ $(s_i,(\datfun,i,\pi,!,c),s'_i)\in R_{\id}$ iff 
    $\pred=\pfunc{g}{s}{\id}(s_{\id},\chan, \datfun)$, i.e.,
    $\pred$ is obtained from $\pfunc{g}{s}{\id}$ by assigning the
    state of $\id$, the data variables assignment $\datfun$ and the channel $\chan$, and
    $\pfunc{\trans}{s}{\id}(s_{\id},s'_{\id},\datfun,\chan)$ evaluates to
    $\true$. 
  \item
    For every other agent $\id'$ we have that either:
    \begin{enumerate}
    \item
      $c\in\listen^{\id'}(s_{\id'})$ and
      $(s_{\id'},(\datfun,i,\pi,?,c),s'_{\id'})\in R_{\id'}$.
      By Def.~\ref{def:agentsem}, we know that $c\in\listen^{\id'}(s_{\id'})$ and  
      $(s_{\id'},(\datfun,\id,\pi,?,c),s'_{\id'})\in R_{\id'}$ iff
      $\pfunc{g}{r}{\id'}(s_{\id'},c)$,
      $\pred(f^{-1}_{\id'}(s_{\id'}))$, and
      $\pfunc{\trans}{r}{\id'}(s_{\id'},s'_{\id'},\datfun,\chan)$,
      all evaluate to $\true$;  
     
   \item
     $c\notin\listen^{\id'}(s_{\id'})$ and $s_{\id'}=s'_{\id'}$.
     By Def.~\ref{def:agentsem} this is equivalent to $\neg\pfunc{g}{r}{\id'}(s_{\id'},\chan)$;
     or 
     
   \item
     $\chan=\star$ and $s_{\id'}=s'_{\id'}$. By
     Def.~\ref{def:agentsem} this is equivalent to
     $\neg\pred(f^{-1}_{\id'}(s_{\id'}))$. 
    \end{enumerate}
  \end{itemize}
\end{definition}
Intuitively, a message $(\datfun,\id,\pi,!,c)$ labels a
transition from $s$ to $s'$ if the sender $\id$ determines the
predicate (by assigning $s_{\id}$, $\datfun$, and $\chan$ in $\pfunc{g}{s}{\id}$)
and the send transition of $\id$ is satisfied by assigning $s_{\id}$,
$s'_{\id}$, $\datfun$, and $\chan$ to it, i.e., the sender changes the state from
$s_\id$ to $s'_{\id}$ and sets the data variables in the observation
to $\datfun$. All the other agents either (a) satisfy this
condition on receivers (when translated to their local copies of the
common variables), are connected to $\chan$ (according to
$\pfunc{g}{r}{\id'}$), and perform a valid transition when reading the
data sent in $\datfun$ on $\chan$, (b) are not connected to $\chan$
(according to $\pfunc{g}{r}{\id'}$) and all their variables do not
change, or (c) the channel is a broadcast channel, the agent
does not satisfy the condition on receivers, and all their variables
do not change.

In order to facilitate symbolic analysis, 
we now define a symbolic version of $\|_{k\in K}\mathcal{T}(A_{k})$,
under closed world assumption. That is, we only focus on messages that
originate from the system under consideration. In fact, from an
external observer point of view, only message sending is observable
while reception cannot be observed. This notion of observability is
the norm in existing theories on group communication~\cite{pra,ene}. 
Thus, we consider the paths of $\|_{k\in K}\mathcal{T}(A_{k})$ that
are of the form $\sigma=s_0,a_0,s_1,a_1,\ldots$ such that $a_j$ is of
the form $(\datfun,i,\pi,!,c)$, $s_0\in S_0$ and   
for every $j\geq 0$ we have $(s_j,a_j,s_{j+1})\in R$.
Note that $(\datfun,i,\pi,!,c)$ coincides with our definition of an
observation $m$. 

Thus, let $\Upsilon$ be the set of possible observations in $\|_{k\in K}\mathcal{T}(A_{k})$.
That is, let $\schan$ be the set of channels, $\mathcal{D}$ the
product of the domains of variables in $\sdat$, $K$ the set of
agent identities, and $\Pi(\scv)$ the set of predicates over $\scv$
then
$\Upsilon \subseteq \schan \times \mathcal{D}\times K\times
\Pi(\scv)$.
In practice, we restrict attention to predicates in $\Pi(\scv)$ that
are obtained from $\pfunc{g}{s}{\id}(V_{\id},\schan,\sdat, \scv)$ by assigning
$V_{\id}$ (a state of the agent with identity $\id$), $\schan$, and $\sdat$.

Furthermore, we extend the format of the
allowed transitions in the classical definition of a discrete system from assertions over an extended set of variables
to assertions that allow quantification.
\begin{definition}[Discrete System]\label{def:sys}
Given a set $\{A_{\id}\}_{\id}$ of agents, a system is defined as follows: $
S=\conf{\sysvar\coma\rho\coma\theta}
$, where $\sysvar=\bpcup{\id}{}$ and $\theta=\band{\id}{}{\theta_{\id}}$ and a state of the system is in $\sdom$.  
The transition relation of the system is characterised as follows:
\[
\begin{array}{l}
\rho:\ \exists \chan\ \exists
\sdat\ \bor{k}{}{\pfunc{\trans}{s}{k}(V_k, V'_k, \sdat, \chan)} \wedge\\
\qquad
\band{j\neq k}{}{}
\tuple{
\exists \scv. f_j\wedge 
\tuple{\begin{array}{l r}
\multicolumn{2}{l}{\pfunc{g}{r}{j}(V_j, \chan)\wedge \pfunc{\trans}{r}{j}(V_j, V'_j, \sdat, \chan) \wedge\ \pfunc{g}{s}{k}(V_k, \chan, \sdat, \scv)}
\\[4pt]
\vee & 
\neg\pfunc{g}{r}{j}(V_j, \chan)\wedge \mathsf{Id}_j
\\[4pt]
\vee & 
\chan=\toall\wedge \neg \pfunc{g}{s}{k}(V_k, \chan, \sdat, \scv)\wedge \mathsf{Id}_j

\end{array}}}
\end{array}
\]
\end{definition}

The transition relation $\rho$ relates a system state
$\sstate{}$ to its successors $s'$ given an
observation $m=\tuple{\chan,\datfun,k,\pred}$. 
Namely, there exists an agent $k$ that sends a message with data $\datfun$
(an assignment to $\sdat$) with assertion $\pred$ (an assignment to
$\pfunc{g}{s}{k}$) on channel $\chan$ and all other agents are either (a)
connected, satisfy the send predicate, and participate in the
interaction, (b) not connected and idle, or (c) do not satisfy the
send predicate of a broadcast and idle.
That is, the agents satisfying $\pred$ (translated to their local state
by the conjunct $\exists \scv.f_j$) and connected to channel $\chan$ 
(i.e., $\pfunc{g}{r}{j}(\cstate{j}{}, \chan)$) get the
message and perform a receive transition. As a result of interaction,
the state variables of the sender and these receivers might be
updated.
The agents that are \emph{not connected} to the channel (i.e.,
$\neg\pfunc{g}{r}{j}(\cstate{j}{}, \chan)$) do not
participate in the interaction and stay still.
In case of broadcast, namely when sending
on $\toall$, agents are always connected and the set of receivers 
not satisfying $\pred$ (translated again as above) stay still.
Thus, a blocking multicast arises when a sender is blocked until all
\emph{connected} agents satisfy $\pred\wedge f_j$.
The relation ensures that, when sending on
a channel that is different from the broadcast channel $\toall$, the set
of receivers is the full set of \emph{connected} agents.
On the broadcast channel agents who do not satisfy the send
predicate do not block the sender.

The translation above to a transition system leads to a natural
definition of a trace, where the information about channels, data,
senders, and predicates is lost. We extend this definition to
include this information as follows:

\begin{definition}[System trace]\label{def:systrace}
A system trace is an infinite sequence 
$\rho=\sstate{0}m_0,\sstate{1}m_1,\dots$ of system states and
observations such that $\forall t\geq 0$:
$m_t=\tuple{\chan_t,\datfun_t,k,\pred_t}$,
$\pred_t=\pfunc{g}{s}{k}(\cstate{k}{t},\datfun_t, \chan_t)$, and: 
\[
\begin{array}{l c l }
    \tuple{\sstate{t}
  \coma\sstate{t+1}}  \models  
{\pfunc{\trans}{s}{k}(\cstate{k}{t}, \cstate{k}{t+1},
\datfun_t, \chan_t)}  \wedge \\
\qquad\qquad\qquad
\band{j\neq k}{}{}
\tuple{\exists \scv. f_j\wedge 
\tuple{\begin{array}{l r}
\multicolumn{2}{l}{\pfunc{g}{r}{j}(\cstate{j}{t},\chan_t)\wedge \pfunc{\trans}{r}{j}(\cstate{j}{t}, \cstate{j}{t+1}, \datfun_t, \chan_t)\wedge\pred_t}\\[4pt]
\vee &

\neg\pfunc{g}{r}{j}(\cstate{j}{t},\chan_t)\wedge \cstate{j}{t}=\cstate{j}{t+1}\\[4pt]
\vee  &

\chan_t=\toall\wedge \neg \pred_t\wedge \cstate{j}{t}=\cstate{j}{t+1}

\end{array}}}
\end{array}
\]

\end{definition}

That is, we use the information in the observation to localize the
sender $k$ and to specify the channel, data values, and the
send predicate.

The following theorem states a full abstraction property~\cite{milner75}, namely that the CTS semantics of systems and their discrete counterpart define the same transition relation, under closed world assumption. That is, by considering the messages originating from the system under consideration as the only observations.
\begin{theorem}[Full abstraction]\label{thm:dlts=ds} Given a set of \rcp agents $\{A_{\id}\}_{\id}$, their discrete system representation, defined as $
S=\conf{\mathcal{V}\coma\rho\coma\theta}$, is semantically equivalent to the parallel composition of their CTS representation, defined as $\mathcal{T}=\|_{\id}\mathcal{T}(A_{\id})$, under closed world assumption. More precisely,

\begin{itemize}
\item for every assignment $s$ to system variables $\mathcal{V}$, it follows that: $\theta(s)$ iff  $s\in S_0$;
\item for all assignments $s$ and $s'$ to variables in $\mathcal{V}$
  and respectively in $\mathcal{V}'$ it follows that: $\rho(s,s')$ iff
  there exist assignment to data variables $\datfun$, a communication
  channel $\chan$, and an agent $i$ such that  
  $(s,(\datfun,i,\pi,!,\chan),s')\in R_{\mathcal{T}}$.

\end{itemize}

\end{theorem}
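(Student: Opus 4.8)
The plan is to prove the two bullet points separately, each by unfolding the relevant definitions and checking that the symbolic assertion and the CTS transition relation pick out exactly the same pairs.

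\medskip

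\noindent\textbf{Initial states.} The first bullet is essentially immediate. By Definition~\ref{def:sys} we have $\theta=\bigwedge_{\id}\theta_{\id}$, so $\theta(s)$ holds iff $\theta_{\id}(s_{\id})$ holds for every $\id$. On the CTS side, $\mathcal{T}=\|_{\id}\mathcal{T}(A_{\id})$, and by iterating the parallel-composition clause $S_0=S_0^1\times S_0^2$ of Definition~\ref{def:par} together with $S_0=\{s\mid\theta_i(s)\}$ from Definition~\ref{def:agentsem}, we get $S_0=\prod_{\id}\{s_{\id}\mid\theta_{\id}(s_{\id})\}$. These coincide. (One should note that associativity of $\|$, Lemma~\ref{lem:com+assoc}, is what lets us speak of $\|_{\id}\mathcal{T}(A_{\id})$ unambiguously and iterate the clause.)

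\medskip

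\noindent\textbf{Transition relation.} This is the substance. I would argue both directions by first fixing notation: given $\rho(s,s')$, read off the witnessing channel $\chan$, data assignment $\datfun$, and sender index $k$ supplied by the two outer existential quantifiers and the disjunction $\bigvee_k$ in Definition~\ref{def:sys}; conversely, given $(s,(\datfun,i,\pi,!,\chan),s')\in R_{\mathcal{T}}$, take $k=i$ and the same $\chan,\datfun$, and set $\pi=\pfunc{g}{s}{k}(s_k,\chan,\datfun)$. The key lemma to invoke is the characterisation of multicast/broadcast sends in the $n$-fold composition; rather than re-deriving it from Definition~\ref{def:par} by induction on the number of agents, I would lean on Lemmas~\ref{lem:brdnblk} and~\ref{lem:multi}, which already tell us that a system-level send $((s_k,s_{\mathrm{rest}}),(\upsilon,!,\chan),(s'_k,s'_{\mathrm{rest}}))$ decomposes as: $(s_k,(\upsilon,!,\chan),s'_k)\in R_k$, and for every other agent $j$, either $\chan\in\listen^j(s_j)$ and $j$ performs a matching receive $(s_j,(\upsilon,?,\chan),s'_j)\in R_j$, or $\chan\notin\listen^j(s_j)$ and $s_j=s'_j$ — with the extra broadcast escape hatch that when $\chan=\toall$ a non-interested agent ($\neg\pred(f_j^{-1}(s_j))$) may also idle. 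Then I substitute the agent-level semantics of Definition~\ref{def:agentsem}: $(s_k,(\datfun,k,\pi,!,\chan),s'_k)\in R_k$ unpacks to $\pi=\pfunc{g}{s}{k}(s_k,\chan,\datfun)$ and $\pfunc{\trans}{s}{k}(s_k,s'_k,\datfun,\chan)$; the receive triple $(s_j,(\datfun,k,\pi,?,\chan),s'_j)\in R_j$ unpacks to $\pfunc{g}{r}{j}(s_j,\chan)$, $\pred(f^{-1}_j(s_j))$, and $\pfunc{\trans}{r}{j}(s_j,s'_j,\datfun,\chan)$; and $\chan\notin\listen^j(s_j)$ is exactly $\neg\pfunc{g}{r}{j}(s_j,\chan)$. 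Matching these three cases against the three disjuncts inside $\bigwedge_{j\neq k}$ in Definition~\ref{def:sys}, where the conjunct $\exists\scv.\,f_j\wedge(\dots)$ is precisely the reindexing that turns a predicate over $\scv$ into the local statement $\pred(f_j^{-1}(s_j))$, gives the equivalence termwise.

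\medskip

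\noindent\textbf{Main obstacle.} The delicate point is the predicate $\pi$ and its relocation via the renaming functions $f_j$. On the CTS side $\pi$ is carried explicitly in the observation label and is tested against each receiver as $\pred(f_j^{-1}(s_j))$, whereas in Definition~\ref{def:sys} it appears implicitly, reconstructed inside each conjunct as $\exists\scv.\,f_j\wedge\pfunc{g}{s}{k}(V_k,\chan,\sdat,\scv)$ (and, negated, in the broadcast disjunct as $\chan=\toall\wedge\neg\pfunc{g}{s}{k}(\dots)$). I need to check carefully that "$\exists\scv.\,f_j(s_j,\scv)\wedge\pfunc{g}{s}{k}(s_k,\chan,\datfun,\scv)$" evaluates to the same truth value as "$s_j\models\pi$ under the $f_j$-translation", i.e. that $f_j$ being a (total, functional) renaming makes this existential pick out the unique common-variable assignment determined by $s_j$. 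A second, more bookkeeping obstacle is making the reduction from Definition~\ref{def:par}'s binary composition to the flat $n$-ary picture of Lemmas~\ref{lem:brdnblk}–\ref{lem:multi} rigorous: strictly one argues by induction on $|I|$, using commutativity/associativity to single out $k$ as one factor and the rest as the other, and one must confirm that $\listen$ of the composite is $\bigcup_j\listen^j$ so that "$\chan\in\listen^{\mathrm{rest}}$" iff some $j\neq k$ listens, and correspondingly that a blocked multicast at the top level forces \emph{every} listening $j$ to supply a receive. I expect the $f_j$/$\pi$ correspondence to be the real content and everything else to be careful but routine unfolding.
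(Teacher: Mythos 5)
Your proposal is correct and follows essentially the same route as the paper: both bullets are handled by unfolding Definitions~\ref{def:agentsem} and~\ref{def:sys} and appealing to Lemmas~\ref{lem:com+assoc}, \ref{lem:brdnblk} and~\ref{lem:multi} to match the sender/receiver case split of the symbolic relation against the composed CTS transitions (the paper's own proof is in fact less explicit than you are about the $\exists\scv.\,f_j$ reindexing and the lifting from binary to $n$-ary composition, which it handles informally via the same lemmas).
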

\begin{proof} We prove each statement separately. 
\begin{itemize}
\item For $k$ agents in the symbolic representation, $\theta$ characterises the set of system states $S'\subseteq\Pi_{\id}S_{\id}$ that satisfy the initial conditions of all agents, i.e., $\set{s~|~s=(s_{0},s_{1},\dots,s_{k})~\mbox{and}~s\models \band{\id}{}{\theta_{\id}}}$. Note that $(s_{0},s_{1},\dots,s_{k})\models \band{\id}{}{\theta_{\id}}$ iff $s_{0}\models\theta_0\wedge s_{1}\models\theta_1\wedge\dots\wedge s_{k}\models\theta_k$. By Def.~\ref{def:par} and Def.~\ref{def:agentsem} this is exactly the set of initial states $S_0$ in $\mathcal{T}=\|_{\id}\mathcal{T}(A_{\id})$; 
\item By Def.~\ref{def:sys}, we have that $\rho(s, s')$ evaluates to
  true if there exists a valuation $\datfun$ to $\sdat$ and a channel
  $\chan$ in $\schan$ 
  such that both of the following hold:
 \begin{itemize}[label={$\bullet$}, topsep=0pt, itemsep=0pt, leftmargin=10pt]
 \item
   There exists an agent $\id$ such that the send transition
   $\pfunc{\trans}{s}{\id}$ is satisfied by assigning to current
   \emph{local} state $s_i$, next local state $s'_i$ (i.e., the
   projection of the system states $s$ and $s'$ on agent $\id$), the
   valuation $\datfun$, and the communication channel $\chan$.
   According to the enumerative semantics in Def.~\ref{def:agentsem},
   agent $\id$ has an individual send transition given the
   current \emph{local} state $s_{\id}$, next local state $s'_{\id}$,
   valuation $\datfun$ to data variables, and 
   channel $\chan$.
   Namely, agent $\id$ has a send transition
   $(s_i,(\datfun,i,\pi,!,\chan),s'_i)\in R_{\id}$ such that  
   $\pred=\pfunc{g}{s}{\id}(s_{\id},\chan, \datfun)$, i.e.,
   $\pred$ is obtained from $\pfunc{g}{s}{\id}$ by assigning the
   state of $\id$, the data variables assignment $\datfun$ and the channel $\chan$, and
   $\pfunc{\trans}{s}{\id}(s_{\id},s'_{\id},\datfun,\chan)$ evaluates to
   $\true$.
 \item
    For every other agent $\id'$ we have that either:
    \begin{enumerate}
    \item it is connected (i.e., $\pfunc{g}{r}{\id'}(s_{\id'},\chan)$
      holds), satisfies the send predicate (i.e.,
      $\pred(f^{-1}_{\id'}(s_{\id'}))$ holds), and participates in the 
      interaction (i.e.,
      $\pfunc{\trans}{r}{\id'}(s_{\id'},s'_{\id'},\datfun,\chan)$
      holds).
      By Def.~\ref{def:agentsem}, we know that agent $\id'$ has an
      individual receive transition
      $(s_{\id'},(\datfun,\id,\pi,?,\chan),s'_{\id'})\in R_{\id'}$
      where $\chan\in\listen^{\id'}(s_{\id'})$; 
     
     \item it is not connected (i.e.,
       $\neg\pfunc{g}{r}{\id'}(s_{\id'},\chan)$ ) and
       $s_{\id'}=s'_{\id'}$. By Def.~\ref{def:agentsem}, agent $\id'$
       does not have a receive transition for this message. In other
       words, since $\chan\notin\listen^{\id'}(s_{\id'})$ then agent
       $\id'$ cannot observe this transmission 
    ;
     
     \item or the message is sent on a broadcast channel
       ($\chan=\star$) where agent $\id'$ does not satisfy the sender
       predicate (i.e., $\neg\pred(f^{-1}_{\id'}(s_{\id'}))$)  
       and $s_{\id'}=s'_{\id'}$. By Def.~\ref{def:agentsem} this is
       equivalent to ignoring this message by not implementing a
       receive transition. 
    \end{enumerate}
    So far, we have shown that every individual (send/receive
    transition) in the symbolic model has a corresponding one in the
    enumerative semantics of individual agents. It remains to show
    that the composition of these individual transitions according to
    $\rho$ in the symbolic model has exactly the same semantics of the
    parallel composition in Def.~\ref{def:par}. That is, $\rho(s, s')$
    iff
    for the identified $\datfun$, $i$, $\chan$ and $\pred$ we have 
    $(s,(\datfun,i,\pi,!,\chan),s')\in R_{\mathcal{T}}$, given
    the assignments $s$ and $s'$ to variables in $\mathcal{V}$ and
    respectively in $\mathcal{V}'$.

    The existential
    quantification on sender transitions in $\rho$ (i.e.,\\
    $\bor{k}{}{\pfunc{\trans}{s}{k}(V_k, V'_k, \sdat, \chan)}$)
    implies that the order of the composition is immaterial, namely
    any two systems states $(s_0,s_1,\dots,s_k)$ and
    $(s_1,s_0,\dots,s_k)$ that only differ in the order of individual
    agent' states are semantically equivalent. By
    Lemma~\ref{lem:com+assoc}, we have that parallel composition is
    commutative, and thus the order is immaterial under the
    enumerative system semantics as well.
    If $\rho(s,s')$ is due to a message exchange on the  broadcast channel
    $\toall$ then the non-blocking semantics of the broadcast is
    preserved by the transition relation of the CTS composition as
    stated in Lemma~\ref{lem:brdnblk}. Moreover, if $\rho(s, s')$ is
    due to a message exchange on a multicast channel $c$ then the
    blocking semantics of the multicast is preserved by the transition
    relation of the CTS composition as stated in
    Lemma~\ref{lem:multi}.
    Lastly, the universal quantification on all
    possible receivers in $\rho$ (i.e., $\band{j\neq k}{}{}$) follows
    by the CTS semantics of parallel composition in Def.~\ref{def:par},
    where a receive transition can be received jointly by different
    agents, and by the commutativity and associativity of parallel
    composition (Lemma~\ref{lem:com+assoc}) where the scope of a send
    transition can be extended to cover all possible receivers.

    The other direction of the proof follows in similar manners. 
  \end{itemize}
  
  \end{itemize}
%
%
%

\end{proof}

The following is a corollary of Theorem~\ref{thm:dlts=ds} to relate the traces arising from Def.~\ref{def:systrace} to that of Def.~\ref{def:ts}.
\begin{corollary}[Trace equivalence]{\label{lem:dlts=ds}} The traces of a symbolic system 
composed of a set of agents $\{A_{\id}\}_{\id}$ 
  are the paths of the induced CTS.
\end{corollary}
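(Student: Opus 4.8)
The plan is to read this off Theorem~\ref{thm:dlts=ds} by matching, position by position, the clauses of Def.~\ref{def:systrace} (system trace) with the clauses defining a path of the induced CTS $\mathcal{T}=\|_{\id}\mathcal{T}(A_{\id})$ — where, under the closed-world assumption, ``path'' means a maximal sequence $\sigma=\sstate{0},a_0,\sstate{1},a_1,\dots$ with every label $a_t$ of send form $(\datfun,k,\pred,!,\chan)$, exactly as discussed after Def.~\ref{def:ts}. The first thing I would pin down is the correspondence between the two kinds of step-decoration: an observation $m=\tuple{\chan,\datfun,k,\pred}$ occurring in a trace is sent to the CTS label $(\datfun,k,\pred,!,\chan)$, and conversely. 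This is a bijection between the set of observations and the set of send-labels in $\Upsilon$, because Def.~\ref{def:systrace} forces every trace observation to satisfy $\pred=\pfunc{g}{s}{k}(\cstate{k}{},\datfun,\chan)$, which is precisely the side condition attached to a send-transition label in Def.~\ref{def:agentsem} (and hence in Def.~\ref{def:ts}).

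With that fixed, the two halves of Theorem~\ref{thm:dlts=ds} do the work. By the first bullet, $\theta(s)$ holds iff $s\in S_0$, so a sequence is ``initialised'' in the symbolic sense iff it starts in an initial state of $\mathcal{T}$. For the step relation, I would observe that the matrix of the display in Def.~\ref{def:systrace}, instantiated with the concrete witnesses $\chan_t,\datfun_t,k,\pred_t$ carried by $m_t$, is literally the body of the existentially quantified assertion $\rho$ of Def.~\ref{def:sys} with the witnesses for $\exists\chan\,\exists\datfun$ supplied. Hence $\tuple{\sstate{t},\sstate{t+1}}$ satisfies that matrix iff $\rho(\sstate{t},\sstate{t+1})$ holds, and — by the second bullet of Theorem~\ref{thm:dlts=ds} together with the construction in its proof, which produces the CTS transition from exactly these witnesses — iff $(\sstate{t},(\datfun_t,k,\pred_t,!,\chan_t),\sstate{t+1})\in R_{\mathcal{T}}$. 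Conjoining these equivalences over all $t\geq0$ yields that $\sstate{0}m_0,\sstate{1}m_1,\dots$ is a system trace iff $\sstate{0},a_0,\sstate{1},a_1,\dots$ (with $a_t$ the label associated with $m_t$) is a send-labelled path of $\mathcal{T}$; since both sides are infinite sequences over the matched alphabets, the two sets coincide.

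The one delicate point, which I would flag as the main obstacle, is that Theorem~\ref{thm:dlts=ds} is stated with an \emph{existential} quantifier on the CTS side (``there exist $\datfun,\chan,i$ with $(s,(\datfun,i,\pi,!,\chan),s')\in R_{\mathcal{T}}$''), while a trace step commits to one specific observation $m_t$. So the corollary is not a verbatim restatement: one must verify that the particular witness appearing in a trace is exactly the one that the theorem's proof uses to build the CTS transition, and, in the other direction, that every send-labelled transition $(s,(\datfun,i,\pi,!,\chan),s')\in R_{\mathcal{T}}$ yields, via $m=\tuple{\chan,\datfun,i,\pi}$, a step meeting Def.~\ref{def:systrace} — here one re-invokes that $\pi=\pfunc{g}{s}{i}(s_i,\datfun,\chan)$ is forced by Def.~\ref{def:agentsem}, so the trace's side condition on $\pred_t$ is automatically satisfied. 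Once this witness-tracking is spelled out, nothing else is needed beyond the theorem.
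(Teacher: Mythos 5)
Your proposal is correct and follows essentially the same route as the paper, which states the corollary without further argument as an immediate consequence of Theorem~\ref{thm:dlts=ds}: matching Def.~\ref{def:systrace} step-by-step against the send-labelled paths of $\|_{\id}\mathcal{T}(A_{\id})$ under the closed-world restriction, using the two bullets of the theorem for initial states and transitions. Your explicit witness-tracking (relating the specific observation $m_t$ to the existentially quantified $\datfun$, $\chan$, $i$ in the theorem, with $\pred_t=\pfunc{g}{s}{k}(\cstate{k}{t},\datfun_t,\chan_t)$ forced on both sides) is exactly the bookkeeping the paper leaves implicit.
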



\section{Reconfigurable Manufacturing Scenario}\label{sec:exp}
We complete the details of the RMS example, informally described in
Section~\ref{sec:overview}.
Many aspects of the example are kept simple on purpose to aid the presentation. 
 
The system, in our scenario, consists of an assembly product line agent 
(\val{line}) and several types of task-driven robots.
We describe the behaviour of the product line and only robots of
type-$1$ (\taval) as these are sufficient for exposing all features of
\rcp. 

A product line is responsible for assembling the main parts and delivering 
the final product.
Different types of robots are responsible for sub-tasks, e.g., retrieving and/or
assembling individual parts.
The product line is generic and can be used to produce different products 
and thus it has to determine the set of resources, to recruit a team of robots, 
to split tasks, and to coordinate the final stage of production. 

Every agent has copies of the common variables: $\typecvar$ indicating its type
(e.g., $\val{line}$, $\taval$, $\tbval$, $\tcval$),
$\assigncvar$ indicating whether a robot is assigned, and $\readycvar$
indicating what stage of production the robot is in.
The set of channels includes the broadcast channel $\toall$ and
multicast channels $\{\msf{A},\ldots \}$.
For simplicity, we only use the multicast channel $\msf{A}$ and fix it to the line agent.
The set of data variables includes $\msgdvar, \nodvar,$ and  $\lnkdvar$,
indicating the type of the message, a number (of robots per type),
and a name of a channel respectively.

We note that when a data variable is not important for some message it is omitted
from the description. 

We start with the description of the line agent \val{line}.
We give a high-level overview of the protocol applied by the line
agent using the state machine in Fig.~\ref{fig:line}.
The states capture a partial evaluation of the state variables of the
agent.
In this case, the value of the state variable $\stlvar$.
Transitions labels represent guarded commands.
We use the format \qt{${\bf\langle \msf{\Phi}
    \rangle\ \datfun\ \msf{\color{red}!/?}\ \msf{\chan[v'_1=a_1;\dots
        v'_n=a_n]}}$} to denote a guarded command $\msf{cmd}$.  
Namely, the predicate $\msf{\Phi}$ is a condition on the current
assignment to local variables of an agent (and for receive transitions
also on data variables that appear in the message).
We freely use $\datfun$ to refer to an assignment to data variables.
Usually, we write directly only the value of the $\msgdvar$ variable
to avoid cluttering. 
Sometimes, we add the values of additional data variables.
Each guarded command is labelled with a role (\modif{!} for send and
\modif{?} for receive transitions); also with a channel name
$\msf{\chan}$ and a new assignment to local variables
$\msf{[v'_1=a_1;\dots v'_n=a_n]}$ to represent the side effects of the
interaction.  
For the line agent, the protocol consists of starting from the pending
state and sending a team formation broadcast.
This is followed by sending of an assembly multicast on the channel
stored in local variable $\lnklvar$ and updating the stage to $2$.
Finally, an additional assembly multicast on the same channel resets
the process.
We include below the full description with the guards and predicates.
Each transition in the state machine corresponds to a disjunct in
either the send or the receive transition predicate below. 
Variables that are not assigned in a transition are kept unchanged in
the predicate. 
The send and receive guards of the agent are only partially captured
in the state machine.

\begin{figure}
\centering
\includegraphics[scale=.3]{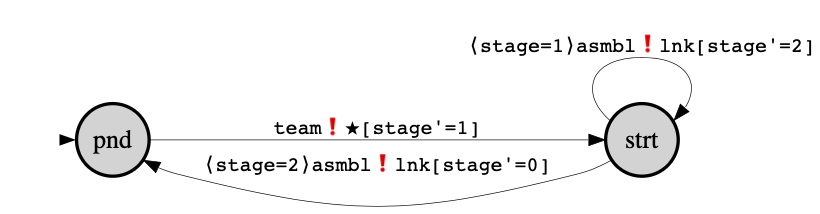}
\caption{Product Line Agent}
\label{fig:line}
\end{figure}

We now turn to the formal description of the line agent, starting with
its set of variables. 
In addition to copies of common variables (e.g., $f_l(\typecvar)$
$=\typelvar$), the  line agent has the following state variables:
$\stlvar$ is a state ranging over $\{\pendval, \startval\}$ (pending
and  start), $\lnklvar$ is the link of the product line, 
$\prdlvar$ is the id of the active product, and $\stagelvar$ is used
to range over the different stages of production.

The initial condition ${\theta_l}$ of a line agent is defined as follows:

\[
{\theta_{l}} : \stlvar=\pendval \wedge \stagelvar=\val{0}  \wedge \lnklvar=\msf{A} \wedge (\prdlvar=\val{1} \vee \prdlvar=\val{2})
\]

Thus, starting from the pending state, the line agent has a task of
assembling one of two products, and uses a multicast channel $\msf{A}$
to coordinate the assembly team. 
If there are multiple product lines, then each is initialised
with a dedicated channel. 

The send guard of the $\lineagent$ agent is of the following form:
\[
\begin{array}{l r r}
  \pfunc{g}{s}{l}  :  \chan{=}\star\wedge \neg\assigncvar \wedge 
 ( \prdlvar{=}\val{1}{\rightarrow}(\typecvar{=}\taval\vee \typecvar{=}\tbval))
    \wedge \\ 
    \qquad\qquad
        {(\prdlvar{=}\val{2}{\rightarrow}(\typecvar{=}\taval \vee \typecvar{=}\tcval))
        \vee
  \chan{=}\msf{lnk}\wedge \readycvar=\stagelvar}
\end{array}
\]

Namely, broadcasts are sent to robots whose $\assigncvar$ is false (i.e.,
free to join a team). If the identity of the product to be
assembled is 1, then the required agents are of types $\taval$ and
$\tbval$ and if the identity of the product is 2, then the required agents
are of types $\taval$ and $\tcval$. 
Messages on channel $\msf{A}$ (the value of $\lnklvar$) are sent to connected agents
when they reach a matching stage of production, i.e., $ \readycvar=\stagelvar$.
The receive guard of $\lineagent$ is $\chan=\star$, i.e., it is only connected to channel $\star$.

We may now proceed by explaining \rcp's send and receive transition
relations of the line agent in light of the state machine in
Fig.\ref{fig:line}. 
The send transition relation of $\lineagent$ is of the following form:
\[
\begin{array}{ll} 
  {\pfunc{\trans}{s}{l}}:\ 
  \keep(\lnklvar,\prdlvar,\typelvar,\assignlvar,\readylvar) \wedge \\
 \qquad\qquad\qquad\qquad
  {
    \left (
    \begin{array}{l l}
&  \stlvar=\pendval \wedge 
  \datfun(\msgdvar\mapsto \teamval;\nodvar\mapsto \val{2};\lnkdvar\mapsto\lnklvar)\\
\multicolumn{2}{r}{\wedge~ \stagelvar'=\val{1} \wedge
    \stlvar'=\startval \wedge \chan=\toall}\\[2ex]
\vee& \stlvar=\startval \wedge
  \datfun(\msgdvar\mapsto\assembleval) \wedge  \stagelvar=\val{1}\wedge\\
\multicolumn{2}{r}{\wedge~ \stlvar'=\startval \wedge \stagelvar'=\val{2}\wedge \chan=\lnklvar} \\[2ex]
\vee& \stlvar=\startval \wedge
    \datfun(\msgdvar\mapsto\assembleval) \wedge \stlvar'=\pendval \\
    \multicolumn{2}{r}{\wedge~ \stagelvar=\val{2}\wedge \stagelvar'=\val{0}\wedge \chan=\lnklvar}
    \end{array} \right ) }
\end{array}
\]
The $\lineagent$ agent starts in the pending state (see $\theta_l$).
It broadcasts a request ($\datfun(\msgdvar\mapsto\teamval)$) for two
robots ($\datfun(\nodvar\mapsto\val{2})$) per required type asking them to join
the team on the multicast channel stored in its $\lnklvar$ variable
($\datfun(\lnkdvar\mapsto\lnklvar)$).
According to the send guard, described before, if the identity of the product to assemble is 
1 ($\prdlvar=\val{1}$) the broadcast goes to type 1 and type 2 robots and if
the identity is 2 then it goes to type 1 and type 3 robots.
Thanks to channel mobility (i.e., $\msf{\datfun(\rulename{lnk})=lnk}$) a
team on a dedicated link can be formed incrementally at run-time. As a
side effects of broadcasting the $\teamval$ message, the line agent
moves to the start state $(\stlvar'=\startval)$ where the first stage
of production begins $(\stagelvar'=\val{1})$. 
In the start state, the line agent attempts an \rulename{assemble}
(blocking) multicast on $\msf{A}$.
The multicast can be sent only when the entire team completed the work
on the production stage (when their common
variable $\readycvar$ agrees with $\stagelvar$ as specified in the send guard).
One multicast increases the value of $\stagelvar$ and keeps
$\lineagent$ in the start state.
A second multicast finalises the production and $\lineagent$
becomes free again.

We set 
$
  \pfunc{\trans}{r}{l}\hspace{-1mm}:\keep(\msf{all})$
  as $\lineagent$'s recieve transition relation.
That is, $\lineagent$ is not influenced by incoming messages.

We now specify the behaviour of $\msf{\rulename{t1}}$-robots and show
how an autonomous and incremental one-by-one team formation is done
anonymously at run-time.
As before, we give a high-level overview of the protocol using the
state machine in Fig.~\ref{fig:robot}.
The team formation starts when unassigned robots are in pending states
($\pendval$).
From this state they may only receive a team message from a line
agent.
The message contains the number of required robots 
$\datfun(\nodvar)$ and a team link $\datfun(\lnkdvar)$.
The robots copy these values to their local variables
(i.e., $\lnklvar'=\datfun(\lnkdvar)$ etc.) and 
move to the start state ($\startval$).
From the start state there are three possible transitions:
\begin{itemize}[label={$\bullet$}, topsep=0pt, itemsep=0pt, leftmargin=10pt]
\item
  Join - move to state $\enval$ - a robot 
  joins the team by \emph{broadcasting} a $\formval$ 
  message to $\msf{\rulename{t1}}$-robots forwarding the
  number of still required robots
  ($\datfun(\rulename{no})=(\nolvar-\val{1})$) and the team link
  ($\msf{\datfun(\rulename{lnk})=lnk}$).
  This message is sent only if $\msf{no\geq\val{1}}$, i.e, at least one 
  robot is needed.
  From state
  $(\enval)$ the robot starts its mission.
\item
  Wait - stay in state $\startval$ - a robot \emph{receives} a $\formval$ message from a 
  robot, updating the number of still required robots (i.e., if 
  ${\datfun(\nodvar)>\val{0}}$).
\item
  Step back - return to state $\pendval$ - a robot \emph{receives} a
  $\formval$ message from a robot, informing that no more robots are
  needed, i.e., ${\datfun(\nodvar)=\val{0}}$.
  The robot disconnects from the team link, i.e., $\lnklvar'=\val{$\bot$}$.
  Thus it may not block interaction on the team link. 
\end{itemize}
After joining the team, a robot in state $\enval$ (i.e., with
$\stagebvar=\val{1}$) starts its mission
independently until it finishes
($\stagebvar'=\val{n}\wedge\readybvar'=1$).
We have used ($\dots$) to abstract the individual behaviour of the
robot in state ($\enval$).
In fact, each local step corresponds to a broadcast message
($\msf{local}$) that is hidden from other agents. This will be
clarified later in the send guard of
the robot which evaluates to false when ($\msf{local}$) is enabled. 

When all team robots finish their individual tasks (i.e., circled in
the self-loop on state $\enval$ while $\readybvar=1$ until $\stagebvar=n$), they become
ready to receive an $\assembleval$ message on $\msf{A}$, to start the
next stage of production (i.e, $\readybvar'=2$) while still staying in
$\enval$ state.

From this final stage (i.e.,
$\readybvar=\val{2}$) the robots are ready to receive the final
$\assembleval$ message to finalise the product and subsequently they
reset to their initial conditions.

As before, each transition corresponds to a disjunct in the send and
receive transition relations, which are fully specified later in this
section.

\begin{figure}
\centering
\includegraphics[scale=.35]{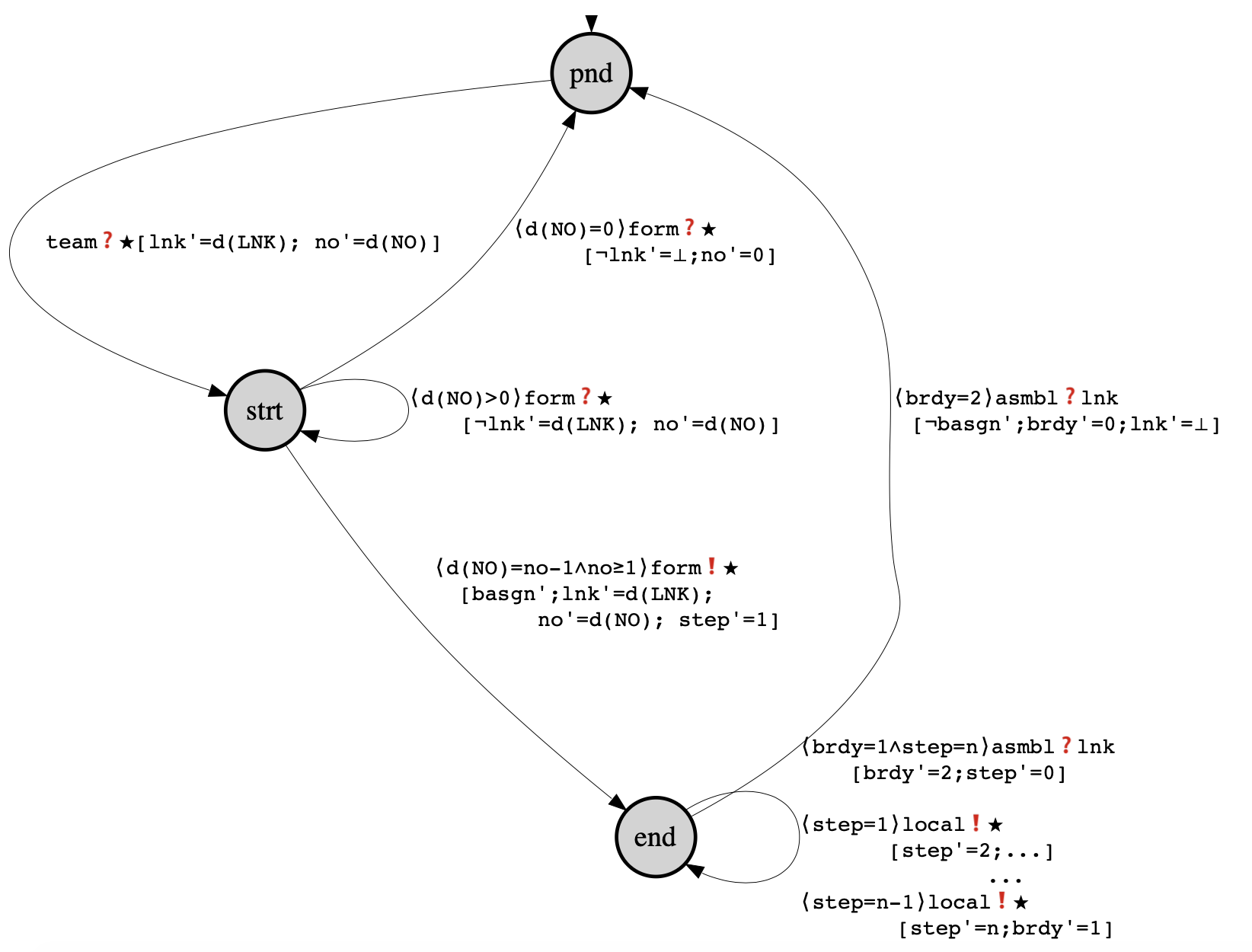}
\caption{The Agent of $\msf{\rulename{t1}}$-Robot}
\label{fig:robot}
\end{figure}

We now turn to the formal description of the robot, starting with its
set of variables. 
In addition to copies of common variables a $\msf{\rulename{t1}}$-robot has the following variables:
 $\stlvar$ ranges over $\set{\msf{\pendval, \startval, \val{end}}}$,
$\stagebvar$ is used to control the progress of individual behaviour,
$\msf{no}$ (resp. $\msf{lnk}$) is a placeholder to a number (resp. link) 
learned at run-time, and $f_b$ relabels common variables as follows: $f_b(\typecvar)=\typebvar$, $f_b(\assigncvar)=\assignbvar$ and $f_b(\readycvar)=\readybvar$.

Initially, a robot is in the pending state and is available for recruitment:

\[
\begin{array}{rcl}
{\theta_{b}} : \msf{(st=\pendval)\wedge(btype=\taval)\wedge \neg basgn\wedge(lnk=\bot)\wedge}\\
(\stagebvar=\readybvar=\nolvar=\val{0})
\end{array}
\]

The send guard of the robot is of the following form: 

\[
\begin{array}{lr}
\pfunc{g}{s}{b} : (\chan=\star)\wedge\datfun(\rulename{msg}\neq\msf{local})\wedge
  (\typecvar=\typebvar)\wedge \neg \assigncvar\ \vee\\ 
  \qquad\qquad\qquad\qquad(\chan=\star)\wedge\datfun(\rulename{msg}=\msf{local})\wedge (\assigncvar\wedge \neg \assigncvar)
\end{array}
\]

Interestingly, the send guard delimits the scope of the broadcast,
depending on the assignment to data variables.
Namely, it specifies that a robot either broadcasts to unassigned
robots of the same type if the message is not a local one
($\datfun(\rulename{msg}\neq\msf{local}$) or otherwise hides the
message from all other agents by broadcasting on a false predicate
(i.e., the tautology $\assigncvar\wedge \neg \assigncvar$). Note that
 such message
cannot be received by any agent, and it can be regarded as a local
computation.
Thus, it becomes very easy to distinguish the individual behaviour of
an agent from its interactions with the rest of the system.

The receive guard specifies that a $\msf{\rulename{t1}}$-robot is connected  either to
a broadcast $\star$ or to a channel matching the value of its link variable:
\[
\begin{array}{rcl}
\pfunc{g}{r}{b} : \chan=\star \vee \chan=\msf{lnk}.
\end{array}
\]

Finally, we report the send $\pfunc{\trans}{s}{b}$ and receive
$\pfunc{\trans}{r}{b}$ transition predicates below. 

\[
\begin{array}{ll} 
  {\pfunc{\trans}{s}{b}}:
  \keep(\lnklvar,\typebvar) \wedge \\
  \qquad\qquad\qquad\qquad
  {
    \tuple{
    \begin{array}{l l}
& 
\stlvar=\startval \wedge \datfun(\msgdvar\mapsto \formval;\lnkdvar\mapsto\lnklvar;\nodvar\mapsto\nolvar-\val{1)}\\
\multicolumn{2}{r}{\wedge~ (\nolvar\geq \val{1)}\wedge\stagebvar=\val{0}\wedge\stagebvar'=\val{1}\wedge\stlvar'=\enval
    }\\
  \multicolumn{2}{r}{\wedge \keep(\readybvar) \wedge \assignbvar'\wedge (\nolvar'=\val{0})\wedge
 \chan=\toall
  }\\[2ex]
  
  \vee& \stlvar=\stlvar'=\enval \wedge
  \datfun(\msgdvar\mapsto\localval) \wedge \chan=\toall \wedge\\
\multicolumn{2}{r}{
    \stagebvar=\val{1}\wedge\stagebvar'=\val{2}\wedge \keep(\assignbvar,\nolvar,\readybvar)}\\
&\vdots\qquad[\msf{\text{\scriptsize\rulename{individual behavior}}}]\\[2pt]
\vee& \stlvar=\stlvar'=\enval \wedge\datfun(\msgdvar\mapsto\localval)
\wedge\chan=\toall \wedge\stagebvar=\val{n-1}\\
	\multicolumn{2}{r}{
    \wedge\stagebvar'=\val{n}\wedge\readybvar'=\val{1} \wedge \keep(\assignbvar,\nolvar)}

    \end{array} }}
\end{array}
\]

\[
\begin{array}{ll} 
  {\pfunc{\trans}{r}{b}}:
  \keep(\typebvar) \wedge \\
  \qquad\qquad
  {
    \tuple{
    \begin{array}{l l}
& 
\stlvar=\pendval \wedge \datfun(\msgdvar\mapsto
\teamval)\wedge\stlvar'=\startval \wedge  \chan=\toall \wedge \\
\multicolumn{2}{r}{\wedge~ \lnklvar'=\datfun(\lnkdvar)\wedge\nolvar'=\datfun(\nodvar)\wedge
  \keep({\assignbvar},{\readybvar},{\stagebvar})
}\\[2ex]
  
\vee& \stlvar=\stlvar'=\startval \wedge \datfun(\msgdvar\mapsto
\formval)\wedge \datfun(\nodvar)>\val{0}\wedge\chan=\toall \wedge\\
\multicolumn{2}{r}{  \keep({\assignbvar},{\readybvar},{\stagebvar})\wedge\lnklvar'=\datfun(\lnkdvar)\wedge\nolvar'=\datfun(\nodvar)
 
    }\\[2ex]
\vee& \stlvar=\startval \wedge \datfun(\msgdvar\mapsto \formval;\nodvar\mapsto\val{0})\wedge
 \chan=\toall \wedge \stlvar'=\pendval\wedge\\
\multicolumn{2}{r}{\wedge   \keep({\assignbvar},{\readybvar},{\stagebvar})\wedge \lnklvar'=\bot\wedge\nolvar'=\val{0}
    }\\[2ex]
    
    \vee& \stlvar=\enval \wedge \datfun(\msgdvar\mapsto \assembleval)\wedge
 \readybvar=\val{1}\wedge\chan=\lnklvar \wedge \stagebvar=\val{n}
 \wedge \\
\multicolumn{2}{r}{\wedge~ \keep(\assignbvar,\lnklvar) \wedge\stlvar'=\enval\wedge\readybvar'=\val{2}\wedge\stagebvar'=\val{0}
    }\\[2ex]
    \vee& \stlvar=\enval \wedge \datfun(\msgdvar\mapsto \assembleval)\wedge
 \readybvar=\val{2}\wedge\chan=\lnklvar\\
\multicolumn{2}{r}{\wedge~ \stlvar'=\pendval\wedge\readybvar'=\val{0}
    \wedge \lnklvar'=\bot\wedge\neg\assignbvar'
    }
    \end{array} }}
\end{array}
\]

\section{\textsc{LTOL}: An extension of \textsc{LTL} }\label{sec:logic}
We introduce \ltal, an extension of \ltl with the ability to refer and therefore 
	reason about agents interactions.
  We replace the next operator of \ltl with the observation descriptors:
  \emph{possible} $\nxt{O}$ and \emph{necessary} $\alws{O}$, to refer to messages and the intended set of receivers.
  The syntax of formulas $\phi$ and \emph{observation descriptors} $O$ is as follows:
\smallskip

\[
\begin{array}{@{}l@{\ }r@{\,\,}c@{\,\,}l@{}}
&
\phi &::=& v \mid
  	\neg v \mid
  	\phi \lor \phi \mid
  	\phi \wedge \phi \mid
  	\phi \Until \phi \mid  
  	\phi \Release \phi \mid
  	\nxt{O} \phi \mid
  	\alws{O} \phi\\[2pt]
&
O & ::= &  
\cv \mid \neg \cv \mid \chan \mid \neg \chan \mid  k \mid \neg k \mid 
  	\dat \mid  \neg \dat\mid \exis{O} \mid \all{O} \mid O \lor O 
	  \mid O \wedge O
\end{array}
\]

We use the classic abbreviations $\Impl,\Iff$ and 
the usual definitions for $\true$ and $\false$. We also introduce
the temporal abbreviations $\f\phi\equiv 
\true\Until\phi$ (\emph{eventually}), $\g\phi\equiv\neg\f\neg\phi$
(globally) and $\varphi\WaitFor\psi\equiv\psi\Release
(\psi\vee\varphi)$ (\emph{weak until}). Furthermore we assume that
all variables are Boolean because  
every finite domain can be encoded by multiple Boolean
variables. For convenience we will, however, use non-Boolean
variables when relating to our RMS example.  

The syntax of \ltal is presented in \emph{positive normal form} to
facilitate translation into alternating B\"uchi automata (ABW) as
shown later. That is, we push the negation down to atomic propositions. We, therefore, use $\overline{\Theta}$ to denote the
dual of formula  
$\Theta$ where $\Theta$ ranges over either $\phi$ or $O$.
Intuitively, $\overline{\Theta}$ is obtained from $\Theta$ by switching 
$\vee$ and $\wedge$ and by applying dual to sub formulas, e.g.,
$\overline{\phi_1 \Until \phi_2} = \overline{\phi_1} \Release 
\overline{\phi_2}$, $\overline{\phi_1 \wedge \phi_2} = \overline{\phi_1} \vee 
\overline{\phi_2},\ $ $\overline{\cv} = \neg \cv$, and $\overline{\exis{O}} = 
\all{\overline{O}}$. 

Observation descriptors are built from referring to the
different parts  
of the observations and their Boolean combinations. Thus, they
refer to the channel in $\schan$, the data  
variables in $\sdat$, the sender $k$, and the predicate over common variables 
in $\scv$.
These predicates are interpreted as sets of possible assignments to
common variables, and therefore we include existential $\exis{O}$ and
universal $\all{O}$ quantifiers over these assignments.

The semantics of an observation descriptor $O$ is defined for an
observation
$m = \tuple{\chan\coma \datfun\coma k\coma \pred}$ as follows:

\[
\begin{array}{l l l @{\qquad\qquad } | @{\qquad\qquad }l l l}
 m \models \chan'& \text{\bf iff} & \chan = \chan'\quad 
 &\quad 

m \models \neg \chan' & \text{\bf iff} & \chan \neq \chan'\\

 m \models \dat' & \text{\bf iff} &\datfun(\dat') &\quad 

 m \models \neg \dat' & \text{\bf iff} & \neg \datfun(\dat') \\

m \models k' & \text{\bf iff} & k = k'&\quad 
m \models \neg k' & \text{\bf iff} & k \neq k' 
\end{array}
\]
\[
\begin{array}{lcc}
			m \models \cv\quad \textbf{iff}\quad \text{for
                          all}\ c \in \pred\ \text{we have}\ 
			c \models \cv&\\

			m \models \neg\cv \ \ \textbf{iff}\ \ \text{there is}\ c \in \pred\ \text{such that}\ 
			c \not\models \cv&\\

			m \models \exis{O} \ \ \textbf{iff}\ \ \text{there is}\ c \in \pred \ \text{such that}\ 

			\tuple{\chan, d, k, \{c\}} \models O&\\

			m \models \all{O} \ \ \textbf{iff}\ \ \text{for all}\ c \in \pred\ \text{it holds that}\ 
			\tuple{\chan, d, k, \{c\}} \models O&\\
		
			m \models O_1 \vee O_2 \quad \textbf{iff}\quad \text{either}\ m \models O_1\ \text{or}\ m 
			\models O_2&\\

			m \models O_1 \wedge O_2 \quad \textbf{iff}\quad  m \models O_1\ \text{and}\ m 
			\models O_2&
	\end{array}
	\]

We only comment on the semantics of the descriptors $\exis{O}$ and
$\all{O}$ as the rest are standard propositional formulas. 
The descriptor $\exis{O}$ requires that at least one
assignment $c$ to the common variables in the sender predicate
$\pred$ satisfies $O$.
Dually $\all{O}$ requires that all assignments in $\pred$ satisfy
$O$.
Using the former, we express properties where we require that the
sender predicate has a possibility to satisfy $O$ while using the
latter we express properties where the sender predicate can only
satisfy $O$.
For instance, both observations $\tuple{\chan, \datfun, k, \cv_1\vee\neg\cv_2}$ 
and $\tuple{\chan, \datfun, k, \cv_1}$
satisfy $\exis{\cv_1}$ while only the latter satisfies
$\all{\cv_1}$.
Furthermore, the observation descriptor $\all{\false}\wedge\chan=\star$ says
that a message is sent on the broadcast channel with a false
predicate.
That is, the message cannot be received by other agents.
In our RMS example in Section~\ref{sec:exp}, the
descriptor $\exis{(\typecvar=\taval)} \wedge 
\all{(\typecvar=\taval)}$ says that the message is intended exactly for
robots of type-$1$.

Note that the semantics of $\exis{O}$ and $\all{O}$ (when nested) ensures that 
the outermost cancels the inner ones, e.g., $\exis{(O_1\vee 
  (\all{(\exis{O_2})}))}$ is equivalent to $\exis{(O_1\vee O_2)}$.
Furthermore, when $\cv$ and respectively $\neg\cv$ appear outside the
scope of a quantifier ($\all{}$ or $\exis{}$), they are semantically
equivalent to the descriptors $\all{\cv}$ and respectively
$\exis{\neg\cv}$. 
Thus, we assume that they are written in the latter normal 
form.

We interpret \ltal formulas over system computations:
\begin{definition}[System computation]~\label{def:syscomp}
A system computation $\rho$ is a function from natural numbers $N$ to
$\Exp{\sysvar}\times M$ where $\sysvar$ is the set of state variable
propositions and $M=\schan\times 2^{\sdat}\times K\times\dexp{\scv}$
is the set of possible observations. That is, $\rho$ includes values
for the variables in $\Exp{\sysvar}$ and an observation in $M$ at each
time instant.
\end{definition}

We denote by $\sstate{\id}$ the system state at the $i$-th 
time point of the system computation.
Moreover, we denote the suffix of $\tracevar{}$ starting with the $i$-th
state by $\tracevar{\geq i}$ and we use $m_{i}$ to denote the observation
$\tuple{\chan, \datfun, k, \pred}$ in $\tracevar{}$ at time point
$i$.

The semantics of an \ltal formula $\varphi$ is defined for a computation 
$\tracevar{}$ at a time point $\id$ as follows:\smallskip

	\[\begin{array}{lcc}		
			\tracevar{\geq i}\models v\ \ \textbf{iff}\ \ \sstate{\id} \models v\quad
		\text{and} \quad
			 \tracevar{\geq i}\models \neg v\ \ \textbf{iff}\ \  \sstate{\id} 
			 \not\models v;\\		
	
			\tracevar{\geq i}\models \phi_2\vee\phi_2\ \ \textbf{iff}\ \ \tracevar{\geq 
			i}\models\phi_1\ \text{ or }\ \tracevar{\geq 
			i}\models\phi_2;\\

			\tracevar{\geq i}\models \phi_2\wedge\phi_2\ \ \textbf{iff}\ \ 
			\tracevar{\geq i}\models\phi_1\ \text{ and }\ \tracevar{\geq 
			i}\models\phi_2;\\

			\tracevar{\geq i}\models \phi_1\Until\phi_2\ \ \textbf{iff}\ \text{there exists}\ 
			j\geq i\ \text{s.t.}\ \tracevar{ \geq j} \models \phi_2\ \text{and,}\\ \hfill\text{for 
			every}\ i\leq k<j,\ \tracevar{\geq k}\models\phi_1;\qquad\\

			\tracevar{\geq i}\models \phi_1\Release\phi_2\ \ \textbf{iff}\ \text{for every}\ j 
			\geq i \ \text{either}\  \tracevar{\geq j} \models \phi_2\ \text{or},\\ \hfill\text{there exists}\ i \leq 
			k<j, \tracevar{\geq k}\models\phi_1;\quad \\

			\tracevar{\geq i}\models\nxt{O}\phi\ \ \textbf{iff}\ \ m_i\models O\ 
			\text{ and }\  \tracevar{\geq i+1}\models\phi;\\			
	
			\tracevar{\geq i}\models\alws{O}\phi\ \ \textbf{iff}\ m_i\models O\ 
			\text{ implies }\  \tracevar{\geq i+1}\models\phi.\\
\end{array}\]\medskip

Intuitively, the temporal formula $\nxt{O}\phi$ is satisfied on the computation 
$\tracevar{}$ at point $\id$ if the observation $m_i$ satisfies $O$ \emph{and}
$\phi$ is satisfied on the suffix computation $\tracevar{\geq \id+1}$.
On the other hand, the formula $\alws{O}\phi$ is satisfied on the computation 
$\tracevar{}$ at point $\id$ if $m_i$ satisfying $O$ \emph{implies}
that $\phi$ is satisfied on the suffix computation $\tracevar{\geq \id+1}$.
Other formulas are interpreted exactly as in \ltl.

With observation descriptors we can refer to the intention of agents in the interaction.
For example, the following descriptor 
\[O:=\exis{(\typecvar=\taval)}\wedge\exis{(\typecvar=\tbval)}\wedge
\all{(\typecvar=\taval \vee \typecvar=\tbval)}\] specifies that the target of the
message is \qt{exactly and only} type-1 and type-2 robots.
This descriptor can be used later to specify that whenever the line agent \qt{$l$} recruits for a product with identity 1, 
 it notifies both type-1 and type-2 robots as follows:
\[\g((\prdlvar=\val{1} \wedge \stlvar=\pendval \wedge \nxt{l\wedge \chan=\toall}\true)
\rightarrow \nxt{O}\true)\]
 Namely, whenever the line agent is in the
pending state and tasked with product $1$ it notifies both type-1 and
type-2 robots by a broadcast. 
The pattern ``After $q$ have exactly two $p$ until $r$''
\cite{MR15,DAC99} can be easily expressed in \ltl and can be used to
check the formation protocol. 
Consider the following formulas:
\[
\varphi_1:=\langle \msgdvar=\teamval\wedge \nodvar=\val{2} \wedge
  \exis{(\typecvar=\taval)}\rangle\true\] specifying that a team message is sent to
type-1 robots and requires two robots,
\[\varphi_2:=\nxt{\msgdvar=\formval \wedge \exis{(\typecvar=\taval
    )}}\true\] specifying that a formation message is sent to type-1
robots, and
    
\[\varphi_3:=\nxt{\chan=\msf{A}}\true\] specifying that a message is sent on
channel $\msf{A}$.

Now, the template
``After $\varphi_1$ have exactly two $\varphi_2$ until $\varphi_3$''
specifies that whenever a team message is sent to robots of type-1
requiring two robots, then two form messages destined for type-1
robots will follow before using the multicast channel.
That is, two type-1 robots join the team before a (blocking) multicast
on channel $\msf{A}$ may become possible.

We can also reason at a
local rather than a global level.
For instance, we can specify that robots follow a ``correct''
utilisation of channel $\msf{A}$. Formally,
\[O_1(t):=\msgdvar{=}\teamval \wedge
  \exis(\typecvar{=}\val{t})\] specifies that a team message is sent
  to robots of type $\val{t}$;
  \[O_2(k,t):=\msgdvar{=}\formval \wedge
  \neg k \wedge
  \nodvar{=}\val{0} \wedge \exis{(\neg \assigncvar \wedge
    \typecvar{=}\val{t})}\] specifies that a robot different from $k$ sends a
  form message specifying that no more robots are needed and this message is
  sent to unassigned type $\val{t}$ robots;
  \[O_3(t):=\msgdvar{=}\assembleval \wedge \chan=\msf{A} \wedge
    \readycvar{=}\val{2} \wedge \exis{(\typecvar{=}\val{t})}\] specifies that an
    assembly message is sent on channel $\msf{A}$ to robots of type
    $\val{t}$ who reached stage 2 of the production.
    Thus, for robot $k$ of type $\val{t}$, the formulas

    \begin{equation}\label{eq:local}
    \begin{array}{l@{\ \ }l}
    \rom{1} & \varphi_1(t):=(\lnklvar{\neq} \msf{A}) \WaitFor \nxt{O_1(t)}\true\\
      \rom{2}&\varphi_2(k,t):=\g(\alws{O_2(k,t)\vee O_3(t)}\varphi_1(t))
    \end{array}
   \end{equation}

    state that: \rom{1} robots are not connected to channel $\msf{A}$ until they get
a team message, inviting them to join a team; \rom{2} if either they
are not selected ($O_2(k,t)$) or they finished production after selection
 ($O_3(t)$) then they disconnect again until the next team message.
This reduces to checking the  ``correct'' utilisation of channel $\msf{A}$ to individual
level, by verifying these properties on all types of robots
independently. 
%
%
By allowing the logic to relate to the set of targeted robots, verifying all
 targeted robots separately entails the
correct \qt{group usage} of channel $A$.


%
\comment{
\begin{figure*}
$\small
\begin{array}{c}
\band{k}{}{
\tuple{
\begin{array}{c}
\msf{prd}_k=1
\end{array}
}\to  
\tuple{(\anglebr{k\wedge \msf{\chan=lnk}_k}\false)\ \Until 
\bnxt{
\begin{array}{l c l l}
\msf{d(\rulename{msg})=team} \wedge k\wedge \msf{\chan=\star\wedge 
d(\rulename{no})=2}\wedge\\
\exis{(\msf{cv_1=\rulename{t1}\vee cv_1=\rulename{t2}})}\wedge
{\msf{d(\rulename{lnk})=lnk}_k}\\
\end{array}
}\true}}\qquad\qquad\qquad\qquad\qquad\\[4ex]

\begin{array}{cc}
\wedge\band{k'\neq k}{}{}{\tuple{
\begin{array}{c}
\msf{\neg asgn}_{k'}\wedge \\ 
(\msf{role}_{k'}=\rulename{t1}\vee\\ 
\msf{role}_{k'}=\rulename{t2})
\end{array}}
}\to
\tuple{
\anglebr{
\begin{array}{l c l l}
\msf{d(\rulename{msg})=team} \wedge
 k\wedge\\ {\msf{d(\rulename{lnk})=lnk}_k}\wedge \msf{\chan=\star}\wedge\\
\exis{(\msf{cv_1=\rulename{t1}\vee cv_1=\rulename{t2}})}
\end{array}
}
\tuple{
\tuple{\msf{lnk}_{k'}=\msf{d(lnk)\wedge\msf{ready}_{k'}}}
 \Release\  
{
\anglebr{\begin{array}{c}
\msf{d(\rulename{msg})=assemble}\wedge\\ k\wedge 
\msf{\chan=lnk}_k\end{array}}\false
}}}
\end{array}
\end{array}
$

\end{figure*}
}
%
%

\subsection{The Satisfiability and the Model Checking problems of \ltal}
In this section, we improve our early results on satisfiability and model checking of \ltal, presented in the \rulename{aamas} version~\cite{rcp} of this article. In that version, we computed an \expspace upper bound  for both problems with respect to the set of common variables $\scv$ that appear in the observation descriptors and \pspace upper bound with respect to the rest of the input. This result was not surprising as the semantics of observations requires quantification on the assignments to common variables $\scv$ appearing in $O$. Indeed, the number of assignments to $\scv$ is doubly exponential in the size of $\scv$, i.e, the number of assignments is $\Exp{\Exp{\size{\scv}}}$. Both problems require translation to Nondeterministic B\"uchi Automata (NBW), and a direct translation would incur a double exponential blowup in the size of the automaton with respect to $\size{\scv}$. Thus, a membership in \expspace with respect to $\size{\scv}$ follows from the membership in \nlogspace of the nonemptiness problem for NBW. 

In this article, we improved the latter results to \pspace, matching
the lower bound. This is achieved by a novel automaton
construction. Namely, we introduce a further dependency between the
formula and the alphabet that is read by the automaton. Thus, the
automaton does not read concrete messages but it rather partitions
messages into sets, according to their effects on the truth values of
subformulas of the formula.
  
Before we proceed with the automaton construction, we fix the sets of system variables $\sysvar$, the communication channels $\schan$, the data
variables $\sdat$, the identities of agents $K$, and the common variables $\scv$. 

Our direct construction in~\cite{rcp} considers a \emph{state-alphabet}
$\Sigma=2^{\sysvar}$ and a \emph{message-alphabet} $M=\schan \times
{\sdat} \times K \times \dexp{\scv}$.
Clearly, the message-alphabet is doubly-exponential in $\scv$ and implies
that the decision procedures based on $M$ would be in
\expspace (with respect to $\scv$).
However, $M$ is ``too large'' for the automaton (c.f., \cite{Yan06}).
Thus, we consider a smaller alphabet that is derived from the
observation descriptors appearing in the formula.
This alphabet is at most exponential in the size of the formula
(allowing for \pspace analysis).
To achieve \pspace analysis, we have to extend the decision procedures to further  consider  \emph{observation-alphabet
satisfiability} and  \emph{observation-alphabet model-checking}, as we will see below.

Recall the alphabets $\Sigma$ and $M$ above and fix an \ltal formula
$\varphi$.
Let $\obs{\varphi}$ be the set of observations appearing ``top-level" in
the operators $\nxt{\cdot}$ and $\alws{\cdot}$ in $\varphi$.
More precisely, $\obs{\varphi}$ is closed under the subformula
relation of $\varphi$, but is not closed under the subformula relation
of $O$.
Consider $\varphi_2(k,t)$ in Equation~\ref{eq:local}: 
\[\obs{\varphi_2(k,t)}=\set{O_2(k,t)\vee O_3(t), O_1(t)}\]
We denote by $\size{\obs{\varphi}}$ the size of the set
$\obs{\varphi}$.
We denote by $\length{O}$ the length of the observation $O$ and by
$\length{\obs{\varphi}}$ the sum of lengths of observations in
$\varphi$. 
Note that $\length{\obs{\varphi}}$ is bounded by the size of
$\varphi$.
Thus, we may now define an \emph{observation-alphabet}
$\mcal{O}=2^{\obs{\varphi}}$, that is 
at most exponential in the size of $\varphi$. We will use this alphabet
to enable \pspace analysis. 

In our construction, the automaton reads words from the alphabet
$(\Sigma\times\mcal{O})^\omega$ while system computations are derived
from the alphabet $(\Sigma\times M)^\omega$. 

Intuitively, an automaton word $w\in(\Sigma\times\mcal{O})^\omega$ and
a system computation $\rho\in(\Sigma\times M)^\omega$ agree on the
state-alphabet $\Sigma$ and only differ in their treatment to
messages.
Formally, given a word $w=(\sigma_0,\smallo_0),(\sigma_1,\smallo_1),\dots$, and a system
computation 
$\rho = (\sigma'_0,m_0),(\sigma'_1,m_1),\dots$.
We say that $\rho$ satisfies $w$ if for every $i\geq 0$ we have that
$\sigma_i'=\sigma_i$ and for every $O\in \obs{\varphi}$ we have
$m_i\models O$ iff $O\in \smallo_i$.
Note that $m_i\models O$ follows the semantics of observation
descriptors.
Thus, a word $w$ defines a language over system computations. 

More precisely, for a word $w\in (\Sigma\times \mcal{O})^\omega$ we denote by
$\lang{w}$ the set of system computations satisfying $w$. 
We say that $w$ is \emph{non empty} if there is some system computation satisfying
it, i.e., if $\lang{w}\neq \emptyset$.
Furthermore, 
for a letter $\smallo\in\mcal{O}$, we denote by $\bcal{M}(\smallo)=\{m\in M
~|~ \forall O\in \obs{\varphi} ~.~ O\in \smallo \iff m\models O\}$
the set of models of $\smallo$.
That is, all the messages
that satisfy all the observations in $\smallo$ and do not satisfy all
the observations that are not in $\smallo$. 
We say that $\smallo$ is \emph{non empty} if $\bcal{M}(\smallo)\neq
\emptyset$.

Clearly, a word $w=(\sigma_0,\smallo_0),(\sigma_1,\smallo_1),\dots$ is non empty if and only if for every $i\geq 0$ we have that
$\smallo_i$ is non empty.

We show that satisfiability of \ltal can be reduced to finding a
word $w$ such that the set of system computations satisfying $w$ is not
empty.
Similarly, model checking is reduced to building an automaton for $\neg \varphi$ and identifying a word $w$ satisfying $\neg \varphi$ and
a computation $\rho$ of the system under study such that $\rho$
satisfies $w$.

The following theorem states that the set of computations satisfying a
given formula are exactly the ones satisfying words accepted
by some finite automaton on infinite words.

\begin{theorem} \label{thm:main}
For every \ltal formula $\varphi$, there is an Alternating B\"uchi Automaton (ABW) $A_{\varphi}=\langle Q,
  \Sigma, \mcal{O}, \transmain,$ $q_0, F\subseteq Q \rangle$ such that
$\bigcup_{w \in \lang{A_{\varphi}}} \lang{w}$ is exactly the set of
computations satisfying the formula $\varphi$. 
\end{theorem}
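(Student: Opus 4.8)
The plan is to adapt the classical Vardi--Wolper translation from LTL to alternating B\"uchi automata (Theorem~\ref{vardiwolper}) to the two new next-time operators $\nxt{O}$ and $\alws{O}$ of \ltal. The key point that makes this possible within an alphabet of size only $\Exp{\size{\varphi}}$ is that the automaton never evaluates an observation descriptor $O$ against a concrete message: it only consults the current letter $\smallo\in\mcal{O}=2^{\obs{\varphi}}$, read as ``the set of top-level descriptors of $\varphi$ that hold at this position''. Whether a given $\smallo$ is realizable by an actual message is an orthogonal matter, handled later via the sets $\bcal{M}(\smallo)$, and it does not enter this construction at all.

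\textbf{The automaton.} Let $Q$ consist of the subformulas of $\varphi$ (its closure under the subformula relation of $\varphi$ --- recall $\obs{\varphi}$ is closed under that relation while each descriptor $O$ is treated as a single unit), together with $\true$ and $\false$; put $q_0=\varphi$, so $\size{Q}=\bigo{\size{\varphi}}$. The transition function $\transmain$, mapping $Q\times(\Sigma\times\mcal{O})$ to positive Boolean formulas over $Q$, is defined exactly as in the LTL case on the Boolean and temporal operators: $\transmain(v,(\sigma,\smallo))$ is $\true$ if $v\in\sigma$ and $\false$ otherwise (dually for $\neg v$); $\transmain$ distributes over $\vee$ and $\wedge$; $\transmain(\varphi_1\Until\varphi_2,(\sigma,\smallo))=\transmain(\varphi_2,(\sigma,\smallo))\vee(\transmain(\varphi_1,(\sigma,\smallo))\wedge(\varphi_1\Until\varphi_2))$; and $\transmain(\varphi_1\Release\varphi_2,(\sigma,\smallo))=\transmain(\varphi_2,(\sigma,\smallo))\wedge(\transmain(\varphi_1,(\sigma,\smallo))\vee(\varphi_1\Release\varphi_2))$. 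For the new operators set $\transmain(\nxt{O}\psi,(\sigma,\smallo))=\psi$ if $O\in\smallo$ and $\false$ otherwise, and $\transmain(\alws{O}\psi,(\sigma,\smallo))=\psi$ if $O\in\smallo$ and $\true$ otherwise; this is well-defined because $O\in\obs{\varphi}$ whenever $\nxt{O}$ or $\alws{O}$ is a top-level operator of $\varphi$. Finally let $F$ be the set of states that are not $\Until$-formulas (the $\Release$-formulas, $\true$, and the propositional atoms), so that, as in the standard construction, an infinite branch of a run may stay forever in a $\Release$-subformula (accepting) but not in an $\Until$-subformula (rejecting), enforcing that every eventuality is eventually discharged.

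\textbf{Correctness.} For a computation $\rho$ with $\rho_i=(\sigma_i,m_i)$ define its \emph{abstraction} $w^\rho=(\sigma_0,\smallo_0)(\sigma_1,\smallo_1)\cdots$ by $\smallo_i=\set{O\in\obs{\varphi}\mid m_i\models O}$. By Definition~\ref{def:syscomp} and the definition of $\lang{w}$, $\rho\in\lang{w}$ holds iff $w=w^\rho$; in particular $m_i\models O\iff O\in\smallo_i$ for all $i$ and all $O\in\obs{\varphi}$ precisely when $w=w^\rho$. The heart of the argument is then the lemma: for every subformula $\psi$ of $\varphi$ and every $i\geq0$, $\tracevar{\geq i}\models\psi$ iff the ABW obtained from $A_\varphi$ by replacing the initial state with $\psi$ accepts the $i$-th suffix of $w^\rho$. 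This is proved by the usual pair of inductions --- turning a satisfying suffix into an accepting run tree, and reading satisfaction off a given accepting run tree --- the only non-textbook cases being $\nxt{O}\psi$ and $\alws{O}\psi$, where the clause ``$m_i\models O$ (resp.\ implies)'' of the \ltal semantics matches the transition $\transmain(\cdot,(\sigma_i,\smallo_i))$ exactly because $m_i\models O\iff O\in\smallo_i$ by the definition of $w^\rho$; the $\Until/\Release$ cases and the role of $F$ are identical to the LTL argument. Instantiating the lemma at $\psi=\varphi$, $i=0$ yields $\rho\models\varphi\iff w^\rho\in\lang{A_\varphi}$. Hence $\rho\in\bigcup_{w\in\lang{A_\varphi}}\lang{w}$ iff there is $w\in\lang{A_\varphi}$ with $\rho\in\lang{w}$, i.e.\ iff $w^\rho\in\lang{A_\varphi}$ (since $\rho\in\lang{w}$ forces $w=w^\rho$), i.e.\ iff $\rho\models\varphi$, which is the claimed equality. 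Words $w\in\lang{A_\varphi}$ with $\lang{w}=\emptyset$ --- e.g.\ those using an unrealizable letter --- simply contribute nothing to the union, so the equality is unaffected; the $\bigo{\size{\varphi}}$-state ABW can later be turned into an NBW via Theorem~\ref{prop:aton} for the complexity analysis.

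\textbf{Main obstacle.} Nothing in the run-tree bookkeeping is new; the subtle point is pinning down the right alphabet. On the one hand $\mcal{O}$ must be \emph{fine enough} that the abstraction $\rho\mapsto w^\rho$ records the truth value of every descriptor $\varphi$ can test, so that no two computations $\varphi$ should separate collapse to the same word (this secures one direction and the faithfulness of the lemma). On the other hand it must be \emph{coarse enough} --- a subset of $\obs{\varphi}$ rather than a concrete message --- that the doubly-exponential message alphabet $M=\schan\times2^{\sdat}\times K\times\dexp{\scv}$ is avoided and $\size{\mcal{O}}\leq\Exp{\size{\varphi}}$. Checking that $\obs{\varphi}$ --- closed under the subformula relation of $\varphi$ but \emph{not} under the subformula relation of the descriptors --- is exactly this granularity, in particular that each $\nxt{O}$/$\alws{O}$ transition is well-defined without ever having to ``open up'' $O$, is the step to get right; everything else follows the template of Theorem~\ref{vardiwolper}.
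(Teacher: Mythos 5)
Your construction and correctness argument are essentially the paper's own: the same subformula state space, the same transition function over the alphabet $\Sigma\times\mcal{O}$ (including the $\nxt{O}$/$\alws{O}$ clauses keyed on $O\in\smallo$), and the same structural induction, with the abstraction $\rho\mapsto w^\rho$ making explicit what the paper leaves implicit in its definition of $\lang{w}$. Your only deviations are cosmetic and harmless: adding $\true,\false$ as states and taking $F$ to be all non-$\Until$ states rather than exactly the $\Release$-subformulas, which is equivalent since an infinite branch of a run can only remain forever in an $\Until$- or $\Release$-state.
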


Notice that for a given word $w$, either \emph{all} the computations that
satisfy $w$ satisfy $\varphi$ or \emph{all} the computations that
satisfy $w$ do not satisfy $\varphi$ (i.e., satisfy
$\overline{\varphi}$).
In the first case $w$ is accepted by $A_{\varphi}$ and in the second it
is not accepted by $A_{\varphi}$.
Thus, the definition of $\mcal{O}$ is such that words partition the
computations to equivalence sets that are uniform with respect to the
satisfaction of $\varphi$. 

\begin{proof} 
The set of states $Q$ is the set of all sub formulas of $\varphi$ with $\varphi$ 
being the initial state $q_0$.
The automaton has two alphabets, namely the state-alphabet $\Sigma = 
2^{\sysvar}$ and the observation alphabet $\mcal{O}=
2^{\obs{\varphi}}$.
The set $F$ of accepting states consists of all sub formulas of the form 
$\phi_1 \Release \phi_2$.
The transition relation $\transmain: Q \times \Sigma \times \mcal{O} \rightarrow 
\B^{+}(Q)$ is defined inductively on the structure of $\varphi$, as follows:

\begin{itemize}[label={$\bullet$}, topsep=0pt, itemsep=0pt, leftmargin=10pt]
	\item
		$\transmain(v, \sigma, \smallo) = \true$ if $v\in\sigma$ and $\false$ otherwise;

	\item
		$\transmain(\neg v, \sigma, \smallo) = \true$ if $ v \not \in \sigma$ and 
		$\false$ otherwise;

	\item
		$\transmain(\phi_1 \vee \phi_2, \sigma, \smallo) = \transmain(\phi_1, \sigma, \smallo) 
		\vee \transmain(\phi_2, \sigma, \smallo)$;

	\item
		$\transmain(\phi_1 \wedge \phi_2, \sigma, \smallo) = \transmain(\phi_1, \sigma, 
		\smallo) \wedge \transmain(\phi_2, \sigma, \smallo)$;

	\item
		$\transmain(\phi_1 \Until \phi_2, \sigma, \smallo) = \transmain(\phi_1, \sigma, 
		\smallo) \wedge\phi_1 \Until \phi_2 \vee \transmain(\phi_2, \sigma, \smallo)$;

	\item
		$\transmain(\phi_1 \Release \phi_2, \sigma, \smallo) = (\transmain(\phi_1, 
		\sigma, \smallo) \vee \phi_1 \Release \phi_2) \wedge \transmain(\phi_2, \sigma, 
		\smallo)$;
	
	\item
	  $\transmain(\nxt{O}\phi_1, \sigma, \smallo) =
          \left \{ \begin{array}{l l} \phi_1 & O\in \smallo\\ \false &
            O\notin\smallo \end{array} \right .$;
		
	\item
	  $\transmain(\alws{O}\phi_1, \sigma, \smallo)=
          \left \{ \begin{array}{l l} \phi_1 & O \in \smallo \\ \true &
            O\notin \smallo \end{array} \right .$.
\end{itemize}

The proof of correctness of this construction proceeds by induction on
the structure of $\varphi$.

We prove that when $A_{\varphi}$ is in state $\phi_1$, it accepts exactly
all computations that satisfy $\phi_1$. The base cases (i.e., state
variable propositions) follow from the definition of $\transmain$
while other cases follow from the semantics of $\varphi$ and the
induction hypothesis. The construction ensures that a computation
can only satisfy $\phi_1\Until\phi_2$, if it has a suffix satisfying
$\phi_2$; otherwise $A_{\phi}$ will have an infinite path stuck in
$\phi_1\Until\phi_2$ which is not accepting. 
\end{proof}

Note that, from Theorem~\ref{thm:main}, the number of states in $A_{\varphi}$ 
is linear in the size of $\varphi$, i.e., $\size{Q}$ is in $\bigo{\size{\varphi}}$. 
The size of the transition relation $\size{\transmain}$ is in
$\bigo{\size{Q}^2.\size{\Sigma}.\size{\mcal{O}}}$, i.e.,
it is in ${\size{\varphi}^2.\Exp{\bigo{\size{\varphi}}}}$.
Finally, the size of
the alternating automaton $\size{A_{\varphi}}$ is
in $\bigo{\size{Q}.\size{\transmain}}$, i.e., $\size{A_{\varphi}}$ is
in $\size{\varphi}^3.\Exp{\bigo{\size{\varphi}}}$.

By Theorem~\ref{thm:main} and Proposition~\ref{prop:aton}, we have that:
\begin{corollary}
  For every formula $\varphi$ there is an NBW $N_\varphi$ with a
  state-alphabet $\Sigma=2^{\sysvar}$ and an observation-alphabet  
  $\mcal{O}=2^{\obs{\varphi}}$ where $N_\varphi=\langle Q, \Sigma, \mcal{O},
  S^0, \delta, F\rangle$ and $\bigcup_{w \in \lang{N_{\varphi}}}
  \lang{w}$ is exactly the set of computations satisfying $\varphi$ such
  that:  

\begin{itemize}[label={$\bullet$}, topsep=0pt, itemsep=0pt, leftmargin=10pt]
\item $\size{Q}$ is in $\Exp{\bigo{\size{\varphi}}}$
and 
  $\size{\delta}$ is in
$\bigo{\size{Q}^2.\size{\Sigma}.\size{\mcal{O}}}$, i.e.,
$\size{\delta}$ is in $\Exp{\bigo{\size{\varphi}}}$.

\item The required space for building the automaton is
  $\nlog{(\size{Q}.\size{\delta})}$, i.e., it is in
  $\bigo{\size{\varphi}}$ 

\item The size of the B\"uchi automaton is $\size{Q}.\size{\delta}$,
  i.e., $\size{N}$ is in $\Exp{\bigo{\size{\varphi}}}$.
\end{itemize}
\label{cor:1}
\end{corollary}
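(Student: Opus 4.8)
The plan is to chain the two results already in hand. First I would apply Theorem~\ref{thm:main} to $\varphi$, obtaining the ABW $A_\varphi$ over the alphabet $\Sigma\times\mcal{O}$ with $\size{Q}$ in $\bigo{\size{\varphi}}$ and with the bounds on $\size{\transmain}$ and $\size{A_\varphi}$ recorded immediately after that theorem. Then I would invoke Theorem~\ref{prop:aton}, the alternation-removal construction, to turn $A_\varphi$ into an equivalent NBW $N_\varphi$ over the \emph{same} alphabet $\Sigma\times\mcal{O}$ (treated as a single alphabet in the sense of Theorem~\ref{prop:aton}), so that $\lang{N_\varphi}=\lang{A_\varphi}$ as languages of $\omega$-words. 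Since every word $w\in(\Sigma\times\mcal{O})^\omega$ contributes exactly the set $\lang{w}$ of system computations that satisfy it, equivalence at the level of words immediately yields $\bigcup_{w\in\lang{N_\varphi}}\lang{w}=\bigcup_{w\in\lang{A_\varphi}}\lang{w}$, which by Theorem~\ref{thm:main} is precisely the set of computations satisfying $\varphi$. This settles correctness; what remains is purely quantitative bookkeeping.

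For the state count, Theorem~\ref{prop:aton} sends the $\bigo{\size{\varphi}}$ control states of $A_\varphi$ to $\Exp{\bigo{\size{\varphi}}}$ states of $N_\varphi$, each representable as a pair of subsets of the states of $A_\varphi$ (which are themselves subformulas of $\varphi$), hence describable by $\bigo{\size{\varphi}}$ bits. For the transition relation, $\delta$ is a map from $Q\times\Sigma\times\mcal{O}$ to $\Exp{Q}$, so it has $\size{Q}\cdot\size{\Sigma}\cdot\size{\mcal{O}}$ entries, each of size at most $\size{Q}$, giving $\size{\delta}$ in $\bigo{\size{Q}^2\cdot\size{\Sigma}\cdot\size{\mcal{O}}}$. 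Here $\size{\Sigma}=\Exp{\size{\sysvar}}$ and $\size{\mcal{O}}=2^{\size{\obs{\varphi}}}$ with $\size{\obs{\varphi}}\le\size{\varphi}$ (and, for the satisfiability instance, $\sysvar$ restricted to the variables occurring in $\varphi$), so all three factors --- and therefore $\size{\delta}$ --- are in $\Exp{\bigo{\size{\varphi}}}$; the overall size $\size{N_\varphi}=\size{Q}\cdot\size{\delta}$ is again in $\Exp{\bigo{\size{\varphi}}}$. The space bound is the standard on-the-fly argument: a state of $N_\varphi$ is a pair of subsets of the $\bigo{\size{\varphi}}$ states of $A_\varphi$, and, given a letter of $\Sigma\times\mcal{O}$, its $\delta$-successors are computed directly from $\transmain$ of $A_\varphi$ --- itself read off the syntax of $\varphi$ by the inductive clauses in the proof of Theorem~\ref{thm:main} --- using only a constant number of such subsets as workspace. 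Hence $N_\varphi$ can be generated in space $\nlog{(\size{Q}\cdot\size{\delta})}$, i.e.\ in $\bigo{\size{\varphi}}$, which is what subsequently enables the \pspace decision procedures for satisfiability and model checking.

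The only genuine subtlety --- and the step I would write out most carefully --- is the exponent accounting: the whole reason for reading from $\mcal{O}=2^{\obs{\varphi}}$ instead of from the doubly-exponential message alphabet $M=\schan\times 2^{\sdat}\times K\times\dexp{\scv}$ is to keep every bound a \emph{single} exponential in $\size{\varphi}$. I would make explicit that $\size{\mcal{O}}\le\Exp{\size{\varphi}}$ because $\size{\obs{\varphi}}\le\size{\varphi}$, and that the blow-up in Theorem~\ref{prop:aton} is exponential in the \emph{number of states} of $A_\varphi$ --- which is linear in $\size{\varphi}$ --- and not in $\size{A_\varphi}$, which already carries an exponential alphabet factor; consequently the two exponentials do not compound. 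Everything else is mechanical substitution of the bounds of Theorem~\ref{thm:main} into Theorem~\ref{prop:aton}.
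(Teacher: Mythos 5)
Your proposal is correct and follows essentially the same route as the paper, which obtains Corollary~\ref{cor:1} precisely by chaining Theorem~\ref{thm:main} with the alternation-removal result of Theorem~\ref{prop:aton} and then doing the size/space bookkeeping you spell out. Your explicit remarks that the exponential blow-up is in the number of ABW states (linear in $\size{\varphi}$) rather than in $\size{A_\varphi}$, and that $\size{\mcal{O}}\le\Exp{\size{\varphi}}$ because $\size{\obs{\varphi}}\le\size{\varphi}$, are exactly the accounting the paper leaves implicit.
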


\comment{
It also relies on an auxiliary function $\transaux: O \times M \rightarrow 
\bb$ to evaluate observations and is defined recursively on $O$ as follows:

\begin{itemize}[label={$\bullet$}, topsep=0pt, itemsep=0pt, leftmargin=10pt]
	\item
		$\transaux(\cv, m) = \band{}{}{\hspace{-.5mm}_{c \in m[4]}\ c(\cv)}\ $  and   \\
		$\transaux(\neg \cv, m)=\bor{}{}{\hspace{-1mm}_{c \in m[4]}\ \neg c(\cv)}$;

	\item
		$\transaux(\chan, m) = \true$ if $m[1] = \chan$ and $\false$ otherwise;

	\item
		$\transaux(\neg \chan, m) = \true$ if $m[1]\neq\chan$ and $\false$ 
		otherwise;

	\item
		$\transaux(\dat, m) = \true$ if $m[2](\dat)$ and $\false$ otherwise;

	\item
		$\transaux(\neg \dat, m) = \true$ if $\neg m[2](\dat)$ and $\false$ 
		otherwise;

	\item
		$\transaux(k, m) = \true$ if $m[3] = k$ and $\false$ otherwise;

	\item
		$\transaux(\neg k, m) = \true$ if $m[3]\neq k$ and $\false$ otherwise;

	\item
		$\transaux(O_1 \vee O_2, m) = \transaux(O_1, m) \vee \transaux(O_2, m)$ 
		and  \\
		$\transaux(O_1 \wedge O_2, m) =\transaux(O_1, m) \wedge \transaux(O_2, m)$;

	\item
		$\transaux(\exis{O}, m) = \bor{}{}{\hspace{-1mm}_{c \in m[4]}\ \transaux(O, m[c/4])}$;

	\item 
		 $\transaux(\all{O},m)=\band{}{}{\hspace{-.5mm}_{c\in m[4]}\ \transaux(O,m[c/4])}$.
\end{itemize}

Furthermore the
function $\transaux$ can be computed in
$\bigo{\size{O}.\size{\scv}}$ time and in
$\bigo{\log{\size{O}}+\log{\size{\scv}}}$ space. 
}

\begin{theorem}\label{thm:satisfiability}
The satisfiability problem of \ltal is \pspace-\rulename{complete} with respect to 
$\size{\varphi}$.
\end{theorem}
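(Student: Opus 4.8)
The two directions --- \pspace-hardness and \pspace-membership --- are argued separately; hardness is standard, and membership carries the work.

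For hardness, the plan is to embed plain \ltl. The observation descriptor $\cv\vee\neg\cv$ is a tautology over observations: for a nonempty predicate one disjunct holds, and for the empty predicate $\cv$ holds vacuously. Hence $\nxt{\cv\vee\neg\cv}\phi$ behaves exactly like the \ltl next operator $\X\phi$, and since \ltal already has Boolean connectives, $\Until$, and $\Release$, every \ltl formula over propositions $P\subseteq\sysvar$ becomes an \ltal formula of the same size that imposes no constraint on the observation part of a computation. As a system computation (Def.~\ref{def:syscomp}) is an \emph{arbitrary} function $N\to\Exp{\sysvar}\times M$, this formula is \ltal-satisfiable iff the original one is \ltl-satisfiable, and \pspace-hardness follows from the classical \pspace-hardness of \ltl satisfiability.

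For membership, I would combine Corollary~\ref{cor:1} with an on-the-fly nonemptiness search. By that corollary there is an NBW $N_\varphi$ over $\Sigma\times\mcal{O}$ with $\Exp{\bigo{\size{\varphi}}}$ states, each nameable by $\bigo{\size{\varphi}}$ bits, whose transitions are computable on the fly in space $\bigo{\size{\varphi}}$, and with $\bigcup_{w\in\lang{N_\varphi}}\lang{w}$ equal to the set of computations satisfying $\varphi$. Thus $\varphi$ is satisfiable iff some $w\in\lang{N_\varphi}$ has $\lang{w}\neq\emptyset$, and --- as noted before Theorem~\ref{thm:main} --- $\lang{w}\neq\emptyset$ iff every observation letter $\smallo$ occurring in $w$ is non empty, i.e.\ $\bcal{M}(\smallo)\neq\emptyset$. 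The procedure therefore guesses a lasso through the state space of $N_\varphi$ and, at each step, also verifies that the guessed letter $\smallo\in\mcal{O}=\Exp{\obs{\varphi}}$ is non empty. Guessing the lasso uses $\bigo{\size{\varphi}}$ space, and the per-step check --- which I call the \emph{observation-alphabet satisfiability} problem --- is in \pspace, as described below; so the whole test runs in nondeterministic polynomial space, hence in \pspace.

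The heart of the argument, and the main obstacle, is showing that observation-alphabet satisfiability is in \pspace: given $\smallo\subseteq\obs{\varphi}$, decide whether some message $m\in M$ satisfies $\set{O\in\obs{\varphi}\mid m\models O}=\smallo$. Naive enumeration fails because the predicate component $\pred$ of a message ranges over $\dexp{\scv}$ objects. The plan is to observe that, for the at most $\size{\varphi}$ descriptors $O\in\obs{\varphi}$, the truth of $m\models O$ depends on $m$ only through (i) its channel, data, and identity components --- of which only the polynomially many names occurring in $\varphi$, together with a generic one and $\toall$, are relevant, and all poly-size --- and (ii) for each of the polynomially many propositional $\scv$-subformulas $\psi$ appearing inside a quantifier of $\varphi$, the two bits ``$\exists c\in\pred.\,c\models\psi$'' and ``$\forall c\in\pred.\,c\models\psi$''. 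Concretely, I would first normalise observation descriptors as in the discussion following their semantics (an outer $\exis{}/\all{}$ subsumes inner ones, a bare $\cv$ reads as $\all{\cv}$ and a bare $\neg\cv$ as $\exis{\neg\cv}$), so that each $O$ becomes a positive Boolean combination of channel/data/identity literals and of quantified atoms $\exis{\psi},\all{\psi}$. The \pspace procedure then guesses the channel/data/identity components and the $2\cdot\size{\obs{\varphi}}$ bits of (ii), checks that this guess is \emph{realizable} by some (possibly empty) predicate $\pred$ --- which reduces to propositional satisfiability of $\bigo{\size{\varphi}}$ formulas over $\scv$, each of size $\bigo{\size{\varphi}^2}$, plus the trivial empty-predicate case --- and finally evaluates each $O\in\obs{\varphi}$ under the guess and accepts iff $O$ is true exactly when $O\in\smallo$. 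The delicate point is the realizability check: the universal atoms confine $\pred$ to the models of a single conjunction $\Phi$, while each violated universal atom and each satisfied existential atom forces a further formula ($\Phi\wedge\neg\psi$, resp.\ $\Phi\wedge\psi$) to be satisfiable, and one must check that a single $\pred$ can meet all these demands at once.
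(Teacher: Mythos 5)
Your overall architecture is the same as the paper's: hardness by reduction from \ltl satisfiability (your tautological descriptor $\cv\vee\neg\cv$ is a correct concrete way to recover the next operator, and is fine), and membership by taking the NBW of Corollary~\ref{cor:1}, searching for an accepting lasso on-the-fly, and restricting the search to non-empty observation letters. So, as in the paper, everything hinges on deciding emptiness of a letter $\smallo\in\mcal{O}$ in small space. The paper does this (Proposition~\ref{prop:observation emptiness}) by conjoining the observations in $\smallo$ with the \emph{duals} of those outside $\smallo$, obtaining a single positive descriptor $\smallo^\Uparrow_\wedge$, and then guessing which quantified atoms are true, the channel/data/identity, and one $\scv$-witness per true existential atom; monotonicity of positive normal form makes it sound without ever constraining the atoms guessed false. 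You instead guess an exact truth-bit vector for all quantified atoms and claim a realizability check.

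That realizability check, as you describe it, has a genuine soundness gap: existential atoms guessed \emph{false} are never enforced. If $\exis{\psi}$ is to be false, $\pred$ must contain no model of $\psi$, so $\neg\psi$ must be added to the confining conjunction $\Phi$; you only put the (true) universal atoms into $\Phi$ and generate side satisfiability checks for violated universals and satisfied existentials. Concretely, with atoms $\psi_1=\cv_1$ and $\psi_2=\cv_1\wedge\cv_2$, the guess ``$\exis{\psi_2}$ true, $\exis{\psi_1}$ false'' passes your check (with $\Phi=\true$, the formulas $\Phi\wedge\psi_2$ and the violated-universal checks are all satisfiable), yet it is unrealizable, since any witness of $\psi_2$ satisfies $\psi_1$. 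This would make the letter $\set{\exis{\cv_1\wedge\cv_2}}$ be declared non-empty although it is empty, and hence the overall algorithm would wrongly declare, e.g., $\nxt{\exis{\cv_1\wedge\cv_2}}\true\wedge\alws{\exis{\cv_1}}\false$ satisfiable. Relatedly, the step you flag as delicate --- that a \emph{single} $\pred$ meets all demands --- is stated but not discharged; once $\Phi$ also contains $\neg\psi$ for every false existential bit, it is discharged simply by taking $\pred$ to be the union of the guessed witnesses (each being a model of $\Phi$), with the empty predicate covering the case of all-universals-true/all-existentials-false. With these two repairs your subroutine is correct and yields the same \pspace bound; the paper's dualization route avoids the issue altogether because it never has to realize ``false'' atoms exactly.
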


\begin{proof}
By Corollary~\ref{cor:1}, given a formula $\varphi$, we can construct an
NBW $N_\varphi$ of size $\size{Q_n}.\size{\delta_n}$ that
accepts precisely the computations that satisfy $\varphi$. Thus, $\varphi$
is satisfiable iff $N_\varphi$ is nonempty. In order to prove that the
formula is satisfiable we have to show that $N_\varphi$ accepts a word $w$
such that some computation $\rho$ satisfies $w$.
However, a word $w$ is non empty iff every letter
$\smallo\in\mcal{O}$ appearing in $w$ is non empty.
It follows that while testing the non emptiness of $N_\varphi$ we have to
follow only transitions using non empty letters in $\mcal{O}$.
The nonemptiness of an NBW is tested in nondeterministic logarithmic space.
However, as $N_\varphi$ is exponential in $\size{\varphi}$ we get an algorithm
working in space polynomial in $\size{\varphi}$.
The algorithm constructs
$N_\varphi$ on-the-fly. 
We have to show that the emptiness of letters in $\mcal{O}$ can be
tested in space polynomial in $\size{\varphi}$.
This follows from Proposition~\ref{prop:observation emptiness} below. 

The hardness argument can be proved
by a reduction from \ltl satisfiability~\cite{sistlac85}.
\end{proof}

\begin{theorem}\label{thm:modcheck}
The model-checking problem of \ltal is \pspace-\rulename{complete} with respect
to $\size{Sys}$ and $\size{\varphi}$.
\end{theorem}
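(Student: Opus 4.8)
The plan is to settle this via the automata-theoretic method of Corollary~\ref{cor:1}, handling with care the two alphabets $\Sigma=2^{\sysvar}$ and $\mcal{O}=2^{\obs{\overline\varphi}}$ read by the automaton. By the semantics of \ltal, $Sys\models\varphi$ fails iff some computation of $Sys$ satisfies the dual $\overline\varphi$, and by Corollary~\ref{cor:1} (applied to $\overline\varphi$) a computation $\rho$ satisfies $\overline\varphi$ iff $\rho\in\lang{w}$ for some word $w$ accepted by the NBW $N_{\overline\varphi}$. For an observation $m=\tuple{\chan,\datfun,k,\pred}$ write $\msf{abs}(m)=\set{O\in\obs{\overline\varphi}~|~m\models O}$ for its abstract letter in $\mcal{O}$; unfolding the definition of ``$\rho$ satisfies $w$'' then shows the above to be equivalent to: $Sys$ has a computation $\rho=\sstate{0}m_0,\sstate{1}m_1,\dots$ in the sense of Def.~\ref{def:systrace} such that the word $(L(\sstate{0}),\msf{abs}(m_0)),(L(\sstate{1}),\msf{abs}(m_1)),\dots$ over $\Sigma\times\mcal{O}$ is accepted by $N_{\overline\varphi}$. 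Establishing this correspondence precisely is the first step.

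Given it, I would decide model checking by searching on the fly for a reachable accepting cycle in the product of the transition structure induced by $Sys$ (whose state set $\sdom$ has size exponential in $\size{Sys}$, by Def.~\ref{def:sys}) with $N_{\overline\varphi}$ (whose size is exponential in $\size{\varphi}$, by Corollary~\ref{cor:1}). A configuration is a pair $(s,q)$ of a system state $s\in\sdom$ and a state $q$ of $N_{\overline\varphi}$, representable in space polynomial in $\size{Sys}+\size{\varphi}$, and within that space one can generate the needed states and transitions of $N_{\overline\varphi}$ on demand (Corollary~\ref{cor:1}). From $(s,q)$ the search guesses a sender $k$, a channel $\chan$, a data assignment $\datfun$ to $\sdat$, and a successor $s'$; checks, using the condition of Def.~\ref{def:systrace} for this $(k,\chan,\datfun,s')$, that $(s,s')$ is a legal step of $Sys$ realising the observation $m=\tuple{\chan,\datfun,k,\pred}$ with $\pred=\pfunc{g}{s}{k}(s_k,\datfun,\chan)$ kept in symbolic form; computes $\smallo=\msf{abs}(m)$; and advances $N_{\overline\varphi}$ on the letter $(L(s),\smallo)$. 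Since the resulting product has $\Exp{\bigo{\size{Sys}+\size{\varphi}}}$ states, each representable in polynomial space and with edges computable in polynomial space, the standard nonemptiness search for a reachable accepting cycle runs in space polynomial in $\size{Sys}+\size{\varphi}$.

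The step I expect to be the real obstacle is computing $\smallo=\msf{abs}(m)$, that is, deciding $m\models O$ for each $O\in\obs{\overline\varphi}$: the predicate component $\pred$ of $m$ is available only symbolically and may denote doubly-exponentially many assignments to $\scv$, so $\pred$ cannot be enumerated. The way out is the observation underlying Theorem~\ref{thm:satisfiability}: by the semantics of observation descriptors, and using the normal form in which an outer $\exis{\cdot}$ or $\all{\cdot}$ subsumes the inner ones while a bare $\cv$ (resp.\ $\neg\cv$) is read as $\all{\cv}$ (resp.\ $\exis{\neg\cv}$), each test $m\models O$ decomposes along the Boolean structure of $O$ into polynomially many queries, each of the form ``$\pred\wedge\widehat{O}$ is satisfiable'' or ``$\pred\rightarrow\widehat{O}$ is valid'' for a quantifier-free propositional formula $\widehat{O}$ over $\scv$; each such query is decidable in space polynomial in $\size{\varphi}$, which is the content of Proposition~\ref{prop:observation emptiness}. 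Collecting the bounds, the whole procedure is a nondeterministic algorithm using space polynomial in $\size{Sys}+\size{\varphi}$, so the problem is in \pspace, since nondeterministic polynomial space equals \pspace.

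For the matching lower bound I would argue as for Theorem~\ref{thm:satisfiability}. Model checking of \ltl is already \pspace-hard~\cite{sistlac85}, and \ltl embeds into \ltal by replacing each next $\X\psi$ with $\nxt{O_\top}\psi$ for the tautological descriptor $O_\top:=(\chan=\toall)\vee\neg(\chan=\toall)$, which every observation satisfies; moreover an arbitrary finite transition system over $\sysvar$ is realised by a single \rcp agent (trivially broadcast input-enabled) whose send transition relation mirrors the system's edges, so that the \ltal-computations of the resulting system, restricted to $\sysvar$, are exactly the runs of the given transition system. This gives \pspace-hardness in $\size{\varphi}$ already for a fixed $Sys$, and, since reachability over a symbolically presented transition system is itself \pspace-hard, also in $\size{Sys}$ for a fixed $\varphi$. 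Together with the upper bound, model checking of \ltal is \pspace-complete with respect to $\size{Sys}$ and $\size{\varphi}$.
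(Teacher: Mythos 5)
Your proposal is correct and follows essentially the same route as the paper: reduce $Sys\not\models\varphi$ to nonemptiness of an on-the-fly product of the (exponentially larger) transition structure of $Sys$ with the exponential NBW $N_{\overline\varphi}$ over $\Sigma\times\mcal{O}$, matching a concrete transition label $m$ with the unique abstract letter $\smallo$ it satisfies, and obtain hardness from \ltl model checking over symbolically presented systems~\cite{sistlac85}. The only slip is bibliographic rather than mathematical: the per-step test $m\models\smallo$, which you correctly decompose into satisfiability queries $\pred\wedge\widehat{O}$ and validity queries $\pred\rightarrow\widehat{O}$ over $\scv$, is the content of Proposition~\ref{prop:observation checking} (observation model-checking), not Proposition~\ref{prop:observation emptiness}; either bound is well within \pspace, so the argument stands.
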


Note that the stated bounds in terms of $\size{Sys}$ refer to the
symbolic representation of the system. 
The complexity is \rulename{logspace}
in the size of the corresponding CTS $\mathcal{T}(Sys)$, which is anyway exponentially larger.

\begin{proof}
Given a finite state system $Sys=\conf{\sysvar\coma\rho\coma\theta}$
and a set of assertions on state variables $\sysvar$, on
$\schan,\ {\sdat},\ K$, and on $cv_1,\dots cv_n$. We assume $\rho$ to
be total and then we can construct a CTS representation of $Sys$ as follows: $\mathcal{T}(Sys)=
\langle \schan,\Sigma,M,S,S_0,R,L,\rulename{ls}\rangle$, where the components
of $\mathcal{T}(Sys)$ are as follows.  
$S=\Sigma$ ($L$ is the identity function), and thus $S$ is the set of possible interpretations of the variables in
$\sysvar$, i.e.,
$S=\Exp{\sysvar}$. The set of initial states $S_0$ is the set of states $s$
such that $s\models \theta$, i.e., $S_0=\set{s\models {\theta}}$, and
$M=\schan\times\Exp{\sdat}\times K\times \dexp{\scv}$.
We have that
$R(s,m)=\set{s': \tuple{s,m,s'}\models \rho}$ and $\emptyset$
otherwise.
Furthermore, we consider all states in $\mathcal{T}(Sys)$ to be
accepting.
The number of states in the transition system $\mathcal{T}(Sys)$ may
be exponentially larger than the description of $Sys$.
Notice that although $M$ is doubly exponential in $\scv$ the labels of
transitions of $\mathcal{T}(Sys)$ are those obtained from
$\pfunc{\trans}{s}{k}$ for some $k$.
Thus, the number of distinct labels appearing on transitions of
$\mathcal{T}(Sys)$ is bounded by $|S|\cdot |\schan|\cdot
\Exp{|\sdat|}\cdot |K|$.

The system $Sys$ satisfies $\varphi$ iff all the computations of $Sys$ satisfy
$\varphi$, thus for every computation $\rho \in \lang{\mathcal{T}(Sys)}$
there exists a word $w\in \lang{N_{\varphi}}$ such that $\rho\models w$.
Dually, $Sys$ does not satisfy $\varphi$ iff for some computation $\rho
\in \lang{\mathcal{T}(Sys)}$ and for some word
$w\in \lang{N_{\neg \varphi}}$ we have $\rho\models w$.
%
%
This is equivalent to check
$\lang{\mathcal{T}(Sys)}\cap\bigcup_{w\in \lang{N_{\neg\varphi}}}\lang{w}=\emptyset$.
Since our formulas are in positive normal form, $\neg\varphi$ can be
obtained from $\varphi$ by $\overline{\varphi}$.
By Corollary~\ref{cor:1}, we have that
$N_{\neg\varphi}$ has $\Exp{\bigo{\size{\varphi}}}$ states and
$\size{N_{\neg\varphi}}$ is in
$\Exp{\bigo{\size{\varphi}}}$.
Note, however, that the words of $N_{\neg\varphi}$ are in
$(\Sigma\times\mcal{O})^{\omega}$ while the computations of
$\mathcal{T}(Sys)$ are in $(\Sigma\times M)^{\omega}$. 
The model-checking problem can be reduced to finding a word $w$
accepted by $A_{\neg\varphi}$ and a computation $\rho$ of
$\mathcal{T}(Sys)$ such that $\rho\models w$.
Recall that $\rho\models w$ if for every $i\geq 0$ we have that 
$\sigma^{\rho}_i=\sigma^{w}_i$ and for every $O\in \obs{\varphi}$ we
have $m^{\rho}_i\models O$ iff $O\in \smallo^{w}_i$.
This amounts to check the nonemptiness problem of
a (modified) intersection of $\mathcal{T}(Sys)$ and $N_{\neg\varphi}$,
where the transition $(s,m,s')$ of $\mathcal{T}(Sys)$ can match
transitions of $N_{\neg\varphi}$ that read letters $(s,\smallo)$ for
$m\models\smallo$.
Note that for every $m\in M$ there is a unique $\smallo\in\mcal{O}$
such that $m\models \smallo$.
Thus, we check letter by letter that the word $w$ accepted by
$N_{\neg\varphi}$ and the computation $\rho$ produced by
$\mathcal{T}(Sys)$ are such that $\rho\models w$.
Thus, we only need
to show that checking $m\models\smallo$ can be tested in space
polynomial in $\size{\varphi}$.
Indeed, 
This follows from Proposition~\ref{prop:observation checking}.
Since all states in $\mathcal{T}(Sys)$ are accepting, the construction
of $N_{\mathcal{T}(Sys),\neg\varphi}$ is the 
product of $\mathcal{T}(Sys)$ with $N_{\neg\varphi}$ with transitions
composed as explained.
We have that $N_{\mathcal{T}(Sys),\neg\varphi}$
has $\Exp{\bigo{\size{Sys}+\size{\varphi}}}$ states.
Hence, $\size{N_{\mathcal{T}(Sys), \neg\varphi}}$ is in
$\Exp{\bigo{\size{Sys}+\size{\varphi}}}$ .
We have that $N_{\mathcal{T}(Sys),\neg\varphi}$ can be constructed on-the-fly and a
membership in \pspace with respect to $\size{Sys}$ and $\size{\varphi}$,
follows from the membership in \nlogspace of the nonemptiness problem for
NBW.
Checking that $Sys\models\varphi$ is in $\bigo{{\size{\varphi}+\size{Sys}}}$. 

The hardness follows from the same hardness results for discrete systems
and \ltl~\cite{sistlac85}.
\end{proof}

The following proposition states that given a letter
$\smallo\in\mcal{O}$ we can check whether there exists a message $m$
that satisfies $\smallo$ in \rulename{np} with respect to
$\length{\obs{\varphi}}$.
Notice that, in particular, $\length{\obs{\varphi}}$ should be larger
than the number of variables in $\scv$ and $\sdat$ that
appear in $\varphi$, the number of agents in $K$ that are
mentioned in $\varphi$ and the number of channels in $\schan$
appearing in $\varphi$.
Those that do not appear in $\varphi$ can be removed from the message
alphabet $M$. 

\begin{proposition}[Observation satisfiability]
  Consider a letter $\smallo\in\mcal{O}$.
  Emptiness of $\smallo$ is \rulename{np-complete} in
  $\length{\obs{\varphi}}$.
  \label{prop:observation emptiness}
\end{proposition}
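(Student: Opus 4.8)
The plan is to prove the non‑emptiness problem for letters $\smallo\in\mcal{O}$ --- deciding, as in the discussion preceding the statement, whether there is a message $m$ satisfying $\smallo$ (equivalently, deciding $\bcal{M}(\smallo)\neq\emptyset$) --- is in \rulename{np} and is \rulename{np}-hard, both measured in $\length{\obs{\varphi}}$. A candidate witness is a message $m=\tuple{\chan,\datfun,k,\pred}$ with $m\models O$ exactly for the $O\in\obs{\varphi}$ that lie in $\smallo$, and the first task is to argue that, even though $M$ is doubly exponential in $\scv$, a witness of polynomial size always suffices. The channel, the data assignment and the sender are harmless: only the channels, data variables and identities literally occurring in $\obs{\varphi}$ can affect whether $m\models O$, so it is enough to let $\chan$ range over the channels occurring in $\varphi$ (plus one representative of ``a channel absent from $\varphi$'', if any), to let $\datfun$ range over assignments to the data variables occurring in $\varphi$, and to let $k$ range over the identities occurring in $\varphi$ (plus one fresh representative). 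The only delicate component is the predicate $\pred\in\dexp{\scv}$, and the heart of the proof is a small‑model argument bounding its size.

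For that argument I would fix $\chan,\datfun,k$ and, for a propositional formula $\psi$ over the common variables of $\varphi$, abbreviate by $\pred\models_{\exists}\psi$ and $\pred\models_{\forall}\psi$ the statements ``some, resp.\ every, assignment in $\pred$ satisfies $\psi$''. After normalizing each top‑level $O\in\obs{\varphi}$ so that nested quantifiers are absorbed by the outermost one (the remark following the semantics of observation descriptors) and $\cv,\neg\cv$ outside quantifiers are replaced by $\all{\cv},\exis{\neg\cv}$, a routine induction on $O$ --- using that on a singleton predicate $\{c\}$ the descriptors $\exis{\cdot},\all{\cdot}$ collapse to ordinary propositional evaluation with $\cv$ assigned the value $c(\cv)$ --- shows that, with $\chan,\datfun,k$ fixed, each condition ``$m\models O$'' is equivalent to a \emph{positive} Boolean combination (built with $\wedge,\vee$ and truth constants only) of atoms $\pred\models_{\exists}\psi$ and $\pred\models_{\forall}\psi$; keeping the combination negation‑free uses $\neg(\pred\models_{\exists}\psi)\equiv\pred\models_{\forall}\overline\psi$ and dually. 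Hence the whole requirement $\bigwedge_{O\in\obs{\varphi}}(m\models O\Leftrightarrow O\in\smallo)$ becomes such a positive Boolean combination $\Phi$, of size polynomial in $\length{\obs{\varphi}}$. Given any witnessing $\pred^{\ast}$, I would select a prime implicant of $\Phi$ (a conjunction of atoms that implies $\Phi$) satisfied by $\pred^{\ast}$, reducing $\Phi$ to $\bigwedge_i(\pred\models_{\exists}\psi_i)\wedge\bigwedge_j(\pred\models_{\forall}\chi_j)$; writing $\chi=\bigwedge_j\chi_j$, pick for each $i$ some $c_i\in\pred^{\ast}$ with $c_i\models\psi_i$ (hence also $c_i\models\chi$, since $\pred^{\ast}\models_{\forall}\chi$), and set $\pred=\{c_i\}_i$. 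Then $\pred\subseteq\pred^{\ast}$ preserves every $\forall$‑atom and, by construction, every $\exists$‑atom, so $\tuple{\chan,\datfun,k,\pred}$ is still a witness; moreover $|\pred|$ is bounded by the number of $\exists$‑atoms of $\Phi$, hence by $O(\length{\obs{\varphi}})$, and restricting each $c_i$ to the common variables occurring in $\varphi$ changes nothing since no other variable appears in $\Phi$.

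With the small‑model fact in hand the \rulename{np} upper bound is immediate: guess a message $m$ of the polynomial size just described, and verify for each $O\in\obs{\varphi}$ that $m\models O$ holds exactly when $O\in\smallo$; evaluating an observation descriptor against a concrete message with a concrete finite $\pred$ is a straightforward recursion on $O$ running in polynomial time, so verification is polynomial overall. For \rulename{np}-hardness I would reduce propositional satisfiability: given $\psi$ over $x_1,\dots,x_n$ in positive normal form, let $cv_1,\dots,cv_n\in\scv$, let $\widehat\psi$ be $\psi$ with each $x_i$ replaced by $cv_i$ (a legal observation descriptor), take $\varphi:=\nxt{\exis{\widehat\psi}}\true$ so that $\obs{\varphi}=\{\exis{\widehat\psi}\}$ and $\length{\obs{\varphi}}$ is linear in $|\psi|$, and consider the letter $\smallo=\{\exis{\widehat\psi}\}$; then $\bcal{M}(\smallo)\neq\emptyset$ exactly when some message has a predicate containing an assignment satisfying $\widehat\psi$, i.e.\ exactly when $\psi$ is satisfiable. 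I expect the second paragraph to be the main obstacle: establishing the equivalence of ``$m\models O$'' with a polynomially sized positive Boolean combination of $\pred\models_{\exists}/\pred\models_{\forall}$ atoms, while correctly handling the (absorbable) nested quantifiers and the quantifier‑free $\cv/\neg\cv$ atoms, is what the bound on $|\pred|$ --- and hence the whole argument --- rests on.
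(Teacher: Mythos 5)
Your proposal is correct and takes essentially the same route as the paper: both handle the observations outside $\smallo$ by dualization (the paper's conjunction $\smallo^\Uparrow_\wedge$, your positive formula $\Phi$), and your small witness---a channel, data assignment and sender drawn from those mentioned in $\varphi$, plus one $\scv$-assignment per existential atom, each of which must also satisfy the asserted universal atoms---is exactly the certificate guessed by the paper's NP procedure (the subset $S$ of $atom(O)$ together with one $\scv$-assignment per $\exis{O'}\in S$, checked against the $\all{O''}\in S$). The only difference is presentational: you isolate the polynomial bound on $\pred$ as an explicit small-model lemma proved via monotonicity/implicants and then guess the whole message, whereas the paper guesses the Boolean valuation of the quantified atoms directly and uses the same monotonicity implicitly, and your SAT reduction for hardness matches the paper's appeal to Boolean satisfiability.
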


\begin{proof}
  Given a letter $\smallo\in\mcal{O}$ let $\smallo^\Uparrow$ be the
  set of observations in $\smallo$ and their negations, i.e.,
  observations \emph{not} appearing in $\smallo$.
  That is, $\smallo^\Uparrow = \smallo \cup \{\overline{O} ~|~ O\in
  \obs{\varphi} \setminus \smallo\}$.
  Let $\smallo^\Uparrow_\wedge = \bigwedge_{O\in \smallo^\Uparrow} O$
  be the conjunction of all observations in $\smallo^\Uparrow$. 
  Clearly, the Emptiness of a letter $\smallo\in\mcal{O}$ is
  equivalent to the satisfiability of $\smallo^\Uparrow_\wedge$.
  Thus, we can restrict our attention to the satisfaction of an
  observation.
  Given an observation $O$ let $atom(O)$ denote the set of subformulas
  of $O$ of the form $\exis{O'}$ and $\forall{O'}$.
  
  We show that satisfaction of $O$ can be solved in NP as follows:
  \begin{itemize}
  \item
    select a subset $S$ of $atom(O)$;
  \item
    select an assignment to $\sdat$, a channel $\chan$ and an agent
    $k$;
  \item
    for each $\exis{O'}\in S$ guess one assignment to $\scv$.
  \end{itemize}
  Verify that the choice of $S$, the assignment to $\sdat$, the
  channel $\chan$ and the agent $k$ satisfy $O$.
  Notice, that the elements of $atom(O)$ are treated as Boolean values
  in this check: $O''\in S$ is evaluted as true and
  $O''\notin S$ is evaluated as false.
  For each $\exis{O'}\in S$ check that the assignment to $\scv$
  guessed for $\exis{O'}$ fulfills two conditions:
  \begin{itemize}
  \item
    the assignment to $\scv$ together with the assignment to $\sdat$,
    the channel $\chan$ and the agent $k$ satisfy $O'$.
  \item
    For every $\all{O''}\in S$ check that the assignment to $\scv$
    together with the assignment to $\sdat$, the channel $\chan$ and
    the agent $k$ satisfy $O''$. 
  \end{itemize}

  Notice that the sum of sizes of the guessed elements is polynomal in the
  size of $O$ and the verification can be completed in polynomial time.

  Hardness in \rulename{np} follows from the hardness of Boolean
  satisfiability.
\end{proof}

The following proposition states that checking if a message $m\in M$
satisfies an observation letter $\smallo\in\mcal{O}$ can be tested in
${\rulename{p}}^{\rulename{np}}$ in $\length{\obs{\varphi}}$.
We consider the case that $\pred$ is represented as a  
Boolean formula over $\scv$.
This is reasonable as when considering a transition $(s,m,s')$, where
$m=\tuple{\chan,\datfun,\id,\pred}$, then $\pred$ can be obtained as
such a formula from $\pfunc{g}{s}{\id}$ by using the values in $s$,
$\chan$, and $\datfun$.

\begin{proposition}[Observation model-checking]
  Consider a letter $\smallo\in\mcal{O}$ and an observation $m\in M$.
  Whether $m\models \smallo$ can be tested in
  ${\rulename{p}}^{\rulename{np}}$ in $\length{\obs{\varphi}}$.
  \label{prop:observation checking}
\end{proposition}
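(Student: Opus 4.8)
The plan is to reduce the test $m\models\smallo$ to a polynomially bounded collection of propositional (un)satisfiability queries. First I would recall, exactly as in the proof of Proposition~\ref{prop:observation emptiness}, that $m\models\smallo$ holds iff $m\models\smallo^\Uparrow_\wedge$, where $\smallo^\Uparrow_\wedge=\bigwedge_{O\in\smallo}O\ \wedge\ \bigwedge_{O\in\obs{\varphi}\setminus\smallo}\overline{O}$ is a conjunction of observation descriptors and their duals whose total size is bounded by $\length{\obs{\varphi}}$. So it suffices to show that, for a single observation descriptor $O$ of size $n$ and an observation $m=\tuple{\chan,\datfun,\id,\pred}$ with $\pred$ given as a Boolean formula over $\scv$ (which is the natural form it takes, being obtained from $\pfunc{g}{s}{\id}$ by plugging in the values from the source state, $\chan$, and $\datfun$), deciding $m\models O$ is in ${\rulename{p}}^{\rulename{np}}$ in $n$; repeating the procedure for each conjunct of $\smallo^\Uparrow_\wedge$ (equivalently, checking for each $O\in\obs{\varphi}$ whether $m\models O$ agrees with $O\in\smallo$) then preserves the bound.

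Next I would put $O$ into the normal form discussed right after the semantics of observation descriptors: nested quantifiers are absorbed by the outermost one, and any $\cv$ (resp.\ $\neg\cv$) occurring outside the scope of a quantifier is rewritten to $\all{\cv}$ (resp.\ $\exis{\neg\cv}$). After this, $O$ is a positive Boolean combination of \emph{propositional atoms} ($\chan,\neg\chan,\dat,\neg\dat,k,\neg k$), whose truth under $m$ depends only on $\chan$, $\datfun$, $\id$, together with \emph{quantified subformulas} $\exis{\psi}$ and $\all{\psi}$ in which $\psi$ is quantifier-free, i.e.\ a positive Boolean combination of the propositional atoms and of common-variable atoms $\cv,\neg\cv$. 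The algorithm would then: (a) evaluate every propositional atom appearing in $O$ directly from $m$, in particular substituting a truth constant for each such atom inside every $\psi$, leaving a Boolean formula $\psi'$ over $\scv$ alone; (b) for each quantified subformula, issue one oracle query — $\exis{\psi}$ is true iff $\pred\wedge\psi'$ is satisfiable (a single $\rulename{np}$ query on an instance of size $O(n)$), and $\all{\psi}$ is true iff $\pred\wedge\neg\psi'$ is unsatisfiable (the complement of an $\rulename{np}$ query); (c) plug the resulting Boolean values into the top-level Boolean structure of $O$ and evaluate. Steps (a) and (c) run in time polynomial in $n$, and (b) makes at most $n$ non-adaptive $\rulename{np}$ queries, so the whole computation is polynomial time with an $\rulename{np}$ oracle.

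The main obstacle — indeed the only reason an oracle is needed at all — is that $\pred$ is stored compactly as a Boolean formula, so the set of common-variable assignments it denotes can be exponential in $\size{\scv}$ and cannot be enumerated; the quantifiers $\exis{\cdot}$ and $\all{\cdot}$ therefore must be resolved as (co)satisfiability tests over $\pred$ rather than by iteration over $\pred$. The one point requiring care is to justify that, once nested quantifiers have been collapsed using the normal form, these tests never nest, so each $\exis{\cdot}$ (resp.\ $\all{\cdot}$) is a single $\rulename{np}$ (resp.\ complement-of-$\rulename{np}$) query and not something higher in the polynomial hierarchy. The rest is routine bookkeeping: evaluating the propositional skeleton of $O$ from $m$, and bounding the number and the size of the queries by $n=O(\length{\obs{\varphi}})$.
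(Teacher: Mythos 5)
Your proposal is correct and follows essentially the same route as the paper's proof: reduce to checking a single observation via $\smallo^\Uparrow_\wedge$, substitute the concrete channel, data, and sender values from $m$ to eliminate the propositional atoms, and then resolve each (non-nested, by the normal form) $\exis{\cdot}$ subformula as satisfiability of $\pred$ conjoined with the residual formula over $\scv$ and each $\all{\cdot}$ subformula as the dual validity check, each being a single $\rulename{np}$-oracle query, before evaluating the outer Boolean combination in polynomial time. The only difference is presentational: you spell out the normal-form collapse of nested quantifiers and the bound on the number of queries slightly more explicitly than the paper does.
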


\begin{proof}
  As in the case of propsition~\ref{prop:observation emptiness} give a
  letter $\smallo\in\mcal{O}$ we consider $\smallo_\wedge^\Uparrow$.
  Thus, we restrict our attention to the case of whether $m$ satisfies
  an observation $O$. 
  
  Let $m=\tuple{\chan,\datfun,\id,\pred}$.
  We simplify $O$ by converting every reference to $\chan$, $\sdat$ or
  $\id$ to the constants appearing in $m$.
  It follows that we are left with a Boolean combination of
  $\exis{\cdot}$ and $\all{\cdot}$ subformulas, where only
  variables from $\scv$ appear.

  For a subformula $\exis{O'}$ we can check whether $m\models
  \exis{O'}$ by checking whether $\pred\wedge O'$ is satisfiable.
  For a subformula $\all{O'}$ we can check whether $m\models \all{O'}$
  by checking whether $\pred\rightarrow O'$ is valid.
  Both checks can be accomplished by an \rulename{np} oracle.

  The problem is \rulename{np}-hard in $\size{\scv}$ as $m\models
  \exis{\true}$ holds iff $\pred$ is satisfiable.
  The problem is co-\rulename{np}-hard in $\size{\scv}$ as $m\models
  \all{\false}$ iff $\pred$ is unsatisfiable.
  We do not know whether the problem is
  ${\rulename{p}}^{\rulename{np}}$-complete. 
\end{proof}

We note that in the case that $m$ is represented as a set of
assignments to $\scv$, we can modify the Boolean value problem
\cite{lynch} to show that $m\models\smallo$ can be evaluated in
\rulename{logspace}. 

\section{Concluding Remarks}\label{sec:conc}
We introduced a formalism that combines message-passing and shared-memory to facilitate realistic modelling of distributed multi agent systems. 
A system is defined as a set of distributed agents  
 that execute concurrently and only interact by message-passing. Each agent controls its local behaviour as in Reactive Modules~\cite{AH99b,FHNPSV11} while 
interacting externally by message passing as in 
$\pi$-calculus-like formalisms~\cite{pi1,info19}. Thus, we decouple the individual behaviour of an agent from its external interactions to facilitate reasoning about either one separately. We also make it easy to model interaction features of MAS, that may only be tediously hard-coded in existing formalisms.  

We introduced an extension to \ltl, named \ltal, that characterises messages and their targets. This way we may not only be able to reason about the intentions of agents in communication, but also we may explicitly specify their interaction protocols. Finally, we provided a novel automata construction that permits satisfiability and model-checking in space polynomial with respect to the size of the formula and the size of the system. This is a major improvement on the early results in~\cite{rcp} that were in \expspace 
with respect to the number of common variables and \pspace-complete with respect 
to the rest of the input.\smallskip

\noindent
{\bf Related works.}
As mentioned before, formal modelling is highly influenced by traditional
formalisms used for verification, see~\cite{AH99b,FHMV95}.
These formalisms are, however, very abstract in that their models
representations are very close to their mathematical interpretations
(i.e., the underlying transition systems).
Although this  may make it easy to conduct some logical
analysis~\cite{AHK02,CHP10,MMPV14} on models, it does imply that most of the
high-level MAS features may only be hard-coded, and thus leading to
very detailed models that may not be tractable or efficiently
implementable.
This concern has been already recognised and thus more formalisms have
been proposed, e.g., \emph{Interpreted Systems Programming Language}
(ISPL)~\cite{LQR17} and MOCHA~\cite{AHMQRT98} are proposed as
implementation languages of \emph{Interpreted Systems}
(IS)~\cite{FHMV95} and \emph{Reactive Modules} (RM)~\cite{AH99b} 
respectively.
They are still either fully synchronous or shared-memory based and thus do not support
flexible coordination and/or interaction interfaces.
A recent attempt to add dynamicity in this sense has been adopted by
\emph{visibly CGS} (vCGS)~\cite{BBDM19}: an extension of \emph{Concurrent-Game 
Structures} (CGS)~\cite{AHK02} to enable agents to dynamically hide/reveal 
their internal states to selected agents.
However, vCGS relies on an assumption of \cite{stop} which requires that agents 
know the identities of each other.
This, however, only works for closed systems with a fixed number of agents. 

Other attempts to add dynamicity and reconfiguration 
include dynamic I/O automata \cite{DBLP:journals/iandc/AttieL16},
Dynamic reactive modules of Alur and Grosu \cite{AG04},
Dynamic reactive modules of Fisher et
al.~\cite{FHNPSV11}, and open MAS~\cite{KLPP19}.
However, their main interest was in supporting dynamic creation of agents.
Thus, the reconfiguration of communication was not their main
interest. 
While \rcp may be easily extended to support dynamic creation of
agents, none of these formalisms may easily be used
to control the targets of communication and dissemination of
information.

As for logics we differ from traditional languages like \ltl and \ctl 
in that our formula may refer to messages and their constraints.
This is, however, different from the atomic labels of {\pdl}~\cite{FL79} and 
{modal $\mu$-calculus}~\cite{Koz83} in that \ltal mounts complex and structured 
observations on which designers may 
predicate on.
Thus the interpretation of a formula includes information about the causes of 
variable assignments and the interaction protocols among agents.
Such extra information may prove useful in developing compositional 
verification techniques.

\smallskip

\noindent
{\bf Future works.}  
We plan to provide tool support for \rcp, but with a more 
user-friendly syntax.
We would like to provide a light-weight programming-language-like
syntax to further simplify modelling.

We want to exploit the interaction mechanisms in \rcp and the extra
information in \ltal formulas to conduct verification
compositionally.
As mentioned, we believe that relating to sender intentions will
facilitate that. 

We intend to study the relation with respect to temporal epistemic logic 
\cite{HV89}.
Although we do not provide explicit knowledge operators, the combination of 
data exchange, receivers selection, and enabling/disabling of synchronisation based on the evolving states of the different agents,
allow them to dynamically deduce information about each other.
Furthermore we want to study \rcp under dynamic creation of agents while 
reconfiguring communication. Thanks to the new compositional semantics in terms of CTS, the dynamic creation of agents can now be easily linked to the execution of some blocking transitions. To give an intuition about a (possible) extension consider the following semantic rule:

\begin{equation}\label{eq:rep}
{A_1 \rTo{(\upsilon,?,c)}A'_1\| A_1}{}
\end{equation}

Here, we use $A\rTo{m}A'$ to denote that in state $s$ agent $A$ may
receive a message $m$ and evolves to $A'$ with state $s'$, i.e.,
$(s,m,s')\in R_A$. Clearly, the semantic rule~\ref{eq:rep} indicates
that the agent replicates itself once a multicast message on $c$ is
received as a side-effect of interaction. Thus, if we compose $A_1$
with some other agent, say $A_2$, such that
$A_2\rTo{(\upsilon,!,c)}A'_2$, the following transition is derivable
by the semantics of parallel composition in Def.~\ref{def:par}: 

\begin{equation}\label{eq:rep1}
\infer{A_1\| A_2 \rTo{(\upsilon,!,c)}A'_1\| A_1\| A_2}{\text{if}\quad A_1\rTo{(\upsilon,?,c)}A'_1\| A_1\qquad\text{and}\qquad A_2\rTo{(\upsilon,!,c)}A'_2}
\end{equation}

Namely, a new replica of $A_1$ is dynamically created when agents exchanged a specific message. Agent $A_1$ can be thought of as a server that spawn a new thread to handle concurrent requests from clients. 
 
Finally, we want to target
the distributed synthesis problem~\cite{FS05}.
Several fragments of the problem have been proven to be decidable, e.g.,
when the information of agents is arranged 
hierarchically~\cite{BMMRV17}, the number of agents is limited~\cite{GPW18}, or 
the actions are made public~\cite{BLMR17a}.
We conjecture that the ability to disseminate information and reason about it might prove 
useful in this setting.

%
%



\end{document}